\renewcommand{\P}{\mathcal{P}}
\newcommand{\PQ}{\mathcal{P}_Q}
\newcommand{\PQp}{\mathcal{P}_{Q^\perp}}
\newcommand{\PT}{\mathcal{P}_T}
\newcommand{\PTp}{\mathcal{P}_{T^\perp}}
\newcommand{\PO}{\mathcal{P}_\Omega}
\newcommand{\POij}{\mathcal{P}_{\Omega_{ij}}}
\newcommand{\POp}{\mathcal{P}_{\Omega^\perp}}
\newcommand{\POj}{\mathcal{P}_{\Omega_j}}
\newcommand{\PG}{\mathcal{P}_\Gamma}
\newcommand{\PGp}{\mathcal{P}_{\Gamma^\perp}}
\newcommand{\PP}{\mathcal{P}_\Pi}
\newcommand{\PPp}{\mathcal{P}_{\Pi^\perp}}
\newcommand{\PR}{\mathcal{P}_R}
\newcommand{\R}{\mathbb{R}}
\newcommand{\sgn}{\mathrm{sgn}}
\newcommand{\ber}{\mathrm{Ber}}
\newcommand{\rank}{\mathrm{rank}}
\newcommand{\trace}{\mathrm{trace}}
\newcommand{\supp}{\mathrm{supp}}
\renewcommand{\Pr}{\mathbb{P}}
\newcommand{\E}{\mathbb{E}}
\newcommand{\Qp}{{Q^\perp}}
\renewcommand{\vec}{\mathrm{vec}}
\renewcommand{\H}{\mathcal{H}}
\newcommand{\N}{\mathcal{N}}
\newcommand{\bz}{\mathbf{0}}
\newcommand{\st}{\mathrm{subj. \: to}}
\newcommand{\be}{\bar{e}}
\newcommand{\x}{\mathbf{x}}
\newcommand{\y}{\mathbf{y}}
\newcommand{\id}{\mathcal{I}}
\theoremstyle{definition}
\newtheorem{thm}{Theorem}
\newtheorem{remark}{Remark}
\theoremstyle{definition}
\newtheorem{lemma}{Lemma}
\theoremstyle{definition}
\theoremstyle{definition}
\newtheorem{fact}{Fact}
\theoremstyle{definition}
\newtheorem{cor}{Corollary}
\title{Principal Component Pursuit with Reduced Linear Measurements}
\author{Arvind Ganesh$^*$, Kerui Min$^*$, John Wright$^\dagger$, and Yi Ma$^{*,\ddagger}$\\$^*$ {\small Dept. of Electrical and Computer Engineering, University of Illinois at Urbana-Champaign}\\
$^\dagger$ {\small Dept. of Electrical Engineering, Columbia University} \\
$^\ddagger$ {\small Visual Computing Group, Microsoft Research Asia}
}
\date{}
\begin{document}
\maketitle

\begin{abstract}
In this paper, we study the problem of decomposing a superposition of a low-rank matrix and a sparse matrix when a relatively few linear measurements are available. This problem arises in many data processing tasks such as aligning multiple images or rectifying regular texture, where the goal is to recover a low-rank matrix with a large fraction of corrupted entries in the presence of nonlinear domain transformation. We consider a natural convex heuristic to this problem which is a variant to the recently proposed Principal Component Pursuit. We prove that under suitable conditions, this convex program guarantees to recover the correct low-rank and sparse components despite reduced measurements. Our analysis covers both random and deterministic measurement models.

\end{abstract}

\section{Introduction}
\label{sec:intro}

Low-rank matrix recovery and approximation has been a popular area of research in many different fields. The popularity of low-rank matrices can be attributed to the fact that they arise in one of the most commonly used data models in real applications, namely when very high-dimensional data samples are assumed to lie approximately on a low-dimensional linear subspace. This model has been successfully employed in various problems such as face recognition \cite{Wright2009-PAMI}, system identification \cite{Fazel2004-ACC}, and information retrieval \cite{Papadimitriou2000-JCSS}, for instance.

The most popular tool for low-rank matrix approximation is the Principal Component Analysis (PCA) \cite{Eckart1936-Psychometrika,Jolliffe1986}. The basic idea of PCA is to find the ``best low-rank approximation'' (in an $\ell_2$-sense) to a given input matrix. Essentially, PCA finds a rank-$r$ approximation to a given data matrix $D \in \R^{m \times n}$ by solving the following problem:
$$
\min_L \, \|D-L\| \quad \mathrm{s.t.} \quad \rank(L) \leq r,
$$
where $\|\cdot\|$ denotes the matrix spectral norm. It is well-known that the solution to this problem can be easily obtained by computing the Singular Value Decomposition (SVD) of $D$ and retaining only the $r$ largest singular values and the corresponding singular vectors. Besides the ease of computation, the PCA estimate has been shown to be optimal in the presence of isotropic Gaussian noise. However, the biggest drawback of PCA is that it breaks down even when one entry of the matrix is corrupted by an error of very large magnitude. Unfortunately, such large-magnitude, {\it non-Gaussian} errors often exist in real data. For instance, occlusions in images corrupt only a fraction of the pixels in an image, but the magnitude of corruption can be quite large.

There have been many works in the literature that try to make PCA robust to such gross, non-Gaussian errors and many models and solutions have been proposed. We here consider the specific problem of recovering a low-rank matrix $L_0 \in \R^{m\times n}$ from corrupted observations $D = L_0 + S_0$, where $S_0 \in \R^{m \times n}$ is a sparse matrix whose non-zero entries may have arbitrary magnitude. This problem has been studied in detail recently by various works in the literature \cite{Candes2011-JACM,Chandrasekaran2011-SJO,Hsu2011-IT}. It has been shown that under rather broad conditions, the following convex program succeeds in recovering $L_0$ from $D$:
\begin{equation}
\min_{L,S} \, \|L\|_* + \lambda \|S\|_1 \quad \mathrm{s.t.} \quad D = L + S,
\label{eqn:pcp}
\end{equation}
where $\|\cdot\|_*$ denotes the nuclear norm\footnote{The sum of all singular values.}, $\|\cdot\|_1$ denotes the $\ell_1$-norm\footnote{The sum of absolute values of all matrix entries.}, and $\lambda > 0$ is a weighting factor. This method has been dubbed Principal Component Pursuit (PCP) in \cite{Candes2011-JACM}. In addition to being computationally tractable, it comes with very strong theoretical guarantees of recovery. Furthermore, follow-up works have shown that PCP is stable in the presence of additive Gaussian noise \cite{Zhou2010-ISIT} and can recover $L_0$ even when the corruption matrix $S_0$ is not so sparse \cite{Ganesh2010-ISIT}.

Besides being of theoretical interest, this convex optimization framework for low-rank matrix recovery has been employed very successfully to solve real problems in computer vision such as photometric stereo \cite{Wu2010-ACCV}. However, in practice, much more data, especially imagery data, can be viewed as low-rank only after some transformation is applied.  For instance, an image of a building facade will become a low-rank matrix after the perspective distortion is rectified \cite{Zhang2011-IJCV} or a set of face images of the same person will become linearly correlated only after they are proper aligned  \cite{Peng2011-PAMI}. With our terminology here, we can write as $D \circ \tau = L_0 + S_0$ where $\tau$ belongs to certain transformation group. As the transformation $\tau$ is also unknown, one natural way to recover $L_0, S_0$ and $\tau$ together is to approximate the nonlinear equation with its linearization at the current estimate of $\hat{\tau}$: $$D \circ \hat{\tau}  + \sum_{i=1}^p J_i d\tau_i = L + S,$$ where $\{ J_i \} $ is the Jacobian of $D \circ \tau$ with respect to the parameters $\{ \tau_i \}$ of $\tau$. Then one can incrementally update the estimate for $\tau$ with $\hat{\tau} + d\tau$ by solving the following convex program:
\begin{equation}
\min_{L,S, d\tau_i } \, \|L\|_* + \lambda \|S\|_1 \quad \mathrm{s.t.} \quad D + \sum_{i=1}^p J_i d\tau_i = L + S.
\label{eqn:rasl}
\end{equation}
Empirically this scheme has been shown to work rather effectively in practice in both the image rectification problem \cite{Zhang2011-IJCV} and the image alignment problem \cite{Peng2011-PAMI}.

Although the convex program was proposed in the same spirit as PCP, we note that the linear constraint is different, and hence, the theoretical guarantees for PCP shown in \cite{Candes2011-JACM,Chandrasekaran2011-SJO,Hsu2011-IT} do not directly apply to this case. In this work, we attempt to fill the gap between theory and practice and try to understand under what conditions, the above extended version of PCP is expected to work correctly.

Let $Q$ be the linear subspace in $\R^{m \times n}$ that is the orthogonal complement to the span of all the $J_i$'s, then its dimension is $q = mn - p$. Clearly, we can rewrite the above program in the following form:
\begin{equation}
\min_{L,S} \, \|L\|_* + \lambda \|S\|_1 \quad \mathrm{s.t.} \quad \PQ D = \PQ(L + S),
\label{eqn:cpcp}
\end{equation}
where $\PQ$ is the orthogonal projection onto the linear subspace $Q$. Clearly, this program is a variation to PCP \eqref{eqn:pcp} in which the number of linear constraints has been reduced from $mn$ to $q = mn - p$. Indeed, if $Q$ is the entire space, then it reduces to the PCP. If $Q$ is a linear subspace of matrices with support in $\Omega \subseteq [m] \times [n]$, then we have the special case of recovering $L_0$ from $D$, when only a subset of the entries in $D$ are available. This case is akin to the low-rank matrix completion problem \cite{Candes2008,Candes2009,Gross2009-pp}, and theoretical guarantees have been derived in \cite{Candes2011-JACM,Li2011-pp}. However, to the best of our knowledge, the case with a general subspace $Q$ has not yet been analyzed in detail in the literature.

Our motivation to study when the convex program \eqref{eqn:cpcp} succeeds with such reduced linear constraints is at least twofold. First, the relationships between $Q$ and $L_0$ and $S_0$ will provide us better understanding about what type of images and signals for which techniques such as those used in \cite{Zhang2011-IJCV, Peng2011-PAMI} are expected to work well.  Second, we want to know how many general linear measurements we could reduce without sacrificing the robustness of PCP for recovering the low-rank matrix $L_0$. In these applications, the number of constraints reduced corresponds to the dimension of the transformation group. In the image rectification problem, the dimension of the transformation group $p$ is typically fixed with respect to the size of the matrix; in the image alignment problem, however, the dimension typically grows linearly in $m$ (or $n$). In either case, we need to know if the program \eqref{eqn:cpcp} can tolerate up to a constant fraction of gross errors.


\subsection{Notation}

We first establish a set of notations that will be used throughout this work. We will assume that the matrices $L_0$, $S_0$ and $D$ in \eqref{eqn:cpcp} have size $m \times n$. Without any loss of generality, we assume that $n \leq m$. We denote the rank of $L_0$ by $r$. Let $L_0 = U\Sigma V^*$ be the reduced Singular Value Decomposition (SVD) of $L_0$. We define a linear subspace $T$ as follows:
\begin{equation}T \doteq \{UX^* + YV^* : X \in \R^{n \times r}, Y \in \R^{m \times r}\}.
\label{eqn:T}
\end{equation}
Basically, $T$ contains all matrices that share a common row space or column space with $L_0$. We denote by $\Omega$ the support of $S_0$. By a slight abuse of notation, we also represent by $\Omega$ the subspace of matrices whose support is contained in the support of $S_0$. For any subspace $S \subseteq \R^{m \times n}$, $\P_S : \R^{m \times n} \rightarrow \R^{m \times n}$ denotes the orthogonal projection operator onto $S$.

For any $X,Y \in \R^{m \times n}$, we define their inner product as $ \langle X, Y\rangle = \trace(X^*Y) = \sum_{ij} X_{ij} Y_{ij}. $ We let $\|\cdot \|_F$ and $\|\cdot\|$ denote the matrix Frobenius norm and spectral norm, respectively. We also denote the $\ell_\infty$-norm of a matrix $X$ as $\|X\|_\infty = \max_{ij} \, |X_{ij}|.$ We say that an event $E$ occurs with high probability if $\Pr[E^c] \leq C\, m^{-\alpha},$ for some positive numerical constants $C$ and $\alpha$. Here, $E^c$ denotes the event complement to $E$.
\medbreak

\subsection{Main Assumptions}
\label{sec:assumptions}

Obviously, successful recovery is not always guaranteed except under proper assumptions on the low-rank $L_0$, sparse $S_0$, and the subspace $Q$ involved. For instance, if the matrix $L_0$ is itself a sparse matrix, then there is a fundamental ambiguity in the solution to be recovered. Here, we outline some of our assumptions that we will use throughout this paper. The assumptions we make here on $L_0$ and $S_0$ are essentially the same as those for PCP \cite{Candes2011-JACM}. For completeness, we list them below.

We assume that each entry of the matrix belongs to the support of the sparse matrix $S_0$ independently with probability $\rho$. We denote this as $\supp(S_0) \sim \ber(\rho)$. For simplicity, we assume the signs of the nonzero entries are also random.\footnote{The random sign assumption is not entirely necessary for obtaining the same qualitative results. One can follow the derandomization process in \cite{Candes2011-JACM} to remove this assumption if needed.} For the low-rank matrix $L_0$, we assume the subspace $T$ defined in\eqref{eqn:T} is incoherent to the standard basis (and hence the sparse matrix $S_0$). To be precise, let us denote the standard basis in $\R^m$ and $\R^n$ by $\bar{e}_i$ and $e_j$, respectively, where $i \in [m]$ and $j \in [n]$. We assume (as in \cite{Candes2008}) that
\begin{eqnarray}
\max_{i \in [m]} \|U^*\bar{e}_i\|_2^2 \leq \frac{\mu r}{m}, \quad
 \max_{j \in [n]} \|V^*e_j\|_2^2 \leq \frac{\mu r}{n}, \quad
 \|UV^*\|_\infty \leq \sqrt{\frac{\mu r}{mn}},
\label{eq:assumptionUV}
 \label{eq:assumptionVe}
 \label{eq:assumptionUe}
 \label{eqn:incoh_subspace}
\end{eqnarray}
for some $\mu > 0$ and for all $(i,j) \in [m] \times [n]$. We recall that $r = \rank(L_0)$. It follows from the above assumptions that for any $(i,j) \in [m] \times [n]$
\begin{equation}
\|\PT \bar{e}_i e_j^*\|_F \leq \sqrt{\frac{2 \mu r}{n}}.
\label{eq:PTeiej}
\end{equation}
Furthermore, it can be shown that $\|\PTp X\| \leq \|X\|$ for any $X \in \R^{m \times n}$.

In addition to the above assumptions, we define the following two properties of linear subspaces. We say that a linear subspace $S \subseteq \R^{m \times n}$ is
\begin{itemize}
\item \emph{$\nu$-coherent} if there exists an orthonormal basis $\{G_i\}$ for $S$ satisfying
\begin{equation}
\max_i \|G_i\|^2 \le \frac{\nu}{n}.
\label{eqn:nu-coherence}
\end{equation}
\item \emph{$\gamma$-constrained} if
\begin{equation}
\max_{i,j} \|\P_S\bar{e}_i e_j^*\|_F^2 \le \gamma.
\label{eqn:gamma-constrained}
\end{equation}
\end{itemize}
If $S$ is a random subspace, we say it is $\nu$-coherent and $\gamma$-constrained if Eqns. \eqref{eqn:nu-coherence} and \eqref{eqn:gamma-constrained} hold with high probability, respectively.

In this paper, we will deal with two different assumptions on the subspace $Q$ as outlined below. We will see later that it is in fact convenient to make our assumptions on the subspace $Q^\perp$, rather than on $Q$ itself. This is partly motivated from the model in \eqref{eqn:rasl} that was used in \cite{Zhang2011-IJCV, Peng2011-PAMI}, where the $J_i$'s are essentially a basis for $Q^\perp$. So, any assumptions on $Q^\perp$ can be easily interpreted in terms of the $J_i$'s and this would help us make the connection to these applications more directly. We denote by $p$ the dimension of the subspace $Q^\perp$.
\begin{itemize}
\item {\bf Random subspace model.} Let $G_1, G_2, \ldots, G_p \in \R^{m \times n}$ be an orthonormal basis for $Q^\perp$. We assume that this basis set is chosen uniformly at random from all possible orthobasis sets of size $p$ in $\R^{m \times n}$. It can be shown that each of the $G_i$'s are identical in distribution to $H / \|H\|_F$, where the entries of $H\in \R^{m \times n}$ are i.i.d. according to a Gaussian distribution with mean 0 and variance $1/mn$.
\item {\bf Deterministic subspace model.} Under this model, we assume that $\Qp$ is a fixed subspace which is $\nu$-coherent, for some $\nu\geq 1$.
\end{itemize}

\subsection{Main Results}
With the above notation, we now briefly describe the main results we prove in this work. Although our results and proof methodology resemble those in \cite{Candes2011-JACM}, there are some important differences here. Particularly, we will see that the assumptions we make on the subspace $Q$ greatly influences the kind of guarantees for recovery that can be derived.

As mentioned earlier, we will consider two different assumptions on the subspace $Q$. In the first one, we assume a {\it random subspace model} for $Q^\perp$. The main result that we prove in this work under this random subspace model is summarized as the following theorem.




\begin{thm}[\bf Random Reduction] Fix any $C_p > 0$, and let $Q^\perp$ be a $p$-dimensional random subspace of $\R^{m \times n}$ ($n \leq m$), $L_0$ a rank-$r$, $\mu$-incoherent matrix, and $\supp(S_0) \sim \ber(\rho)$. Then, provided that
\begin{equation}
r < C_r \frac{n}{\mu \log^2 m}, \quad p < C_p n, \quad \rho < \rho_0,
\end{equation}
with high probability $(L_0, S_0)$ is the unique optimal solution to \eqref{eqn:cpcp} with $\lambda = m^{-1/2}$. Here, $C_r > 0$ and $\rho_0 \in (0,1)$ are numerical constants.
\label{thm:cpcp_random}
\end{thm}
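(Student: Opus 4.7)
The plan is to adapt the dual-certificate (golfing-scheme) approach of Candes-Li-Ma-Wright \cite{Candes2011-JACM} to the reduced-measurement setting, the key new ingredient being to exhibit a certificate $W$ that not only satisfies the standard PCP subgradient conditions but also lies in the subspace $Q$. Since any feasible perturbation $(H,F)$ around $(L_0, S_0)$ satisfies $H + F \in Q^\perp$, convex analysis gives that $(L_0, S_0)$ is the unique optimum provided: (i) an injectivity condition holds, essentially $\|\PT \PO\| < 1/2$ together with $\|\PT \PQp\|, \|\PO \PQp\| < 1/2$; and (ii) there exists $W$ with $\PQp W = 0$, $\PT W = UV^*$, $\|\PTp W\| < 1/2$, $\PO W = \lambda \sgn(S_0)$, and $\|\POp W\|_\infty < \lambda/2$.

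The first step is to verify the operator-norm inequalities using the random subspace model. For a uniformly random $p$-dimensional subspace $Q^\perp \subset \R^{m \times n}$ and any fixed subspace $S$ of dimension $d$, standard Gaussian / Johnson--Lindenstrauss concentration yields $\|\P_S \PQp \P_S\| \le C(d + p)/mn$ with high probability. Applied to $S = T$ (with $\dim T \le 2nr$) this gives $\|\PT \PQp \PT\| \le C(r/m + p/mn) = o(1)$ under the hypotheses $r \le C_r n/(\mu \log^2 m)$ and $p < C_p n$; the analogous bound conditional on $\Omega$ gives $\|\PO \PQp\|^2 \le C(\rho + p/mn) < 1/4$ for $\rho < \rho_0$ sufficiently small. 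The bound $\|\PT \PO\| < 1/2$ is inherited directly from the PCP analysis as it does not involve $Q$.

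Next I would construct $W$ via a $Q$-projected golfing scheme. Partition $\Omega^c$ into $j_0 = O(\log m)$ independent Bernoulli batches $\Omega_1, \ldots, \Omega_{j_0}$ and iterate $Y_0 = 0$, $Y_j = Y_{j-1} + q^{-1} \PQ \POj (UV^* - \PT Y_{j-1})$, taking $W^L = Y_{j_0}$; for the sparse part, adapt the analogous construction from \cite{Candes2011-JACM}, exploiting $\|\PO \PQp \PO\| < 1$ to define $W^S$ via a Neumann series so that $W^S \in Q$ and $\PO W^S = \lambda \sgn(S_0)$. Setting $W := W^L + W^S$, the golfing contraction reduces to controlling $\|\PT - q^{-1} \PT \PQ \POj \PT\|$, which differs from its classical $Q = \R^{m \times n}$ analog by at most $\|\PT \PQp \PT\| = o(1)$ and so still produces geometric decay after $O(\log m)$ rounds, making $\|UV^* - \PT W^L\|_F$ polynomially small. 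The dual-norm conditions $\|\PTp W\| < 1/2$ and $\|\POp W\|_\infty < \lambda/2$ then follow from the same matrix- and vector-Bernstein estimates as in \cite{Candes2011-JACM}, with the extra $\PQ$ factor contributing only a constant loss since $\|\PQ\| = 1$.

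The main obstacle I anticipate is the non-commutativity of $\PQ$ with $\PT$ and the random $\POj$: inserting $\PQ$ breaks the clean sum-of-independent-rank-one-projectors structure that underlies the classical contraction bound for the golfing iteration. The resolution is to condition on the realization of $Q^\perp$, treat $\PQ$ as a fixed operator, and rederive the matrix-Bernstein contraction; the operator-norm bound $\|\PT \PQp \PT\| = o(1)$ ensures that the expected iteration is a vanishing perturbation of the classical one, so the standard Bernstein tolerance suffices. Without exploiting the randomness of $Q^\perp$ in this way, there is no reason that the golfing certificate should simultaneously lie in $Q$ and satisfy all four subgradient conditions.
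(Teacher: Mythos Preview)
Your overall architecture---dual certificate via golfing plus a Neumann-series sparse correction---is right, but the way you force the certificate into $Q$ differs from the paper and creates gaps you have not closed.

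The paper does \emph{not} insert $\PQ$ into the golfing iteration. Instead it sets $\Gamma = Q \cap T^\perp$, so that $\Gamma^\perp = Q^\perp \oplus T$, and runs the \emph{classical} golfing scheme with $\PGp$ in place of $\PT$: $Y_j = Y_{j-1} + q^{-1}\POj\PGp(UV^*-Y_{j-1})$, then $W^L = \PG Y_{j_0}$. Because each increment is supported on $\Omega_j \subseteq \Omega^c$, one has $\PO Y_{j_0} = 0$ \emph{exactly}, and the contraction operator is $\PGp - q^{-1}\PGp\POj\PGp$, which has the same rank-one-projector structure as in \cite{Candes2011-JACM} once one checks that $\Gamma^\perp$ is $\gamma$-constrained with $\gamma \log m = o(1)$. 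Since $W^L, W^S \in \Gamma \subseteq Q$ but $UV^* \notin Q$, the paper introduces a \emph{third} component $W^Q$ (the least-norm solution of $\PQp X = -\PQp(UV^*)$, $\PP X = 0$ with $\Pi = \Omega \oplus T$) to enforce $UV^* + W \in Q$.

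Your two-part construction with $\PQ$ inside the iteration runs into three concrete problems. First, because $\PQ$ smears support, your increments $q^{-1}\PQ\POj Z_{j-1}$ are no longer supported on $\Omega^c$, so $\PO W^L = -q^{-1}\sum_j \PO\PQp\POj Z_{j-1}$ is not automatically small; you would need a separate bound on this sum, which you do not mention. Second, your claim that the contraction differs from the classical one by ``at most $\|\PT\PQp\PT\| = o(1)$'' is not what the algebra gives: the actual perturbation is $q^{-1}\PT\PQp\POj\PT$, and the $q^{-1} = O(\log m)$ factor cannot be dropped---at best one gets a concentration estimate around $\PT\PQp\PT$, which is another matrix-Bernstein step you have not budgeted for. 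Third, the assertion that ``the extra $\PQ$ factor contributes only a constant loss since $\|\PQ\|=1$'' is too coarse for the $\ell_\infty$ and spectral bounds: $\PQ$ is a dense operator, and controlling $\|\PTp W^L\|$ with your iteration requires bounding terms like $\PTp\PQp\POj Z_{j-1}$; the paper handles the analogous difficulty by isolating the subspace $R = \PTp Q^\perp$ and applying matrix Bernstein to $\|\PR(q^{-1}\POj - \id)Z_{j-1}\|$, a step that has no counterpart in your sketch.

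In short: your route may be salvageable, but the paper's device of golfing in $\Gamma^\perp = Q^\perp \oplus T$ and adding a separate $W^Q$ correction is what makes the analysis go through cleanly, and that idea is missing from your proposal.
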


\begin{remark}
In Theorem \ref{thm:cpcp_random}, ``with high probability'' means with probability at least $1- \beta(C_p) m^{-c}$, with $c > 0$ numerical.
\end{remark}

The scaling in this result covers several applications of interest: in \cite{Zhang2011-IJCV}, $p$ is a fixed constant, while in \cite{Peng2011-PAMI}, $p$ scales linearly with $n$. Therefore, the above result already covers both these applications in terms of the number of reduced constraints. It states that with such reduced constraints, the convex program \eqref{eqn:cpcp} can recover the low-rank matrix $L_0$ essentially under the same conditions as PCP. In particular, it can tolerate up to a constant fraction of errors.

In a work that is closely related to this one \cite{Wright}, we have shown that one can expect the convex program \eqref{eqn:cpcp} to work under much more highly compressive scenario. More precisely, the dimension of the subspace $Q$ only needs to be on the order of $(mr + k) \log^2m$ which is only a polylogarithmic factor more than the intrinsic degrees of freedom of the unknown $L_0$ and $S_0$. One nice feature about the work of \cite{Wright} is that the proof framework is very modular and the techniques are even applicable to more general structured signals beyond low-rank and sparse ones. Nevertheless, that result does not subsume the result here because in such highly compressive scenario, we cannot expect to tolerate error up to a constant fraction of the matrix entries. Obtaining the results in Theorem \ref{thm:cpcp_random} and Theorem \ref{thm:cpcp_deterministic} seems to require arguments that are specially tailored to the PCP problem. 

There is a common limitation for all results that are based on a random assumption for $Q$ or $Q^\perp$: the random assumption does not hold in many real applications. For instance, in \cite{Zhang2011-IJCV,Peng2011-PAMI}, the subspace $Q^\perp$ is typically spanned by a set of image Jacobians, which may not behave like random matrices.  Therefore, it is desirable to have deterministic conditions on $Q^\perp$ (or $Q$) that can be verified for the given data. We need theoretical guarantees for recovery when $Q^\perp$ is a deterministic subspace. This is the second scenario that we will consider in this work, for which we have the following result:


\begin{thm}[\bf Deterministic Reduction]
Fix any $p \in \mathbb{Z}_+$, $\alpha \ge 1$, and $\nu \ge 1$. Then there exists $C_r> 0$ such that if $Q^\perp$ is a $\nu$-coherent $p$-dimensional subspace of $\R^{m \times n}$ ($n\leq m \leq \alpha n$), $L_0$ is a rank-$r$, $\mu$-incoherent matrix, and $\mathrm{supp}(S_0) \sim \mathrm{Ber}(\rho)$, with high probability $(L_0, S_0)$ is the unique optimal solution to \eqref{eqn:cpcp} with $\lambda = m^{-1/2}$, provided that
\begin{equation}
r < C_r \min\left\{\left(\frac{n}{\nu^2 p^2\alpha}\right)^{1/2}, \left(\frac{n}{\alpha \nu \mu p}\right)^{1/3}, \frac{n}{\mu \log m}\right\}, \quad \rho < \rho_0,
\end{equation}
where $C_r, \rho_0 \in (0,1)$ are numerical constants.
\label{thm:cpcp_deterministic}
\end{thm}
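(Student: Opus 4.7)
My approach adapts the dual-certificate and golfing-scheme strategy of \cite{Candes2011-JACM}, modified so that the witness lies in $Q$ and the incoherence-type estimates are driven by the $\nu$-coherence of $Q^\perp$ rather than randomness. A feasible perturbation of $(L_0,S_0)$ is any pair $(H_L,H_S)$ with $H_L+H_S \in Q^\perp$, so both the certificate conditions and the accompanying injectivity lemma must be strengthened relative to standard PCP. I would first show that $(L_0,S_0)$ is the unique optimum provided there exists $Y \in Q$ with $\PT Y = UV^*$, $\|\PTp Y\| < 1/2$, $\PO Y = \lambda\sgn(S_0)$, and $\|\POp Y\|_\infty < \lambda/2$, together with a restricted injectivity condition of the form $(T+\Omega) \cap Q^\perp = \{0\}$. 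The proof of this reduction uses the orthogonality $\langle Y, H_L+H_S\rangle = 0$ inside the usual subgradient inequality, which yields strict improvement in the objective unless $\PTp H_L = 0$ and $\POp H_S = 0$, and injectivity then closes the argument.

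The next step is to establish the deterministic operator-norm estimates that are the workhorse of the analysis. Letting $\{G_1,\ldots,G_p\}$ be an orthonormal basis of $Q^\perp$ with $\|G_i\|^2 \leq \nu/n$, an argument analogous to \eqref{eq:PTeiej} gives $\|\PT G_i\|_F^2 \leq 2\mu r \|G_i\|^2 \leq 2\mu r\nu/n$. Combined with Cauchy-Schwarz on the expansion $\PQp = \sum_i \langle G_i,\cdot\rangle G_i$, this controls operator norms of the form $\|\PT\PQp\PT\|$, $\|\PTp\PQp\|$, and the $\ell_\infty\to\ell_\infty$ action of $\PQp$ on sparse matrices, each scaling polynomially in $r,\nu,\mu,p,n$. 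The three distinct constraints on $r$ in the theorem statement arise from requiring these three families of bounds to be small enough, with the $n/(\mu\log m)$ term inherited verbatim from the original PCP rank bound. Standard Bernoulli concentration for $\rho^{-1}\PT\PO\PT$ around $\PT$ gives both the restricted injectivity and the invertibility of $\PO\PQ\PO$ on $\Omega$.

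For the certificate I would take $Y = Y_L + Y_S$, each in $Q$. For $Y_L$, I would run a $Q$-adapted golfing scheme on $j_0 = O(\log m)$ independent Bernoulli blocks $\Omega_1,\ldots,\Omega_{j_0}$ of $\Omega^c$, with iteration $Y_j = Y_{j-1} + q^{-1}\PQ\POj(UV^* - \PT Y_{j-1})$. Because $q^{-1}\PT\PQ\POj\PT$ differs from $\PT$ by a Bernoulli sampling error together with the extra term $q^{-1}\PT\PQp\POj\PT$ controlled in the previous step, the residuals $\|UV^* - \PT Y_j\|_F$ and $\|UV^* - \PT Y_j\|_\infty$ both contract geometrically in $j$. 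For $Y_S$ I would take the least-squares form $Y_S = \PQ\PO(\PO\PQ\PO)^{-1}\lambda\sgn(S_0)$; invertibility comes from the bounds of the previous step via a Neumann series, and the random-sign hypothesis on $S_0$ gives sharp $\ell_\infty$ concentration for $\|\POp Y_S\|_\infty$.

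The main obstacle is the uniform control at each golfing iteration: projecting onto $Q$ introduces the extra operator $\PT\PQp\POj\PT$, and ensuring that it does not break the contraction across all $O(\log m)$ iterations is what forces the first of the three rank constraints. A closely related obstacle is bounding $\|\POp Y_S\|_\infty$ after the Neumann expansion, where the cross-terms couple the $\ell_\infty$ action of $\PQp$ on sparse matrices with the $T$-direction behaviour of $\PQ$; this coupling is what drives the cube-root constraint $(n/(\alpha\nu\mu p))^{1/3}$.
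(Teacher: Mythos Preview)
Your plan diverges from the paper's in a way that creates a real gap. The paper does \emph{not} project onto $Q$ inside the golfing iteration, nor does it use a two-piece certificate. Instead it enlarges the ``tangent space'' to $\Gamma^\perp \doteq Q^\perp \oplus T$, runs the standard golfing update $Y_j = Y_{j-1} + q^{-1}\POj\PGp(UV^*-Y_{j-1})$ so that every $Y_j$ stays supported on $\Omega^c$, and only afterwards sets $W^L = \PG Y_{j_0} \in \Gamma = Q\cap T^\perp$. The sparse piece $W^S$ is likewise constrained to $\Gamma$ (via $\PGp W^S = 0$), so $\PT W^S = 0$ exactly. A separate third piece $W^Q$, defined by a least-squares problem with $\PQp W^Q = -\PQp(UV^*)$ and $\P_{\Omega\oplus T} W^Q = 0$, is what places the total certificate in $Q$; it is the bound $\|\POp W^Q\|_\infty < \lambda/8$ --- driven by $\|\PQp(UV^*)\|_F \le \sqrt{2\nu p r^2/n}$ together with $\|\PQp\P_{(\Omega\oplus T)^\perp}\bar e_i e_j^*\|_F \lesssim \sqrt{\nu p/n}+\sqrt{\mu r/n}$ --- from which both of the nonstandard rank constraints $(n/(\nu^2 p^2\alpha))^{1/2}$ and $(n/(\alpha\nu\mu p))^{1/3}$ actually arise, not from golfing contraction or the $W^S$ expansion as you suggest.

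Your two-piece construction breaks on both cross terms. First, $Y_S = \PQ\PO(\PO\PQ\PO)^{-1}\lambda\sgn(S_0)$ lies in $Q$ but not in $T^\perp$: to leading order $\PT Y_S \approx \lambda\,\PT\sgn(S_0)$, and even exploiting the random signs one gets $\E\|\lambda\PT\sgn(S_0)\|_F^2 \lesssim \rho\mu r$, so $\|\PT Y - UV^*\|_F$ is of constant order unless $r = O(1)$. No relaxed-certificate lemma can absorb an error this large on the $T$-side. Second, applying $\PQ$ inside the golfing step destroys the support restriction: since $\PO\POj = 0$ but $\PO\PQ\POj = -\PO\PQp\POj \ne 0$, one has $\PO Y_L = -\sum_j q^{-1}\PO\PQp\POj Z_{j-1}$, a sum of $j_0$ terms each comparable to $\|Z_{j-1}\|_F$ rather than to the final residual $\|Z_{j_0}\|_F$; the resulting bound is of order $\sqrt{r}\log m$ rather than $\lambda$. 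The paper's passage to $\Gamma$ and $\Gamma^\perp$ is exactly what decouples these two failures --- at the cost of needing the third certificate piece $W^Q$ to push the sum back into $Q$.
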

\begin{remark}
Here, ``with high probability'' means with probability at least $1- \beta(p,\alpha,\nu) m^{-c}$, with $c > 0$ numerical.
\end{remark}

The $\nu$-coherence condition essentially requires there exists an orthonormal basis for $Q^\perp$ whose spectral norms are bounded above by $O(n^{-1/2})$. This is a condition that can be verified directly once the subspace $Q$ or $Q^\perp$ is given (say as the span of the Jacobians). This condition is also significantly weaker than the random subspace assumption in Theorem \ref{thm:cpcp_random}.

Because the assumptions are weaker, the orders of growth in Theorem \ref{thm:cpcp_deterministic}, quite a bit more restrictive than those in Theorem \ref{thm:cpcp_random}. Nevertheless, this result can be very useful for the practical problems that we encountered in image rectification where the dimension of the transformation group is typically fixed (i.e.\ does not change with the matrix dimension). Theorem \ref{thm:cpcp_deterministic} suggests we should expect the program to work at least for deformation groups whose dimension is fixed. Although empirical results suggest that it could even grow as $O(n)$, we leave that for future investigation.

\paragraph{\bf The remainder of this paper is organized as follows:} In Section \ref{sec:dual_cert}, we derive the optimality conditions for $(L_0,S_0)$ to be the optimal solution to the convex program \eqref{eqn:cpcp}. In particular, we derive the conditions that a certain dual certificate must satisfy that would establish our main result. In Section \ref{sec:dual_cons}, we provide a constructive procedure for the aforementioned dual certificate. In Section \ref{sec:main_random_proof}, we describe our main assumptions and the detailed steps of the proof of Theorem \ref{thm:cpcp_random}. In Section \ref{sec:deterministic_proof}, we outline the proof of Theorem \ref{thm:cpcp_deterministic}. Although the proof for both the deterministic case will follow a common strategy as the random case, there are a few important differences. In particular, we will highlight the parts where the proof deviates significantly from that of Theorem \ref{thm:cpcp_random}.

\section{Existence of Dual Certificate}
\label{sec:dual_cert}

In this section, we prove the following lemma that establishes necessary and sufficient conditions for $(L_0,S_0)$ to be the optimal solution to \eqref{eqn:cpcp}.

\begin{lemma}
Assume that $\dim(Q^\perp \oplus T \oplus \Omega) = \dim(Q^\perp)+\dim(T)+\dim(\Omega)$. $(L_0,S_0)$ is the unique optimal solution to \eqref{eqn:cpcp} if there exists a pair $(W,F) \in \R^{m \times n}\times \R^{m \times n}$ satisfying
\begin{equation}
UV^* + W = \lambda (\sgn(S_0)+F) \in Q,
\label{eqn:dual}
\end{equation}
with $\PT W = \bz, \|W\| < 1, \PO F = \bz$, and $\|F\|_\infty < 1$.
\label{lem:dual}
\end{lemma}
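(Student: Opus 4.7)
The plan is to run the standard dual-certificate argument familiar from the PCP analysis \cite{Candes2011-JACM}, adapted to the reduced-measurement constraint $\PQ(L+S) = \PQ D$. The feasibility set around $(L_0,S_0)$ consists of pairs $(L_0+H_1,\,S_0+H_2)$ with $H_1+H_2 \in \Qp$. I want to show that for any nonzero such perturbation, $\|L_0+H_1\|_* + \lambda\|S_0+H_2\|_1 > \|L_0\|_* + \lambda\|S_0\|_1$.

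First I would invoke the standard subgradient characterizations: $UV^* + W_0 \in \partial\|\cdot\|_*(L_0)$ for any $W_0$ with $\PT W_0 = 0$, $\|W_0\|\le 1$, and $\sgn(S_0) + F_0 \in \partial\|\cdot\|_1(S_0)$ for any $F_0$ with $\PO F_0 = 0$, $\|F_0\|_\infty \le 1$. Using duality of the nuclear/$\ell_1$ norms with the spectral/$\ell_\infty$ norms, pick $W_0$ so that $\langle W_0,\PTp H_1\rangle = \|\PTp H_1\|_*$ and $F_0$ so that $\langle F_0,\POp H_2\rangle = \|\POp H_2\|_1$. The convexity inequality then reads
\begin{equation*}
\|L_0+H_1\|_* + \lambda\|S_0+H_2\|_1 \;\ge\; \|L_0\|_* + \lambda\|S_0\|_1 + \|\PTp H_1\|_* + \lambda\|\POp H_2\|_1 + \langle UV^*,H_1\rangle + \lambda\langle \sgn(S_0),H_2\rangle.
\end{equation*}

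The core of the argument is to use the hypothesized certificate $(W,F)$ to rewrite the last two inner products. Since $UV^* + W = \lambda(\sgn(S_0)+F)$ and this common element lies in $Q$, while $H_1+H_2 \in \Qp$, one has $\langle UV^*+W,H_1+H_2\rangle = 0$. Expanding this identity and using $\PT W = 0$ (so $\langle W,H_1\rangle = \langle W,\PTp H_1\rangle$) together with $\PO F = 0$ (so $\langle F,H_2\rangle = \langle F,\POp H_2\rangle$), the cross-term becomes
\begin{equation*}
\langle UV^*,H_1\rangle + \lambda\langle \sgn(S_0),H_2\rangle \;=\; -\langle W,\PTp H_1\rangle - \lambda\langle F,\POp H_2\rangle,
\end{equation*}
which by H\"older/duality is bounded below by $-\|W\|\,\|\PTp H_1\|_* - \lambda\|F\|_\infty\,\|\POp H_2\|_1$. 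Substituting back yields
\begin{equation*}
\|L_0+H_1\|_* + \lambda\|S_0+H_2\|_1 \;\ge\; \|L_0\|_* + \lambda\|S_0\|_1 + (1-\|W\|)\|\PTp H_1\|_* + \lambda(1-\|F\|_\infty)\|\POp H_2\|_1,
\end{equation*}
and since $\|W\|<1$ and $\|F\|_\infty<1$ the right-hand side strictly exceeds $\|L_0\|_* + \lambda\|S_0\|_1$ unless $\PTp H_1 = 0$ and $\POp H_2 = 0$.

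Finally, to upgrade optimality to uniqueness I invoke the hypothesis $\dim(\Qp \oplus T \oplus \Omega) = \dim(\Qp)+\dim(T)+\dim(\Omega)$. If the lower bound is achieved, then $H_1 \in T$, $H_2 \in \Omega$, and $H_1+H_2 \in \Qp$; but the direct-sum condition means the only element of $T+\Omega$ that is also in $\Qp$ is zero, and moreover any such sum decomposition is unique, forcing $H_1 = H_2 = 0$. I do not anticipate a real obstacle here: the only point requiring slight care is the choice of dual subgradients $W_0,F_0$ attaining the $\|\PTp H_1\|_*$ and $\|\POp H_2\|_1$ pairings, and the bookkeeping that the cancellation between the certificate inner product and the feasibility constraint $H_1+H_2\in\Qp$ goes through because the certificate explicitly lies in $Q$.
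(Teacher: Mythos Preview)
Your proposal is correct and follows essentially the same subgradient/dual-certificate argument as the paper. The only cosmetic differences are the sign convention on the sparse perturbation (the paper writes $(L_0+H_L,\,S_0-H_S)$ with $\PQ H_L=\PQ H_S$, you write $(L_0+H_1,\,S_0+H_2)$ with $H_1+H_2\in\Qp$) and that you invoke the orthogonality $\langle UV^*+W,\,H_1+H_2\rangle=0$ directly, whereas the paper substitutes $UV^*=\lambda(\sgn(S_0)+F)-W$ and then uses $\PQ H_L=\PQ H_S$; the two computations are equivalent and your version is arguably a bit cleaner.
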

\begin{proof}
Consider a feasible solution to \eqref{eqn:cpcp} of the form $(L_0+H_L,S_0-H_S)$. Clearly, we have that $\PQ H_L = \PQ H_S$. Under the conditions mentioned in the lemma, we will show that this pair does not minimize the cost function in \eqref{eqn:cpcp}, unless $H_L = H_S = \bz$.

We first use the fact that $\|\cdot\|_*$ and $\|\cdot\|_1$ are convex functions. Consider any pair $(W_0,F_0) \in \R^{m \times n}\times \R^{m \times n}$ satisfying $\PT W_0 = \bz$, $\|W_0\| \leq 1$, $\PO F_0 = \bz$, and $\|F_0\|_\infty \leq 1$. Then, $UV^* + W_0$ is a subgradient to $\|\cdot\|_*$ at $L_0$, and $\sgn(S_0) + F_0$ is a subgradient to $\|\cdot\|_1$ at $S_0$. Therefore,
$$
\|L_0+H_L\|_* + \lambda \|S_0-H_S\|_1 \geq \|L_0\|_* + \lambda \|S_0\|_1 + \langle UV^* + W_0, H_L\rangle - \lambda \langle \sgn(S_0) + F_0, H_S\rangle.
$$
By H\"{o}lder's inequality (and the duality of norms), it is possible to choose $W_0$ and $F_0$ such that
$$
\langle W_0, H_L \rangle = \|\PTp H_L\|_*, \quad \langle F_0, H_S\rangle = -\|\POp H_S\|_1.
$$
Then, we have
\begin{eqnarray*}
\|L_0+H_L\|_* + \lambda \|S_0-H_S\|_1& \geq & \|L_0\|_* + \lambda \|S_0\|_1 + \langle UV^*, H_L \rangle - \lambda\langle \sgn(S_0), H_S \rangle \\
& & \quad + \|\PTp H_L\|_* + \lambda \|\POp H_S\|_1.
\end{eqnarray*}
By assumption, we have
$$
UV^* = \lambda (\sgn(S_0) + F) - W,
$$
with $ \lambda (\sgn(S_0) + F) \in Q$. Substituting for $UV^*$ and using $\PQ H_L = \PQ H_S$, we get
$$
\langle UV^*, H_L\rangle = \lambda \langle \sgn(S_0), H_S \rangle  + \lambda \langle F, H_S\rangle - \langle W, H_L \rangle.
$$
Substituting this in the above inequality, we get
\begin{eqnarray*}
\|L_0+H_L\|_* + \lambda \|S_0-H_S\|_1 &\geq& \|L_0\|_* + \lambda \|S_0\|_1 + \|\PTp H_L\|_* + \lambda \|\POp H_S\|_1 \\
& & \quad+ \lambda \langle F, H_S \rangle - \langle W, H_L \rangle.
\end{eqnarray*}
Let $\beta = \max\{\|W\|,\|F\|_\infty\} < 1$. Using H\"{o}lder's inequality, we get
\begin{eqnarray*}
\|L_0+H_L\|_* + \lambda \|S_0-H_S\|_1 &\geq& \|L_0\|_* + \lambda \|S_0\|_1 + (1-\beta) \|\PTp (H_L)\|_* \\
& & \quad+ (1-\beta) \lambda \|\POp (H_S)\|_1.
\end{eqnarray*}
For non-zero $H_L, H_S$, the last term on the right hand side above can be zero only if $H_L \in T \backslash \{\bz\}$ and $H_S \in \Omega\backslash \{\bz\}$. Since $\Omega \cap T = \{\bz\}$, $H_L \neq H_S$. We also have $\PQ (H_L- H_S) = \bz$. This implies that $H_L - H_S \in Q^\perp$, which is a contradiction since $Q^\perp \cap (T \oplus \Omega) = \{\bz\}$. Thus, we have
$$
\|L_0+H_L\|_* + \lambda \|S_0-H_S\|_1 > \|L_0\|_* + \lambda \|S_0\|_1,
$$
for any non-zero {\it feasible} perturbation $(H_L, H_S)$.
\end{proof}

It is often convenient to relax the equality constraints on the dual certificate given in \eqref{eqn:dual}. Thus, similar to the proof outline in \cite{Candes2011-JACM,Gross2009-pp}, we now provide a slightly relaxed dual certificate condition.

\begin{fact}
Let $S_1$ and $S_2$ be two linear subspaces in $\R^{m \times n}$ with $S_1  \subseteq S_2$. Then, for any $X \in \R^{m \times n}$, we have $\P_{S_1}X = \P_{S_1}\P_{S_2}X$, and consequently, $\|\P_{S_1}X\|_F \leq \|\P_{S_2}X\|_F$.
\label{fact:subproj}
\end{fact}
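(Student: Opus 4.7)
The plan is to first establish the identity $\P_{S_1}X = \P_{S_1}\P_{S_2}X$, from which the norm bound will follow immediately via the Pythagorean theorem. Since the statement is elementary, the proof should be short; there is no real obstacle beyond organizing the argument cleanly.

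To obtain the identity, I will start from the orthogonal decomposition $X = \P_{S_2}X + \P_{S_2^\perp}X$ and apply the linear operator $\P_{S_1}$ to both sides, yielding $\P_{S_1}X = \P_{S_1}\P_{S_2}X + \P_{S_1}\P_{S_2^\perp}X$. The key observation is that the containment $S_1 \subseteq S_2$ reverses under orthogonal complement to give $S_2^\perp \subseteq S_1^\perp$, so every matrix in $S_2^\perp$ is annihilated by $\P_{S_1}$. Hence $\P_{S_1}\P_{S_2^\perp}X = \bz$, and the claimed identity drops out.

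For the inequality, I will apply the Pythagorean theorem to $\P_{S_2}X$ with respect to the orthogonal decomposition $\R^{m\times n} = S_1 \oplus S_1^\perp$:
$$\|\P_{S_2}X\|_F^2 = \|\P_{S_1}\P_{S_2}X\|_F^2 + \|\P_{S_1^\perp}\P_{S_2}X\|_F^2 \geq \|\P_{S_1}\P_{S_2}X\|_F^2.$$
Substituting $\P_{S_1}\P_{S_2}X = \P_{S_1}X$ from the first step and taking square roots yields $\|\P_{S_1}X\|_F \leq \|\P_{S_2}X\|_F$, completing the argument.
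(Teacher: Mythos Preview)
Your proof is correct. The paper actually states this as a Fact without proof, treating it as an elementary and well-known property of orthogonal projections, so there is no paper argument to compare against; your approach via the decomposition $X = \P_{S_2}X + \P_{S_2^\perp}X$ and the observation $S_2^\perp \subseteq S_1^\perp$ is the standard one and is entirely adequate.
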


\begin{lemma}
Suppose that $\dim(Q^\perp \oplus T \oplus \Omega) = \dim(Q^\perp)+\dim(T)+\dim(\Omega)$. Let $\Gamma = Q \,\cap\, T^\perp$ so that $\Gamma^\perp = Q^\perp \oplus T$. Assume that $\|\PO\PGp\| < 1/2$ and $\lambda < 1$. Then, $(L_0,S_0)$ is the unique optimal solution to \eqref{eqn:cpcp} if there exists a pair $(W,F) \in \R^{m \times n}\times \R^{m \times n}$ satisfying
\begin{equation}
UV^* + W = \lambda (\sgn(S_0)+F + \PO D) \in Q,
\label{eqn:reldual2}
\end{equation}
with $\PT W = \bz, \|W\| < 1/2, \PO F = \bz$, $\|F\|_\infty < 1/2$, and $\|\PO D\|_F \leq 1/4$.
\label{lem:reldual2}
\end{lemma}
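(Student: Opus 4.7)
The plan is to adapt the proof of Lemma 1 to handle the two relaxations in the dual certificate: the extra additive term $\lambda\PO D$ inside $Q$, and the looser constraints $\|W\|<1/2$, $\|F\|_\infty<1/2$. The hypothesis $\|\PO\PGp\|<1/2$ will play the role of a restricted-isometry-type bound that allows the new slack to be absorbed.

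First, I would take a feasible perturbation $(L_0+H_L, S_0-H_S)$ (so that $\PQ H_L = \PQ H_S$) and run the subgradient step exactly as in Lemma 1, selecting $W_0, F_0$ dually to extract $\langle W_0, H_L\rangle = \|\PTp H_L\|_*$ and $-\langle F_0, H_S\rangle = \|\POp H_S\|_1$. Substituting the identity $UV^* = \lambda(\sgn(S_0)+F+\PO D) - W$ and using that $\lambda(\sgn(S_0)+F+\PO D) \in Q$ together with $\PQ H_L = \PQ H_S$ to convert inner products against $H_L$ into inner products against $H_S$, I expect to be left with three cross-terms: $-\langle W, H_L\rangle$, $\lambda\langle F, H_S\rangle$, and the new term $\lambda\langle \PO D, H_S\rangle$. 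By Hölder's inequality, together with $\PT W = 0$, $\PO F = 0$, and $\mathrm{supp}(\PO D)\subseteq\Omega$, these are bounded respectively by $\tfrac12\|\PTp H_L\|_*$, $\tfrac12\|\POp H_S\|_1$, and $\tfrac14\|\PO H_S\|_F$. Thus the first two absorb exactly half of the good nonnegative terms produced by the subgradient step, while the third leaves an unwanted $-\tfrac{\lambda}{4}\|\PO H_S\|_F$.

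The main obstacle is to control this $\|\PO H_S\|_F$ in terms of the remaining $\tfrac12\|\PTp H_L\|_*$ and $\tfrac\lambda2\|\POp H_S\|_1$. This is where the hypothesis $\|\PO\PGp\|<1/2$ is essential. Two structural observations drive the bound: (i) since $\Gamma\subseteq T^\perp$, we have $\PG\PT = 0$, and since $H_L - H_S \in Q^\perp \subseteq \Gamma^\perp$, we have $\PG H_S = \PG H_L = \PG\PTp H_L$, so $\|\PG H_S\|_F \leq \|\PTp H_L\|_*$; (ii) by self-adjointness of projections, $\|\PGp\PO\| = \|\PO\PGp\| < 1/2$, and applied to $\PO H_S \in \Omega$ this yields $\|\PGp\PO H_S\|_F < \tfrac12 \|\PO H_S\|_F$. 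Writing $\PG\PO H_S = \PG H_S - \PG\POp H_S$, using $\|X\|_F\leq\|X\|_*$ and $\|X\|_F\leq\|X\|_1$, and then applying the Pythagorean identity $\|\PO H_S\|_F^2 = \|\PG\PO H_S\|_F^2 + \|\PGp\PO H_S\|_F^2$, I expect a bound of the form $\|\PO H_S\|_F \leq (2/\sqrt{3})\bigl(\|\PTp H_L\|_* + \|\POp H_S\|_1\bigr)$.

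Plugging this back and using $\lambda<1$, the coefficients of $\|\PTp H_L\|_*$ and $\|\POp H_S\|_1$ in the overall lower bound remain strictly positive (roughly $\tfrac12 - \tfrac{\lambda}{2\sqrt{3}}$ and $\tfrac{\lambda}{2}(1-\tfrac{1}{\sqrt{3}})$), so the objective strictly increases unless $\PTp H_L = 0$ and $\POp H_S = 0$, i.e., $H_L\in T$ and $H_S\in\Omega$. To close the argument in this residual case, I would invoke the direct-sum hypothesis $\dim(Q^\perp\oplus T\oplus\Omega) = \dim Q^\perp + \dim T + \dim\Omega$, which forces both $Q^\perp\cap(T+\Omega) = \{0\}$ and $T\cap\Omega = \{0\}$; combined with $H_L - H_S \in Q^\perp$, this gives $H_L = H_S \in T\cap\Omega = \{0\}$, establishing uniqueness.
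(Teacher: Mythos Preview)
Your proposal is correct and follows essentially the same architecture as the paper's proof: subgradient inequality as in Lemma~\ref{lem:dual}, substitution of the relaxed certificate to produce the extra slack term $-\tfrac{\lambda}{4}\|\PO H_S\|_F$, control of $\|\PO H_S\|_F$ via the hypothesis $\|\PO\PGp\|<1/2$ together with $\PG H_S=\PG H_L$, and finally the direct-sum hypothesis to rule out the degenerate case $H_L\in T$, $H_S\in\Omega$.

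The only real difference is in how you bound $\|\PO H_S\|_F$. The paper splits via the triangle inequality on the \emph{domain} side, $\|\PO H_S\|_F\le\|\PO\PG H_S\|_F+\|\PO\PGp H_S\|_F$, and applies $\|\PO\PGp\|<1/2$ to $H_S$, obtaining after rearrangement $\|\PO H_S\|_F\le 2\|\PTp H_L\|_*+\|\POp H_S\|_1$. You instead use the orthogonal (Pythagorean) split on the \emph{range} side, $\|\PO H_S\|_F^2=\|\PG\PO H_S\|_F^2+\|\PGp\PO H_S\|_F^2$, and apply $\|\PGp\PO\|<1/2$ to $\PO H_S$, obtaining $\|\PO H_S\|_F\le\tfrac{2}{\sqrt{3}}(\|\PTp H_L\|_*+\|\POp H_S\|_1)$. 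Both routes give strictly positive residual coefficients under $\lambda<1$; the paper's version yields the slightly cleaner final line $\tfrac{1-\lambda}{2}\|\PTp H_L\|_*+\tfrac{\lambda}{4}\|\POp H_S\|_1$, while yours gives a marginally sharper constant on the $\|\PTp H_L\|_*$ term at the cost of a looser one on $\|\POp H_S\|_1$. Either way the argument closes.
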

\begin{proof}
Proceeding along the same lines as in the proof of Lemma \ref{lem:dual}, for any feasible perturbation $(H_L, H_S)$, we get
\begin{eqnarray*}
\|L_0 + H_L\|_* + \lambda \|S_0-H_S\|_1 & \geq & \|L_0\|_* + \lambda \|S_0\|_1 + \frac{1}{2} \|\PTp H_L\|_*  \\
& & \quad + \frac{\lambda}{2} \|\POp H_S\|_1 + \lambda \langle \PO D, H_S\rangle \\
 & \geq & \|L_0\|_* + \lambda \|S_0\|_1 + \frac{1}{2} \|\PTp H_L\|_*  \\
 & & \quad+ \frac{\lambda}{2} \|\POp H_S\|_1-\frac{\lambda}{4}\|\PO H_S\|_F.
\end{eqnarray*}
We note that
\begin{eqnarray*}
\|\PO H_S\|_F & \leq & \|\PO \PG H_S\|_F + \|\PO \PGp H_S\|_F \\
& \leq & \|\PO \PG H_L\|_F + \frac{1}{2}\|H_S\|_F \\
& \leq & \|\PG H_L\|_F + \frac{1}{2} \|\PO H_S\|_F + \frac{1}{2} \|\POp H_S\|_F \\
& \leq & \|\PTp H_L\|_F + \frac{1}{2} \|\PO H_S\|_F + \frac{1}{2} \|\POp H_S\|_F.
\end{eqnarray*}
In the second step above, we have used the fact that $\PG H_L = \PG H_S$ (since $\Gamma \subseteq Q$), and the final inequality follows from Fact \ref{fact:subproj}. Thus, we have
$$
\|\PO H_S\|_F \leq 2\|\PTp H_L\|_F + \|\POp H_S\|_F \leq 2\|\PTp H_L\|_* + \|\POp H_S\|_1.
$$
Putting it all together, we get
$$
\|L_0 + H_L\|_* + \lambda \|S_0-H_S\|_1 \geq \|L_0\|_* + \lambda \|S_0\|_1 +\frac{1-\lambda}{2}\|\PTp H_L\|_* + \frac{\lambda}{4}\|\POp H_S\|_1.
$$
The desired result follows from the fact that $Q^\perp \cap(T \oplus\Omega) = \{\bz\}$.
\end{proof}

\section{Proof Strategy}
\label{sec:dual_cons}

By Lemma \ref{lem:reldual2}, in order for us to prove either Theorem 1 or 2, it is sufficient to produce a dual certificate $W \in \R^{m \times n}$ satisfying
\begin{equation}
\left \{
\begin{array}{l}
W \in T^\perp, \\
\PQp W = -\PQp(UV^*), \\
\|W\| < 1/2, \\
\|\PO(UV^* - \lambda \sgn(S_0) + W)\|_F \leq \lambda/4, \\
\|\POp(UV^* + W)\|_\infty < \lambda/2.
\end{array}
\right .
\label{eqn:dualcert}
\end{equation}

To prove Theorems 1 and 2 under the above conditions, we try to construct the dual certificate $W$ by following a similar strategy as that in the original PCP \cite{Candes2011-JACM}. However, the extra projection of the observations onto the subspace $Q$ adds significant difficulty to various technical parts of the proof. In this section, we will outline the basic components for constructing such a certificate and then provide detailed proofs for each of the component in next sections.  For simplicity, throughout our discussion below, we set $\Gamma \doteq Q \cap T^\perp$ so that $\Gamma^\perp = Q^\perp \oplus T$.

As the support of the sparse matrix is distributed as $\Omega \sim \ber(\rho)$ for some small $\rho \in (0,1)$. This is, of course, equivalent to assuming that $\Omega^c \sim \ber(1-\rho)$. Suppose that $\Omega_1, \Omega_2,\ldots,\Omega_{j_0}$ are independent support sets such that $\Omega_j \sim \ber(q)$ for all $j$. Then, $\Omega^c$ and $\bigcup_{j = 1}^{j_0} \Omega_j$ have the same probability distribution if $\rho = (1-q)^{j_0}$. We now propose a construction for the dual certificate $W \doteq W^L + W^S + W^Q$ as follows. We use a combination of the golfing scheme proposed in \cite{Gross2009-pp} and the least norm approach.

\begin{enumerate}
\item {\it Construction of $W^L$ using the golfing scheme.} Starting with $Y_0 = \bz$, we iteratively define
\begin{equation}
Y_j = Y_{j-1} + q^{-1} \POj \PGp(UV^* - Y_{j-1}),
\label{eqn:yjs}
\end{equation}
and set
\begin{equation}
W^L = \PG Y_{j_0},
\end{equation}
where $j_0 = \lceil 2\log m \rceil$.

\item {\it Construction of $W^S$ by least norm solution.} We define $W^S$ by the following least norm problem:
\begin{equation}
\begin{array}{ccl}
W^S & = & \quad \arg\min_X \, \|X\|_F \\
\st & & \PO X = \lambda \sgn(S_0) \\
& & \PGp X = \bz.
\end{array}
\end{equation}

\item {\it Construction of $W^Q$ by least squares.} We define $W^Q$ by the following least squares problem:
\begin{equation}
\begin{array}{ccl}
W^Q & = & \quad \arg\min_X \, \|X\|_F \\
\st  & & \PQp X = -\PQp(UV^*)  \\
& & \PP X = \bz,
\end{array}
\end{equation}
where $\Pi = \Omega \oplus T$.
\end{enumerate}

We note that under our assumptions (see Section \ref{sec:assumptions}), both the least squares programs above are feasible with high probability under both the random subspace model and the deterministic subspace model. This is because we will later show that the spectral norms of the linear operators $\PO\PGp$ and $\PQp\PP$ can be bounded below unity with high probability.

Thus, to prove that $W^L + W^S + W^Q$ is a valid dual certificate, we have to establish the following:
\begin{eqnarray}
\label{eq:3w-spectral}
\|W^L + W^S + W^Q\| < 1/2, \\
\|\PO(UV^* + W^L)\|_F \leq \lambda/4, \\
\label{eq:3w-infty}
\|\POp(UV^* + W^L + W^S + W^Q)\|_\infty < \lambda/2.
\end{eqnarray}

\begin{lemma}
Assume that $\Omega \sim \text{Ber}(\rho)$ for some small $\rho \in (0,1)$ and the assumptions \eqref{eqn:incoh_subspace} and \eqref{eqn:nu-coherence} hold true. Then, the matrix $W^L$ obeys, with high probability,
\begin{enumerate}
\item $\|W^L\| < 1/4$,
\item $\|\PO (UV^*+W^L)\|_F < \lambda/4$,
\item $\|\POp (UV^*+W^L)\|_\infty < \lambda/4$.
\end{enumerate}
\label{con:wl}
\end{lemma}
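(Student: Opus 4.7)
The plan is to adapt the golfing-scheme analysis of Candes--Li--Ma--Wright to the $Q$-constrained iteration. Define the residual $Z_j \doteq \PGp(UV^* - Y_j)$, which lies in $\Gamma^\perp$ for every $j$. From \eqref{eqn:yjs} one obtains
\begin{equation*}
Z_j \;=\; \bigl(\PGp - q^{-1}\PGp\POj\PGp\bigr)\,Z_{j-1}, \qquad Z_0 \;=\; \PGp UV^* \;=\; UV^*,
\end{equation*}
where the last equality uses $UV^* \in T \subseteq \Gamma^\perp$. Each increment $Y_j - Y_{j-1}$ is supported in $\Omega_j \subseteq \Omega^c$, so $\PO Y_{j_0} = \bz$, and a direct computation gives $\PGp Y_{j_0} = UV^* - Z_{j_0}$. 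Since $W^L = \PG Y_{j_0} = Y_{j_0} - \PGp Y_{j_0}$, this yields the two key identities
\begin{equation*}
UV^* + W^L \;=\; Y_{j_0} + Z_{j_0}, \qquad W^L \;=\; \PTp\bigl(Y_{j_0} + Z_{j_0}\bigr),
\end{equation*}
the second because $W^L \in \Gamma \subseteq T^\perp$ and $\PTp UV^* = \bz$. Consequently $\PO(UV^* + W^L) = \PO Z_{j_0}$, $\POp(UV^* + W^L) = Y_{j_0} + \POp Z_{j_0}$, and $\|W^L\| \leq \|Y_{j_0} + Z_{j_0}\|$ by the standard bound $\|\PTp X\| \leq \|X\|$.

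It therefore suffices to bound $Z_{j_0}$ in the Frobenius, $\ell_\infty$, and spectral norms, and $Y_{j_0}$ in the $\ell_\infty$ and spectral norms. For the Frobenius decay, because $\Omega_j$ is drawn independently of $Z_{j-1}$ at each step, a matrix Bernstein inequality, applied under the $\gamma$-constrained property of $\Gamma^\perp$, gives $\|\PGp\POj\PGp - q\PGp\|_{\mathrm{op}} \leq q/2$ with high probability, hence $\|Z_j\|_F \leq \tfrac{1}{2}\|Z_{j-1}\|_F$. Iterating over the $j_0 = \lceil 2\log m\rceil$ rounds yields $\|Z_{j_0}\|_F \leq m^{-c}\sqrt r$ for some numerical $c > 0$, which, after $C_r$ is chosen small, is below $\lambda/4 = 1/(4\sqrt m)$. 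A parallel entrywise argument uses a scalar Bernstein estimate together with the pointwise bound $\|\PGp\bar e_i e_j^*\|_F^2 \leq \gamma$ to give geometric decay of $\|Z_j\|_\infty$ starting from $\|UV^*\|_\infty \leq \sqrt{\mu r/(mn)}$, thereby handling $\|\POp Z_{j_0}\|_\infty$.

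For the remaining spectral quantities I expand $Y_{j_0} = \sum_{j=1}^{j_0} q^{-1}\POj Z_{j-1}$ and apply matrix Bernstein to each summand conditional on $Z_{j-1}$, using $\|Z_{j-1}\|_\infty$ to bound the rank-one pieces and the row/column sums of squared entries of $Z_{j-1}$ for the variance proxy. Combined with the geometric decay in the $\ell_\infty$ and Frobenius norms, this yields $\|Y_{j_0}\| \leq \sum_j q^{-1}\|\POj Z_{j-1}\| < 1/4$; the same decomposition gives the crude estimate $\|Y_{j_0}\|_\infty \leq q^{-1}\sum_j \|Z_{j-1}\|_\infty$, which, together with the geometric decay, closes the $\ell_\infty$ bound on $\POp(UV^* + W^L)$.

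The principal obstacle is the concentration $\|\PGp\POj\PGp - q\PGp\|_{\mathrm{op}} \leq q/2$: because $\Gamma^\perp = Q^\perp + T$ is a non-orthogonal sum of two subspaces, controlling $\|\PGp\bar e_i e_j^*\|_F^2$ and the accompanying variance proxy requires invoking the incoherence of both $T$ (via \eqref{eqn:incoh_subspace}) and $Q^\perp$ (via \eqref{eqn:nu-coherence}). In the setting of Theorem~\ref{thm:cpcp_random} one exploits that a uniformly random $p$-dimensional orthobasis is extremely incoherent with every fixed matrix, whereas in the setting of Theorem~\ref{thm:cpcp_deterministic} the quantitative dependence on $\nu$ and $p$ in this step is precisely what forces the more restrictive scaling on $r$.
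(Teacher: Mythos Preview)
Your treatment of parts (2) and (3) is essentially the paper's: the identities $\PO(UV^*+W^L)=\PO Z_{j_0}$ and $UV^*+W^L=Y_{j_0}+Z_{j_0}$, together with the Frobenius and $\ell_\infty$ geometric decay of $Z_j$ via the $\gamma$-constrained concentration lemmas, are exactly what is used there.

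The gap is in part (1). Your bound $\|W^L\|\le\|Y_{j_0}+Z_{j_0}\|$ is vacuous: since $Y_{j_0}+Z_{j_0}=UV^*+W^L$ and $\|UV^*\|=1$, one has $\|Y_{j_0}+Z_{j_0}\|\ge 1-\|W^L\|$, so this inequality can never yield $\|W^L\|<1/4$. The fallback you sketch, bounding $\|Y_{j_0}\|$ directly by $\sum_j q^{-1}\|\POj Z_{j-1}\|$ via matrix Bernstein, fails for the same reason: conditional on $Z_{j-1}$, the mean of $q^{-1}\POj Z_{j-1}$ is $Z_{j-1}$ itself, so after centering you are left with $\sum_{j}\|Z_{j-1}\|$, and already the $j=1$ term contributes $\|Z_0\|=\|UV^*\|=1$. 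No amount of geometric decay in later rounds fixes this first term.

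The paper sidesteps this by never bounding $\|Y_{j_0}\|$. It works directly with $W^L=\PG Y_{j_0}$ and uses $\PG Z_{j-1}=0$ to write each increment as $\PG(q^{-1}\POj-\id)Z_{j-1}$, which is automatically centered. The remaining difficulty is that $\|\PG X\|$ is not controlled by $\|X\|$ in spectral norm; the paper resolves this by introducing $R\doteq\PTp Q^\perp$, observing the orthogonal decomposition $\Gamma^\perp=T\oplus R$ so that $\PGp=\PT+\PR$, and then bounding $\|\PT(q^{-1}\POj-\id)Z_{j-1}\|$ by (a constant times) $\|(q^{-1}\POj-\id)Z_{j-1}\|$ and $\|\PR(q^{-1}\POj-\id)Z_{j-1}\|$ by a separate matrix Bernstein argument that exploits $\dim R\le p$. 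This $R$-subspace step is the missing idea in your proposal.
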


\begin{lemma}
In addition to the assumptions in the previous lemma, assume that the signs of the non-zero entries of $S_0$ are i.i.d. random. Then, the matrix $W^S$ obeys, with high probability,
\begin{enumerate}
\item $\|W^S\| < 1/8$,
\item $\|\POp W^S\|_\infty < \lambda/8$.
\end{enumerate}
\label{con:ws}
\end{lemma}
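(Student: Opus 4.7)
The plan is to adapt the argument for Lemma 2.9 of the original PCP analysis \cite{Candes2011-JACM} by replacing the projector $\PTp$ onto $T^\perp$ with $\PG$, the projector onto $\Gamma = Q \cap T^\perp$. The KKT conditions for the least-norm problem give the closed form
\begin{equation*}
W^S \;=\; \lambda\,\PG\,\PO\,(\PO \PG \PO)^{-1}\,\sgn(S_0),
\end{equation*}
where the inverse is that of $\PO \PG \PO$ acting on matrices supported in $\Omega$. Using $\PG = \id - \PGp$ gives $\PO\PG\PO = \PO - \PO\PGp\PO$ on $\Omega$, and since $\|\PO\PGp\| < 1/2$ will be established separately under both subspace models, $\|\PO\PGp\PO\| \leq \|\PO\PGp\|^2 < 1/4$, so the Neumann series
\begin{equation*}
Y \;:=\; (\PO\PG\PO)^{-1}\sgn(S_0) \;=\; \sum_{k \ge 0} (\PO\PGp)^k \sgn(S_0)
\end{equation*}
converges in $\Omega$. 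Thus $W^S = \lambda \PG Y = \lambda\,\sgn(S_0) - \lambda\,\PGp\sgn(S_0) + \lambda\,\PG\sum_{k \ge 1}(\PO\PGp)^k\sgn(S_0)$.

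For part 1, the dominant piece $\lambda\,\sgn(S_0)$ is bounded in spectral norm using standard concentration for random sign matrices with Bernoulli support (e.g., the bound in \cite{Candes2011-JACM}): $\|\sgn(S_0)\| \le C\sqrt{m\rho}$ with high probability, so $\|\lambda\,\sgn(S_0)\| \le C\sqrt{\rho}$ since $\lambda = m^{-1/2}$. The two correction terms $-\lambda\,\PGp\sgn(S_0)$ and $\lambda\,\PG\sum_{k \ge 1}(\PO\PGp)^k\sgn(S_0)$ can be bounded by combining the geometric decay from $\|\PO\PGp\|<1/2$ with the spectral-norm bound on $\sgn(S_0)$, producing an additional $\sqrt{\rho}$-type saving. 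Choosing $\rho$ smaller than an absolute constant then yields $\|W^S\| < 1/8$.

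For part 2, for any $(i,j) \in \Omega^\perp$ one writes
\begin{equation*}
\langle \be_i e_j^*, W^S\rangle \;=\; \lambda\,\langle X_{ij},\,\sgn(S_0)\rangle, \qquad X_{ij} \;:=\; (\PO \PG \PO)^{-1}\,\PO\,\PG\,\be_i e_j^*,
\end{equation*}
where $X_{ij}$ is supported on $\Omega$. Conditional on $\Omega$, the right-hand side is a Rademacher sum, so Hoeffding's inequality together with a union bound over the at most $mn$ entries gives
\begin{equation*}
\|\POp W^S\|_\infty \;\leq\; C\,\lambda\sqrt{\log m}\cdot\max_{i,j}\|X_{ij}\|_F
\end{equation*}
with high probability. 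Expanding $(\PO\PG\PO)^{-1}$ in its Neumann series and using $\|\PO\PGp\| < 1/2$, the Frobenius norm $\|X_{ij}\|_F$ is controlled by $\|\PO\,\PG\,\be_i e_j^*\|_F$, which via $\PG = \id - \PGp$ reduces to controlling $\|\PO\,\PGp\,\be_i e_j^*\|_F$. Using the incoherence of $T$ (eq.~\eqref{eq:PTeiej}, namely $\|\PT\be_i e_j^*\|_F^2 \leq 2\mu r/n$) together with the coherence assumption on $Q^\perp$ (random subspace or $\nu$-coherent), this Frobenius norm is of order $\sqrt{\mu r/n\;+\;(\text{contribution from }Q^\perp)}$. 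Once the hypotheses of the lemma on $r,\mu,p,\nu$ are invoked, the product $\lambda\sqrt{\log m}\cdot\max_{ij}\|X_{ij}\|_F$ falls below $\lambda/8$.

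The main obstacle, in my view, is that $\Gamma^\perp = Q^\perp \oplus T$ is only a direct, not an orthogonal, sum, so $\PGp \neq \PQp + \PT$; $\PGp$ must instead be treated via an oblique decomposition. Bounding $\|\PO\PGp\|$ uniformly, and controlling $\|\PO\PGp\,\be_i e_j^*\|_F$ at the correct rate by combining the incoherence of $T$ with the assumed property of $Q^\perp$, is the technical heart of the argument and is the step most likely to force the quantitative constraints on $r$ and $\rho$ in the final theorems. Once these bounds are in hand, the remainder of the proof follows the standard Neumann-series-plus-Hoeffding template from the PCP literature.
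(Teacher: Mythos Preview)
Your proposal is largely on the right track and matches the paper's approach: Neumann-series expansion of $W^S$, the split $W^S=\lambda\sgn(S_0)+\text{correction}$, and Hoeffding plus a union bound for the $\ell_\infty$ part. The treatment of $\|\POp W^S\|_\infty$ is essentially the paper's argument verbatim.

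There is, however, a genuine gap in your sketch for the spectral-norm bound. You write that the correction terms ``can be bounded by combining the geometric decay from $\|\PO\PGp\|<1/2$ with the spectral-norm bound on $\sgn(S_0)$.'' This does not work as stated: the operator norm $\|\PO\PGp\|$ is a Frobenius-to-Frobenius bound, and neither $\PGp$ nor the Neumann tail $\sum_{k\ge 1}(\PO\PGp\PO)^k$ is a spectral-norm contraction. If you try to push the spectral norm through, you are forced back to Frobenius control on $\sgn(S_0)$, which is $\Theta(\sqrt{\rho mn})$ and kills the bound. This is already an issue in the original PCP proof (it is \emph{not} special to replacing $\PTp$ by $\PG$), and it is exactly why Lemma~2.9 in \cite{Candes2011-JACM} does \emph{not} bound the tail this way.

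What the paper does---and what you need here---is to exploit the random signs a second time: write the correction as $W^S_2=\lambda\,\POp\PGp\PO\sum_{k\ge 0}(\PO\PGp\PO)^k\sgn(S_0)$, pass to a $1/2$-net $N_m\times N_n$ of the unit spheres so that $\|W^S_2\|\le 4\max_{\x,\y}\langle H(\x,\y),\sgn(S_0)\rangle$ with $H(\x,\y)=\sum_{k\ge 0}(\PO\PGp\PO)^k\PO\PGp\POp(\x\y^*)$, bound $\|H(\x,\y)\|_F\le \|\PO\PGp\|/(1-\|\PO\PGp\|^2)$, and then apply Hoeffding over the net (of size $\le 6^{m+n}$). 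The $\sqrt{\rho}$-type saving you allude to comes from $\|\PO\PGp\|\le\eta\sqrt{\rho}$ appearing in $\|H(\x,\y)\|_F$, not from a direct operator-norm-times-spectral-norm estimate. Once you insert this discretization step, the rest of your outline goes through.
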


\begin{lemma}
Assume that $\Omega \sim \text{Ber}(\rho)$ for some small $\rho \in (0,1)$ and the assumptions \eqref{eqn:incoh_subspace} and \eqref{eqn:nu-coherence} hold true. Then, the matrix $W^Q$ obeys, with high probability,
\begin{enumerate}
\item $\|W^Q\| < 1/8$,
\item $\|\POp W^Q\|_\infty < \lambda/8$.
\end{enumerate}
\label{con:wq}
\end{lemma}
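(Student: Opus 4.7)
The plan is to analyze $W^Q$ via the KKT conditions of its defining least-norm program and then estimate its spectral and $\ell_\infty$ norms through a Neumann-series expansion. Forming the Lagrangian with multipliers $\Lambda \in Q^\perp$ and $M \in \Pi$, stationarity gives $X = -\PQp \Lambda - \PP M$; eliminating $M$ via $\PP X = \bz$ and enforcing the linear constraint yields the compact representation
\begin{equation*}
W^Q = -\PPp \Lambda, \qquad \PQp\PPp\PQp\,\Lambda = \PQp(UV^*), \qquad \Lambda \in Q^\perp.
\end{equation*}
Well-posedness requires $\PQp\PPp\PQp$ to be invertible on $Q^\perp$; the key estimate I will establish is $\|\PQp\PP\PQp\, X\|_F \le \tfrac{1}{2}\|X\|_F$ for every $X \in Q^\perp$, whence (using the identity $(\PQp\PP\PQp)^k = (\PQp\PP)^k\PQp$)
\begin{equation*}
W^Q = -\PPp \sum_{k \ge 0}(\PQp\PP)^k \PQp(UV^*),
\end{equation*}
with the $k$th summand decaying by a factor of at least two in Frobenius norm.

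To obtain the $\tfrac12$ bound I estimate $\|\PQp\PO\|$ and $\|\PQp\PT\|$ separately, using the $\nu$-coherence of $Q^\perp$, the incoherence estimate \eqref{eq:PTeiej}, and $\ber(\rho)$ concentration on $\supp(S_0)$; combined with the standard PCP bound on $\|\PO\PT\|$, these give $\|\PQp\PP\PQp\| \le 1/2$ on $Q^\perp$ with high probability. As a byproduct I also get $\Pi \cap Q^\perp = \{\bz\}$, which verifies the direct-sum hypothesis of Lemma~\ref{lem:reldual2}.

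For part (1) I split $\|W^Q\| \le \|\PPp\PQp(UV^*)\| + \sum_{k \ge 1}\|\PPp(\PQp\PP)^k\PQp(UV^*)\|$ and bound the leading term via the expansion $\PQp(UV^*)=\sum_i \langle G_i, UV^*\rangle G_i$: spectral--nuclear duality gives $|\langle G_i, UV^*\rangle| \le r\|G_i\|$, which combined with $\nu$-coherence yields $\|\PQp(UV^*)\| \le r\sqrt{\nu p/n}$; in the random model, sharper Gaussian-type concentration applies because $\E[\PQp] = (p/mn)I$. The higher-order terms form a geometric tail with ratio $\le 1/2$, so the total is a constant multiple of the leading term and hence $o(1)$ in the regimes of Theorems~\ref{thm:cpcp_random} and \ref{thm:cpcp_deterministic}. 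For part (2), since $W^Q \in \Pi^\perp \subseteq \Omega^\perp$ we have $\POp W^Q = W^Q$; I then bound each $|\langle\bar e_i e_j^*, W^Q\rangle|$ by the same Neumann series, using \eqref{eq:PTeiej} together with the $\gamma$-constrained-type estimate on $\|\PQp \bar e_i e_j^*\|_F$.

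The hardest part is obtaining the right $r$-dependence in the deterministic regime. The spectral--nuclear duality bound $|\langle G_i, UV^*\rangle| \le r\|G_i\|$ is what drives the restrictions $r \lesssim (n/(\nu^2 p^2 \alpha))^{1/2}$ and $r \lesssim (n/(\alpha\nu\mu p))^{1/3}$ in Theorem~\ref{thm:cpcp_deterministic}; sharpening it without further structural assumptions on $Q^\perp$ appears genuinely difficult and is where most of the technical effort in this lemma will need to be concentrated.
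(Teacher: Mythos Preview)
Your proposal is correct and follows essentially the same route as the paper: the Neumann-series representation $W^Q = -\PPp\sum_{k\ge 0}(\PQp\PP\PQp)^k\PQp(UV^*)$, the bound $\|\PQp\PP\|^2 \le 1/2$ obtained by combining separate estimates on $\|\PQp\PO\|$, $\|\PQp\PT\|$, and $\|\PO\PT\|$, the spectral--nuclear duality bound $\|\PQp(UV^*)\|_F \lesssim r\sqrt{\nu p/n}$ in the deterministic case (with Gaussian concentration in the random case), and the $\ell_\infty$ control via $\|\PQp\PPp\bar e_i e_j^*\|_F$ all match the paper's argument in Sections~\ref{sec:wq_random_proof} and the deterministic analogue. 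The only cosmetic differences are your Lagrangian derivation of the closed form (the paper simply states it) and your term-by-term spectral splitting in part~(1), whereas the paper bounds $\|W^Q\|$ by $\|W^Q\|_F$ in one step.
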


\noindent The above lemmas together establish a valid dual certificate that satisfies Eqn. (\ref{eq:3w-spectral}) to Eqn. (\ref{eq:3w-infty}).

\section{Random Reduction: Proof of Theorem \ref{thm:cpcp_random}}
\label{sec:main_random_proof}

In this section, we provide a detailed proof of Lemmas \ref{con:wl}, \ref{con:ws}, and \ref{con:wq} for the case when $Q$ is a random subspace. Before proceeding to the main steps of the proof, we first establish some important properties and relationships among the different quantities involved in the problem.

\subsection{Preliminaries}
\label{sec:prelim_random}


\begin{lemma}
Let $Q^\perp$ be a linear subspace distributed according to the random subspace model described earlier. Then, for any $(i,j) \in [m] \times [n]$, with high probability,
\begin{equation}
\|\PQp \bar{e}_i e_j^*\|_F \leq 4\, \sqrt{\frac{p\log(mnp)}{mn}}.
\end{equation}
\label{lem:qpeiej}
\end{lemma}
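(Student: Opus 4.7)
The plan is to reduce the computation of $\|\PQp \bar e_i e_j^*\|_F^2$ to the squared norm of the projection of a fixed unit vector onto a uniformly random $p$-dimensional subspace of $\R^{mn}$, for which very sharp concentration is classical.

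First I would vectorize. Let $v_{ij} = \vec(\bar e_i e_j^*) \in \R^{mn}$; this is a coordinate vector, so $\|v_{ij}\|_2 = 1$, and since $\vec$ is an isometry, $\|\PQp \bar e_i e_j^*\|_F^2 = \|\PQp v_{ij}\|_2^2$ after identifying $Q^\perp$ with a $p$-dimensional subspace of $\R^{mn}$. The random subspace model in the paper — a uniformly random orthonormal basis $G_1,\dots,G_p$ — is exactly Haar measure on the Stiefel manifold, and hence induces Haar measure on the Grassmannian. By rotational invariance, the distribution of $\|\PQp v_{ij}\|_2^2$ does not depend on $(i,j)$ and coincides with that of $\|\PQp u\|_2^2$ for any deterministic unit vector $u$.

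Second, I would use the classical representation as a ratio of independent chi-squareds: realizing $Q^\perp$ as $O \cdot \mathrm{span}(e_1,\dots,e_p)$ for a Haar orthogonal $O$ on $\R^{mn}$, and setting $g = O^* u$, one has $g$ uniform on the unit sphere, and therefore
\[
\|\PQp u\|_2^2 \stackrel{d}{=} \frac{X}{X+Y}, \qquad X \sim \chi^2_p,\ Y \sim \chi^2_{mn - p},\ X \perp\!\!\!\perp Y.
\]
Applying the Laurent–Massart bounds
\[
\Pr\!\bigl[X \ge p + 2\sqrt{p\,t} + 2t\bigr] \le e^{-t}, \qquad \Pr\!\bigl[X + Y \le mn - 2\sqrt{mn\,t}\bigr] \le e^{-t},
\]
with $t = c\log(mnp)$ for a sufficiently large numerical constant $c$, a union bound gives, off a set of probability at most $2(mnp)^{-c}$, simultaneously $X \le p + 2\sqrt{pt} + 2t$ and $X + Y \ge \tfrac{1}{2} mn$ (the latter provided $mn$ is large compared to $t$, a regime outside of which the target bound is trivial). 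An elementary estimate $p + 2\sqrt{pt} + 2t \le 8pt$ valid for $t \ge 1$ then yields $\|\PQp u\|_2^2 \le 16 p t/(mn) = 16 p \log(mnp)/(mn)$, which on taking square roots is the claim.

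Finally I would apply a union bound over the $mn \le m^2$ pairs $(i,j)$; since $\log(mnp) = O(\log m)$, choosing $c$ large enough absorbs the factor of $m^2$ and delivers a uniform bound on an event of probability $\ge 1 - C m^{-\alpha}$, matching the paper's notion of ``with high probability.'' There is no serious obstacle here; the only care points are the bookkeeping justification that the stated random subspace model is genuinely Haar on the Grassmannian (so that rotational invariance is available) and the verification that the numerical constant $4$ in the statement survives after combining the Laurent–Massart estimates, both of which are routine.
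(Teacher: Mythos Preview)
Your argument is correct, but it follows a genuinely different route from the paper's. The paper does not use the exact Beta/chi-squared representation of $\|\PQp v\|_2^2$ at all. Instead it writes
\[
\|\PQp \bar e_i e_j^*\|_F = \Bigl(\sum_{k=1}^p |\langle G_k,\bar e_i e_j^*\rangle|^2\Bigr)^{1/2} \le \sqrt{p}\,\max_k \|G_k\|_\infty,
\]
and then bounds $\max_k \|G_k\|_\infty$ via elementary Gaussian tail estimates on the entries of $H_k$ (where $G_k = H_k/\|H_k\|_F$, $H_k$ i.i.d.\ $\mathcal N(0,1/mn)$), together with the crude bound $\|H_k\|_F \ge 1/2$ with high probability. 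This yields $\max_k \|G_k\|_\infty \le 4\sqrt{\log(mnp)/(mn)}$ on a single high-probability event, and the bound on $\|\PQp \bar e_i e_j^*\|_F$ then holds \emph{uniformly} over all $(i,j)$ without a further union bound.

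The trade-offs: the paper's entrywise approach is cruder (it loses a factor through the $\sqrt{p}\cdot\max$ step) but delivers uniformity in $(i,j)$ for free and lands exactly on the constant $4$. Your approach is sharper distributionally and more conceptual, but it requires a union bound over the $mn$ pairs $(i,j)$; as you correctly flag, pushing $t = c\log(mnp)$ with $c$ large enough to absorb that union bound makes the final constant $4\sqrt{c}$ rather than $4$, so recovering the stated constant would need a little more care (or simply a different absolute constant). Neither issue is serious, and your proposal would be a perfectly acceptable alternative proof.
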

\begin{proof}
For any $(i,j) \in [m] \times [n]$, we have
\begin{equation}
\|\PQp \bar{e}_i e_j^*\|_F = \sqrt{\sum_{k = 1}^p |\langle G_k, \bar{e_i} e_j^* \rangle|^2} \leq \sqrt{p} \max_k \, \|G_k\|_\infty.
\label{eqn:pq_eiej}
\end{equation}
We now derive a bound for $\|G_k\|_\infty$. Suppose that $M \in \R^{m \times n}$ is a random matrix whose entries are i.i.d. according to the standard normal distribution. Let us define
$$
H = \frac{1}{\sqrt{mn}} M,
$$
and $G = H/\|H\|_F$. Clearly, $G$ is identical in distribution to $G_1, G_2, \ldots, G_p$. We know that, for any $(i,j) \in [m] \times [n]$,
$$
\Pr[|M_{ij}| > t] \leq \sqrt{\frac{2}{\pi}} \,\frac{e^{-t^2/2}}{t}.
$$
Therefore, using a union bound,we get
$$
\Pr[\|M\|_\infty > t] \leq \sqrt{\frac{2}{\pi}} \,\frac{mn}{t} \, e^{-t^2/2},
$$
or equivalently,
$$
\Pr\left [\|H\|_\infty > \frac{t}{\sqrt{mn}}\right ] \leq \sqrt{\frac{2}{\pi}} \,\frac{mn}{t} \, e^{-t^2/2}.
$$
Now, if we have $p$ random matrices $H_1, H_2, \ldots, H_p$, independent and identical in distribution to $H$, then
$$
\Pr\left [\max_k \, \|H_k\|_\infty > \frac{t}{\sqrt{mn}}\right ] \leq \sqrt{\frac{2}{\pi}} \,\frac{mnp}{t} \, e^{-t^2/2}.
$$
Setting $t = \sqrt{4\log(mnp)}$, we get
$$
\Pr\left [\max_k \, \|H_k\|_\infty > \sqrt{\frac{4\log(mnp)}{mn}}\right ] \leq \sqrt{\frac{1}{2\pi}} \,\frac{1}{mnp\sqrt{\log(mnp)}}.
$$
Thus, with high probability, we have that
$$
\max_k\, \|H_k\|_\infty \leq \sqrt{\frac{4\log(mnp)}{mn}}.
$$
It can be shown that $\|H_k\|_F \geq 1/2$ with high probability. Thus, we have that
$$
\max_k\, \|G_k\|_\infty \leq \sqrt{\frac{16\log(mnp)}{mn}},
$$
with high probability. The desired result follows from Eqn. \eqref{eqn:pq_eiej}.
\end{proof}

\begin{lemma}
Assume that $p < mn/4$. Let $Q^\perp$ be a linear subspace distributed according to the random subspace model. Then, with high probability, we have
\begin{equation}
\|\PQp \PT\| \leq 8 \left( \frac{\sqrt{p}+\sqrt{(m+n)r}}{\sqrt{mn}} \right).
\end{equation}
\label{lem:incoh_QT}
\end{lemma}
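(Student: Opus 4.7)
My plan is to reduce the bound to a standard spectral concentration estimate for Gaussian random matrices. Vectorizing $\R^{m\times n}$, let $\Psi\in\R^{mn\times p}$ collect the orthonormal basis $\vec(G_1),\ldots,\vec(G_p)$ of $Q^\perp$ as its columns, and let $\Phi\in\R^{mn\times d}$ collect any orthonormal basis of $\vec(T)$, where $d = \dim T \le r(m+n)$. Since $\Psi$ and $\Phi$ have orthonormal columns, $\PQp = \Psi\Psi^*$ and $\PT = \Phi\Phi^*$, and one checks that
$$\|\PQp \PT\| \;=\; \sup_{y\in T,\,\|y\|_F=1}\|\PQp y\|_F \;=\; \sup_{\|c\|_2=1}\|\Psi^*\Phi c\|_2 \;=\; \|\Psi^*\Phi\|,$$
so the task becomes bounding the spectral norm of the $p\times d$ random matrix $\Psi^*\Phi$.

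The key step is to Gaussianize $\Psi$ via a QR argument. Let $H\in\R^{mn\times p}$ have i.i.d.\ $\mathcal{N}(0,1/mn)$ entries and let $H = \tilde\Psi R$ be its reduced QR decomposition. By rotational invariance of the Gaussian, $\tilde\Psi$ is Haar-distributed on the Stiefel manifold $V_p(\R^{mn})$ and hence equal in distribution to $\Psi$. Writing $\Psi \stackrel{d}{=} HR^{-1}$ yields
$$\|\Psi^*\Phi\| \;\stackrel{d}{=}\; \|(R^*)^{-1} H^*\Phi\| \;\le\; \sigma_{\min}(H)^{-1}\,\|H^*\Phi\|.$$
The crucial gain is that $H^*\Phi$ is itself a $p\times d$ matrix with i.i.d.\ $\mathcal{N}(0,1/mn)$ entries, because orthonormality of the columns of $\Phi$ makes each $\langle H_k,\Phi_i\rangle$ a scaled standard Gaussian, and independence follows from independence of the columns of $H$.

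The final step is to invoke the Davidson-Szarek bounds: for an $a\times b$ matrix with i.i.d.\ standard normal entries, the spectral norm is at most $\sqrt{a}+\sqrt{b}+t$ and (when $a\ge b$) the smallest singular value is at least $\sqrt{a}-\sqrt{b}-t$, each with probability $\ge 1-2e^{-t^2/2}$. Applied to $\sqrt{mn}\,H^*\Phi$ (size $p\times d$) and $\sqrt{mn}\,H$ (size $mn\times p$) with $t = c\sqrt{\log m}$, the hypothesis $p<mn/4$ forces $\sigma_{\min}(H)\ge 1/2$ with high probability, and one concludes
$$\|\PQp\PT\| \;\le\; 2\cdot\frac{\sqrt{p}+\sqrt{(m+n)r}+O(\sqrt{\log m})}{\sqrt{mn}}$$
with high probability. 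The main obstacle is essentially bookkeeping: tracking the Davidson-Szarek slack and the lower-order $\sqrt{\log m}$ remainder carefully enough to absorb them into the factor $8$ in the stated bound. Since $r\ge 1$ and $m\ge n$, the term $\sqrt{(m+n)r}$ dominates $\sqrt{\log m}$, so this absorption is straightforward. The probabilistic core is entirely classical.
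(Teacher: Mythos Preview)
Your proposal is correct and essentially the same as the paper's argument: the paper also represents $\PQp$ as $H(H^*H)^{-1}H^*$ for a Gaussian $H\in\R^{mn\times p}$, splits $\|\PQp\PT\|\le \|H(H^*H)^{-1}\|\cdot\|H^*\PT\|$, identifies $\|H(H^*H)^{-1}\|=\sigma_{\min}(H)^{-1}$ and $\|H^*\PT\|$ with the norm of a $(\dim T)\times p$ Gaussian matrix, and bounds both via the same Davidson--Szarek/Ledoux concentration. Your QR phrasing $\Psi\stackrel{d}{=}HR^{-1}$ is just an equivalent way to write this factorization; the only cosmetic difference is that the paper chooses constant deviation parameters (yielding exponentially small failure probability and the exact constant $8$) rather than $t\asymp\sqrt{\log m}$.
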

\begin{proof}
Firstly, we note that $Q^\perp$ is identical in distribution to a subspace spanned by $p$ independent random matrices, each of whose entries are i.i.d. according to a Gaussian distribution with mean zero and variance $1/mn$. Let $\H : \R^p \rightarrow \R^{m \times n}$ be a linear operator defined as follows:
$$
\H(\x) = \sum_{k=1}^p x_k \, H_k,
$$
where the $H_k$'s are independent random matrices each of whose entries are i.i.d. according to a Gaussian distribution with mean zero and variance $1/mn$. Then, we have that $\PQp$ has the same distribution as the operator $\H (\H^* \H)^{-1} \H^*$. Therefore, we have
\begin{eqnarray*}
&& \Pr\left[ \|\PQp \PT\| > 8 \left( \sqrt{\frac{p}{mn}} + \sqrt{\frac{(m+n)r}{mn}}\right) \right] \\
& = & \Pr\left[ \| \H (\H^* \H)^{-1} \H^* \PT\| > 8 \left( \sqrt{\frac{p}{mn}} + \sqrt{\frac{(m+n)r}{mn}}\right) \right] \\
& \leq & \Pr\left[ \| \H (\H^* \H)^{-1}\| \|\H^* \PT\| > 8 \left( \sqrt{\frac{p}{mn}} + \sqrt{\frac{(m+n)r}{mn}}\right) \right] \\
& \leq & \Pr\left[ \| \H (\H^* \H)^{-1}\| > 4\right] + \Pr \left[ \|\H^* \PT\| > 2\left(\sqrt{\frac{p}{mn}} + \sqrt{\frac{(m+n)r}{mn}} \right) \right].
\end{eqnarray*}

Suppose that $R \in \R^{mn \times p}$ is a random matrix whose entries are i.i.d. according to a Gaussian distribution with mean zero and variance $1/mn$. It is easy to see that if we vectorize all the matrices, then $R$ is the matrix analogue of the operator $\H$. Therefore, $\|\H (\H^* \H)^{-1}\|$ has the same distribution as $(\sigma_\mathrm{min}(R))^{-1}$. Let $R' = \sqrt{mn}\,R$. Clearly, the entries of $R'$ are i.i.d according to the standard normal distribution. Using the concentration results for 1-Lipschitz functions (see Proposition 2.18 in \cite{Ledoux}) and the distribution of singular values of random Gaussian matrices \cite{Vershynin2010}, it is possible to show that
$$
\Pr\left[ \sigma_\mathrm{min}(R') \leq \sqrt{mn}-\sqrt{p} - t \right] \leq e^{-t^2/2},
$$
for any $t \geq 0$. Consequently, we have that
$$
\Pr\left[ \sigma_\mathrm{min}(R) \leq 1-\sqrt{\frac{p}{mn}} - t \right] \leq e^{-mnt^2/2}.
$$
Setting $t = 1/4$ and by our assumption that $p < mn/4$, we get
$$
\Pr\left[ \sigma_\mathrm{min}(R) \leq \frac{1}{4} \right] = \Pr\left[ \|\H (\H^* \H)^{-1}\| \geq 4 \right] \leq e^{-mn/32}.
$$
We now note that $\|\H^* \PT\| = \|\PT \H\|$ is identical in distribution to $\|M\|$, where $M \in \R^{(m+n)r\times p}$ is a random matrix whose entries are i.i.d. $\N(0,1/mn)$. This is because the isotropic Gaussian distribution is rotation-invariant. Hence, without any loss of generality we can assume that the operator $\PT$ preserves only the first $\dim(T) = (m+n)r$ components of the basis elements $H_1,\ldots,H_p$. Once again, invoking Proposition 2.18 in \cite{Ledoux}, we can show that
$$
\Pr\left[ \|M\| \geq \frac{\sqrt{p}+\sqrt{(m+n)r}}{\sqrt{mn}} + t \right] \leq e^{-mnt^2/2}.
$$
Setting $t = \max\left\{\sqrt{p/mn}\, , \sqrt{(m+n)r/mn}\right\}$, it follows that
\begin{eqnarray*}
&& \Pr\left[ \|M\| \geq 2\left(\frac{\sqrt{p}+\sqrt{(m+n)r}}{\sqrt{mn}}\right) \right] \\
&= & \Pr\left[ \|\H^*\PT\| \geq 2\left(\frac{\sqrt{p}+\sqrt{(m+n)r}}{\sqrt{mn}} \right )\right]  \\
&\leq &\min\left\{e^{-p/2},e^{-(m+n)r/2} \right\}.
\end{eqnarray*}
Putting it all together, we get
\begin{eqnarray*}
&&\Pr\left[ \|\PQp \PT\| > 8 \left( \sqrt{\frac{p}{mn}} + \sqrt{\frac{(m+n)r}{mn}}\right) \right]\\
& \leq &e^{-mn/32} +\min\left\{e^{-p/2},e^{-(m+n)r/2} \right\}.
\end{eqnarray*}
Thus, we have that
$$
\|\PQp \PT\| \leq 8 \left( \frac{\sqrt{p}+\sqrt{(m+n)r}}{\sqrt{mn}} \right)
$$
with high probability.
\end{proof}

\begin{lemma}
Let $Q^\perp$ be a linear subspace distributed according to the random subspace model and $\Omega \sim \ber(\rho)$. Then, with high probability, we have
\begin{equation}
\|\PQp \PO\| \leq 8\left( \sqrt{\frac{p}{mn}} + \sqrt{\frac{5\rho}{4}} \right).
\end{equation}
\end{lemma}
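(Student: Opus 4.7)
The plan is to follow essentially the same operator-decomposition strategy as in the proof of Lemma \ref{lem:incoh_QT}, replacing the role of $\PT$ by $\PO$. Represent $\PQp$ as $\H(\H^*\H)^{-1}\H^*$, where $\H:\R^p \to \R^{m\times n}$ is the linear map sending $\x$ to $\sum_{k=1}^p x_k H_k$ and the $H_k$'s have i.i.d.\ $\N(0,1/mn)$ entries; this is valid because $Q^\perp$ is identical in distribution to the random span of the $H_k$'s. Then decompose
$$
\|\PQp \PO\| \;\leq\; \|\H(\H^*\H)^{-1}\|\cdot \|\H^* \PO\|,
$$
and control the two factors separately, exactly as before.

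For the first factor, the argument from Lemma \ref{lem:incoh_QT} applies verbatim: $\|\H(\H^*\H)^{-1}\|$ is distributed as $1/\sigma_{\min}(R)$ for a Gaussian matrix $R$ with $\N(0,1/mn)$ entries, and Gaussian concentration together with the assumption $p < mn/4$ yields $\|\H(\H^*\H)^{-1}\| \leq 4$ with probability at least $1 - e^{-mn/32}$.

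The heart of the proof is bounding $\|\H^* \PO\|$. Here the key observation is that $\Omega$ and $\H$ are independent, so conditionally on $\Omega$, the operator $\H^*\PO$ is represented by a random matrix $M \in \R^{|\Omega|\times p}$ whose entries are i.i.d.\ $\N(0,1/mn)$; its spectral norm can be controlled by the standard Gaussian concentration inequality (Proposition 2.18 in \cite{Ledoux}) combined with the bound $\sqrt{mn}\,\E\|M\| \leq \sqrt{|\Omega|}+\sqrt{p}$. We still need to replace the data-dependent quantity $|\Omega|$ by a deterministic one: since $\Omega \sim \ber(\rho)$, a Chernoff bound gives $|\Omega| \leq \tfrac{5\rho}{4}\, mn$ with high probability. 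Plugging this in yields
$$
\|\H^*\PO\| \;\leq\; \sqrt{\tfrac{p}{mn}} + \sqrt{\tfrac{5\rho}{4}} + t
$$
with probability at least $1 - e^{-mnt^2/2}$, and taking $t = \max\{\sqrt{p/mn},\sqrt{5\rho/4}\}$ absorbs the deviation into a factor of $2$, giving $\|\H^*\PO\| \leq 2(\sqrt{p/mn}+\sqrt{5\rho/4})$.

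Combining the two bounds with a union bound over the two events and the Chernoff event for $|\Omega|$ yields the claimed inequality with the factor $8$. The only genuinely new element relative to Lemma \ref{lem:incoh_QT} is the need to first control $|\Omega|$ before invoking Gaussian singular-value concentration, and this is the step I would be most careful with; everything else is a direct translation of the previous lemma's argument with $(m+n)r$ replaced by $\tfrac{5\rho}{4}mn$.
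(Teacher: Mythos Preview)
Your proposal is correct and follows essentially the same approach as the paper's proof: condition on $\Omega$, reuse the operator decomposition and Gaussian singular-value concentration from Lemma~\ref{lem:incoh_QT} with $\dim(T)$ replaced by $|\Omega|$, and then control $|\Omega| \le \tfrac{5}{4}\rho mn$ by a concentration inequality (the paper uses Bernstein rather than Chernoff, but this is immaterial). The combination via a union bound over the three events is exactly what the paper does.
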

\begin{proof}
Proceeding along the same lines of the proof of the previous lemma and conditioned on $\Omega$, we get
\begin{eqnarray*}
&& \Pr\left[ \|\PQp \PO\| > 8 \left( \sqrt{\frac{p}{mn}} + \sqrt{\frac{5\rho}{4}} \right) \:\: \Big | \:\: |\Omega| \leq \frac{5}{4} \rho mn \right] \\
&\leq & e^{-mn/32} + \min\left\{ e^{-p/2}, e^{-5mn\rho/8} \right \}.
\end{eqnarray*}
Using Bernstein's inequality, it is possible to show that
$$
\Pr\left[ |\Omega| > mn\rho (1+\delta) \right] \leq 2\exp\left( - \frac{mn\rho \delta^2}{1-\rho+\frac{2\delta}{3}}\right) \leq 2\exp\left( -\frac{3}{5} mn\rho\delta^2\right),
$$
for any $\delta \in (0,1)$. We set $\delta = 1/4$.
Thus, we have
\begin{eqnarray*}
& & \Pr\left[  \|\PQp \PO\| > 8 \left( \sqrt{\frac{p}{mn}} + \sqrt{\frac{5\rho}{4}} \right) \right] \\
& \leq & \Pr\left[ \|\PQp \PO\| > 8 \left( \sqrt{\frac{p}{mn}} +\sqrt{\frac{5\rho}{4}} \right) \:\: \Big | \:\: |\Omega| \leq \frac{5}{4}\rho mn \right] + \Pr\left[ |\Omega| > \frac{5}{4}\rho mn\right] \\
& \leq & e^{-mn/32} +  \min\left\{ e^{-p/2}, e^{-5mn\rho/8} \right \} + 2\,e^{-3mn\rho/80}.
\end{eqnarray*}
Thus, we have that
$$
\|\PQp \PO\| \leq 8\left( \sqrt{\frac{p}{mn}} + \sqrt{\frac{5\rho}{4}} \right)
$$
with high probability.
\end{proof}

\begin{lemma}
Let $\Omega \sim \ber(\rho)$. Then, with high probability,
\begin{equation}
\|\PO\PT\|^2 \leq \rho +\epsilon,
\end{equation}
provided that $1-\rho \geq C_0 \epsilon^{-2} \frac{\mu r \log m}{n}$ for some numerical constant $C_0 > 0$.
\label{lem:incoh_OT}
\end{lemma}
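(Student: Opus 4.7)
The plan is to reduce the spectral-norm bound to a concentration statement for the self-adjoint operator $\PT \PO \PT$. Since $\PO$ and $\PT$ are orthogonal projections, $\|\PO \PT\|^2 = \|(\PO \PT)^*(\PO \PT)\| = \|\PT \PO \PT\|$, so it suffices to show $\|\PT \PO \PT\| \le \rho + \epsilon$. Because $\E[\PO] = \rho\,\id$ entrywise, $\E[\PT \PO \PT] = \rho\,\PT$, and the triangle inequality further reduces the task to establishing $\|\PT \PO \PT - \rho\,\PT\| \le \epsilon$ with high probability.

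To prove this, I would decompose $\PT \PO \PT = \sum_{(i,j)} \delta_{ij}\, T_{ij}$, where $\delta_{ij} = \mathbf{1}_{\{(i,j)\in\Omega\}}$ are i.i.d.\ $\ber(\rho)$ and $T_{ij} : \R^{m \times n} \to \R^{m \times n}$ is the rank-one positive semidefinite operator $T_{ij}(X) = \langle \PT(\bar{e}_i e_j^*), X\rangle\, \PT(\bar{e}_i e_j^*)$. The centered sum $Z \doteq \sum_{(i,j)}(\delta_{ij} - \rho) T_{ij}$ is then a sum of independent, self-adjoint operators. Incoherence~\eqref{eq:PTeiej} gives the uniform bound $\|(\delta_{ij} - \rho) T_{ij}\| \le \|\PT(\bar{e}_i e_j^*)\|_F^2 \le 2\mu r/n$. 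Using $T_{ij}^2 = \|\PT(\bar{e}_i e_j^*)\|_F^2\,T_{ij}$ together with the resolution of identity $\sum_{(i,j)} T_{ij} = \PT$, the matrix variance parameter collapses to
$$\Big\|\sum_{(i,j)} \E\bigl[(\delta_{ij}-\rho)^2 T_{ij}^2\bigr]\Big\| \;\le\; \rho(1-\rho)\,\frac{2\mu r}{n}\,\|\PT\| \;\le\; \frac{\mu r}{2n}.$$

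Feeding these two parameters into the operator (matrix) Bernstein inequality on the $mn$-dimensional ambient space produces a tail bound of the form $\Pr[\|Z\| > \epsilon] \le 2mn \cdot \exp\bigl(-c\,\epsilon^2/(R + L\epsilon)\bigr)$, with $R \le \mu r/(2n)$ controlling the variance and $L \le 2\mu r/n$ controlling the almost-sure operator norm. The hypothesis $1-\rho \ge C_0\,\epsilon^{-2}\,\mu r \log m/n$, combined with $1-\rho\le 1$, forces $\epsilon^2 \ge C_0\,\mu r \log m/n$; this in turn dominates both the variance budget $R\log m$ and the tail budget $L\epsilon\log m$ once $C_0$ is chosen large enough, so the exponent is at most $-\alpha \log m$ for any desired $\alpha$. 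I conclude $\|Z\| \le \epsilon$ with high probability, and hence $\|\PO \PT\|^2 = \|\PT \PO \PT\| \le \rho + \epsilon$.

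The main subtlety is the sharp control of the variance term: the naive estimate $\sum\|\E[(\delta_{ij}-\rho)^2 T_{ij}^2]\| \le mn \cdot (2\mu r/n)^2$ is far too large and would demand $\epsilon \gtrsim \sqrt{\mu^2 r^2 m/n}$, useless in the regime of interest; one must exploit that each $T_{ij}$ is rank one and that $\sum T_{ij} = \PT$ to collapse the $mn$-fold sum into the single factor $\|\PT\|=1$. An equivalent route, which makes the explicit appearance of $1-\rho$ in the hypothesis most transparent, is to apply the same Bernstein argument to $\PT \POp \PT$ (where $\POp$ acts as Bernoulli$(1-\rho)$ sampling of the complement $\Omega^c$) to deduce $\|(1-\rho)^{-1}\PT\POp\PT - \PT\| \le \epsilon$, then transfer the estimate through the identity $\PT \PO \PT = \PT - \PT \POp \PT$; this matches the classical matrix-completion formulation where the condition on $1-\rho$ is natural.
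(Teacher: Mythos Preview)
Your argument is correct. The paper does not actually prove this lemma: it simply cites Corollary~2.7 of Cand\`es--Li--Ma--Wright (\emph{J.~ACM}, 2011). The proof behind that corollary is precisely the ``equivalent route'' you sketch in your last paragraph: one applies the operator concentration inequality (Theorem~2.6 there, originally Theorem~4.1 of Cand\`es--Recht) to the complement $\Omega^c\sim\ber(1-\rho)$ to obtain $\|\PT-(1-\rho)^{-1}\PT\POp\PT\|\le\epsilon$, and then transfers the bound via $\PT\PO\PT=\PT-\PT\POp\PT$. Your direct matrix-Bernstein argument on $\PT\PO\PT$ is a minor variant of the same computation and is equally valid; as you correctly observe, the explicit $1-\rho$ in the hypothesis is natural only for the complement formulation and in the direct route is used merely through $1-\rho\le1$.
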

\begin{proof}
See Corollary 2.7 in \cite{Candes2011-JACM}.
\end{proof}

We now prove the following two results that would help us establish incoherence relations with subspaces obtained by a direct sum of two incoherent subspaces.

\begin{lemma}
Let $S_1$ and $S_2$ be any two linear subspaces in $\R^{m \times n}$ satisfying $\|\P_{S_1} \P_{S_2}\| \leq \alpha < 1$. We define $S = S_1 \oplus S_2$. Then, for any $X \in \R^{m \times n}$, we have
\begin{equation}\|\P_S X\|_F^2 \leq (1-\alpha)^{-1}(\|\P_{S_1} X\|_F^2 + \|\P_{S_2} X\|_F^2).
\end{equation}
\label{lem:length_sum}
\end{lemma}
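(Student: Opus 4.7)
The plan is to write $\P_S X = Y_1 + Y_2$ with $Y_i \in S_i$, derive a $2{\times}2$ block linear system that determines the pair $(Y_1,Y_2)$ from $(\P_{S_1}X,\P_{S_2}X)$, and then express $\|\P_S X\|_F^2$ as a quadratic form in the inverse of the associated block operator, whose smallest eigenvalue is controlled by $1-\alpha$.

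First I would note that the hypothesis $\|\P_{S_1}\P_{S_2}\|\le\alpha<1$ forces $S_1\cap S_2=\{\bz\}$, so every element of $S=S_1\oplus S_2$ admits a unique decomposition into $S_1$ and $S_2$ components. Writing $\P_S X=Y_1+Y_2$ with $Y_i\in S_i$, the first-order optimality conditions for the equivalent problem $\min_{Y_i\in S_i}\|X-Y_1-Y_2\|_F^2$ are
\[
\P_{S_1}X \;=\; Y_1+\P_{S_1}Y_2,\qquad \P_{S_2}X \;=\; Y_2+\P_{S_2}Y_1.
\]
Introducing the restriction $A\doteq \P_{S_1}|_{S_2}\colon S_2\to S_1$, whose adjoint is $A^*=\P_{S_2}|_{S_1}$ and which satisfies $\|A\|=\|A^*\|\le\alpha$, these normal equations become a single block equation on $S_1\times S_2$, namely
\[
M\begin{pmatrix}Y_1\\ Y_2\end{pmatrix} \;=\; \begin{pmatrix}\P_{S_1}X\\ \P_{S_2}X\end{pmatrix},\qquad M\;\doteq\;\begin{pmatrix} I & A\\ A^* & I\end{pmatrix}.
\]

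Next I would observe that $M$ is self-adjoint and, by Cauchy--Schwarz applied to $\langle u,Av\rangle$,
\[
\langle (u,v),\,M(u,v)\rangle \;=\; \|u\|_F^2+\|v\|_F^2+2\langle u,Av\rangle \;\ge\; (1-\alpha)\bigl(\|u\|_F^2+\|v\|_F^2\bigr),
\]
so $M\succeq (1-\alpha)I$ and hence $\|M^{-1}\|\le(1-\alpha)^{-1}$. The key identity -- and the crux of the argument -- is that $\|\P_S X\|_F^2$ equals the \emph{same} quadratic form evaluated at $(Y_1,Y_2)$: since $Y_1\in S_1$ we have $\langle Y_1,AY_2\rangle=\langle Y_1,\P_{S_1}Y_2\rangle=\langle Y_1,Y_2\rangle$, so expanding $\|Y_1+Y_2\|_F^2$ gives
\[
\|\P_S X\|_F^2 \;=\; \|Y_1\|_F^2+\|Y_2\|_F^2+2\langle Y_1,Y_2\rangle \;=\; \langle (Y_1,Y_2),\,M(Y_1,Y_2)\rangle.
\]

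Eliminating $(Y_1,Y_2)=M^{-1}(\P_{S_1}X,\P_{S_2}X)$ and combining these two facts yields
\[
\|\P_S X\|_F^2 \;=\; \langle M^{-1}(\P_{S_1}X,\P_{S_2}X),\,(\P_{S_1}X,\P_{S_2}X)\rangle \;\le\; (1-\alpha)^{-1}\bigl(\|\P_{S_1}X\|_F^2+\|\P_{S_2}X\|_F^2\bigr),
\]
which is the claim. The main subtlety I expect to navigate is exactly this last step: the naive route of first bounding $\|Y_1\|_F^2+\|Y_2\|_F^2$ by $\|M^{-1}\|^2$ acting on the right-hand side and then using $\|Y_1+Y_2\|_F^2\le 2(\|Y_1\|_F^2+\|Y_2\|_F^2)$ leaks an extra factor of $(1-\alpha)^{-1}$ and gives only the weaker constant $(1-\alpha)^{-2}$. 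Recognizing $\|\P_S X\|_F^2$ itself as the Dirichlet form of the \emph{same} operator $M$ that appears on the left of the normal equations is what buys the sharp constant $(1-\alpha)^{-1}$.
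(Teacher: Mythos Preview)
Your proof is correct and is essentially a coordinate-free rendering of the paper's argument. The paper picks orthonormal bases $B_1,B_2$ for $S_1,S_2$, sets $M=[B_1\ B_2]$, writes $\vec(\P_S X)=M(M^*M)^{-1}M^*\vec(X)$, and bounds $\|\P_S X\|_F^2\le \|M(M^*M)^{-1}\|^2\,\|M^*\vec(X)\|_2^2$; since $M^*M=\begin{pmatrix}I&B_1^*B_2\\ B_2^*B_1&I\end{pmatrix}$ has $\lambda_{\min}\ge 1-\alpha$, one gets $\|M(M^*M)^{-1}\|^2=\lambda_{\min}(M^*M)^{-1}\le(1-\alpha)^{-1}$. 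Your block operator $M$ is exactly this Gram matrix $M^*M$ in disguise, and your Dirichlet-form identity $\|\P_S X\|_F^2=\langle M^{-1}z,z\rangle$ is the same computation as the paper's pseudoinverse-norm step (both reduce to $\lambda_{\max}(M^{-1})$). What your version buys is transparency: the normal-equations derivation makes the origin of the block operator explicit, and your remark about why the naive route yields only $(1-\alpha)^{-2}$ is a nice piece of insight the paper omits.
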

\begin{proof}
We denote by $\vec: \R^{m\times n} \rightarrow \R^{mn}$, the operation of converting a matrix to a vector by stacking its columns one below another. Suppose that $d_1$ and $d_2$ are the dimensions of the subspaces $S_1$ and $S_2$, respectively. Then, there exist matrices $B_1 \in \R^{mn \times d_1}$ and $B_2 \in \R^{mn \times d_2}$ whose columns constitute orthonormal bases for $S_1$ and $S_2$, respectively.

Let $M \doteq [B_1\quad B_2]$. Clearly, the columns of $M$ constitute a basis for the subspace $S$ in $\R^{mn}$. Hence, for any $X \in \R^{m \times n}$, its projection onto $S$ can be expressed as follows:
$$
\vec(\P_S X) = M (M^* M)^{-1} M^* \vec(X).
$$
We note that $\|B_1^* \vec(X)\|_2 = \|\P_{S_1} X\|_F$ and $\|B_2^* \vec(X)\|_2 = \|\P_{S_2} X\|_F$. Therefore,we have
\begin{eqnarray*}
\|\P_S X\|_F^2 & = & \|\vec(\P_S X)\|_2^2 \\
& = & \|M (M^* M)^{-1} M^* \vec(X)\|_2^2 \\
& \leq & \|M  (M^* M)^{-1}\|^2 \cdot \|M^*\vec(X)\|_2^2 \\
& = & \|M  (M^* M)^{-1}\|^2 \cdot (\|\P_{S_1} X\|_F^2 + \|\P_{S_2} X\|_F^2)
\end{eqnarray*}
Let $M^\dag \doteq (M^*M)^{-1}M^*$ denote the Moore-Penrose pseudoinverse of $M$. It is evident that $\|M^\dag\| = \|M  (M^* M)^{-1}\|$. But we know that $\|M^\dag\| = (\sigma_{\mathrm{min}}(M))^{-1}$, where $\sigma_{\mathrm{min}}(M)$ is the smallest non-zero singular value of $M$. Using the fact that $B_1$ and $B_2$ have orthonormal columns, we can show that $(\sigma_{\mathrm{min}}(M))^2  = \lambda_\mathrm{min}(M^*M) \geq 1-\alpha$, where $\lambda_\mathrm{min}(M^*M)$ is the smallest eigenvalue of $M^* M$.\footnote{Since $M$ has full column rank, $M^* M$ is positive definite.} Therefore, we have
$$
\begin{array}{rcl}
\|\P_S X\|_F^2 & \leq & (\sigma_{\mathrm{min}}(M))^{-2}\, (\|\P_{S_1} X\|_F^2 + \|\P_{S_2} X\|_F^2) \\
& \leq &(1-\alpha)^{-1}(\|\P_{S_1} X\|_F^2 + \|\P_{S_2} X\|_F^2).
\end{array}
$$
\end{proof}

\noindent Suppose that $\|\PQp \PT\| < 1/2$.\footnote{From Lemma \ref{lem:incoh_QT} and the assumptions of Theorem \ref{thm:cpcp_random}, this is true with high probability for sufficiently large $m,n$.} Then, it follows that
\begin{equation}
\|\PGp \be_i e_j^*\|_F^2 \leq 4 \left(\frac{8p \log(mnp)}{mn} + \frac{\mu r}{n} \right),
\label{eqn:nu-coherence-random}
\end{equation}
with high probability, for all $(i,j) \in [m] \times [n]$. In other words, with high probability, when $Q^\perp$ is distributed according to the random subspace model, we have that the subspace $\Gamma^\perp$ is $\gamma$-constrained with $\gamma=4 \left(\frac{8p \log(mnp)}{mn} + \frac{\mu r}{n} \right)$. We further note that $\gamma \log m = O(1/\log m)$ under the conditions of Theorem \ref{thm:cpcp_random}. This fact will be used frequently in our proof below.

\begin{lemma}
Let $S_1$, $S_2$ and $S_3$ be any three linear subspaces in $\R^{m \times n}$ satisfying
$\dim(S_1 \oplus S_2 \oplus S_3) = \dim(S_1) + \dim(S_2) + \dim(S_3)$, and $\|\P_{S_1} \P_{S_2}\| \leq \alpha_{1,2} < 1$, $\|\P_{S_2} \P_{S_3}\| \leq \alpha_{2,3} < 1$ and $\|\P_{S_3} \P_{S_1}\| \leq \alpha_{3,1} < 1$. We define $S = S_1 \oplus S_2$. Then, we have
\begin{equation}
\|\P_S \P_{S_3}\| \leq \sqrt{\frac{\alpha_{2,3}^2 + \alpha_{3,1}^2}{1-\alpha_{1,2}}}.
\end{equation}
\label{lem:norm_sum}
\end{lemma}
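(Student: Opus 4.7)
The plan is to apply Lemma \ref{lem:length_sum} pointwise to the matrix $\P_{S_3} X$ and then control each of the two resulting terms using the pairwise incoherence assumptions. Specifically, for an arbitrary $X \in \R^{m\times n}$, setting $Y = \P_{S_3} X$ in Lemma \ref{lem:length_sum} yields
\begin{equation*}
\|\P_S \P_{S_3} X\|_F^2 \;\leq\; (1-\alpha_{1,2})^{-1}\bigl(\|\P_{S_1} \P_{S_3} X\|_F^2 + \|\P_{S_2} \P_{S_3} X\|_F^2\bigr).
\end{equation*}
It is here that the hypothesis $\dim(S_1\oplus S_2\oplus S_3) = \dim(S_1)+\dim(S_2)+\dim(S_3)$ enters implicitly: it guarantees $\|\P_{S_1}\P_{S_2}\|<1$ so that the direct-sum bound of Lemma \ref{lem:length_sum} is valid with $\alpha = \alpha_{1,2}$.

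Next I would bound each summand using operator-norm submultiplicativity. Since orthogonal projections are self-adjoint, $\|\P_{S_1}\P_{S_3}\| = \|(\P_{S_3}\P_{S_1})^*\| = \|\P_{S_3}\P_{S_1}\| \leq \alpha_{3,1}$, and similarly $\|\P_{S_2}\P_{S_3}\| \leq \alpha_{2,3}$. Hence
\begin{equation*}
\|\P_{S_1}\P_{S_3} X\|_F^2 \leq \alpha_{3,1}^2\,\|X\|_F^2, \qquad \|\P_{S_2}\P_{S_3} X\|_F^2 \leq \alpha_{2,3}^2\,\|X\|_F^2.
\end{equation*}
Substituting these into the display above gives
\begin{equation*}
\|\P_S \P_{S_3} X\|_F^2 \;\leq\; \frac{\alpha_{2,3}^2 + \alpha_{3,1}^2}{1-\alpha_{1,2}}\,\|X\|_F^2,
\end{equation*}
and taking the supremum over $\|X\|_F = 1$ followed by a square root yields the claimed inequality.

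There is essentially no obstacle here: the argument is just the combination of Lemma \ref{lem:length_sum} (which supplies the $(1-\alpha_{1,2})^{-1}$ factor from the geometry of the direct sum $S_1\oplus S_2$) with submultiplicativity of the operator norm and self-adjointness of projections to symmetrize the pairwise bounds. The only point worth stating cleanly is that $\|\P_{S_i}\P_{S_j}\| = \|\P_{S_j}\P_{S_i}\|$, so the labels $\alpha_{i,j}$ can be used in whichever order is convenient.
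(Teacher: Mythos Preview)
Your proof is correct and follows essentially the same route as the paper: apply Lemma \ref{lem:length_sum} to $\P_{S_3}X$, bound each term by the pairwise operator-norm assumptions, and take the supremum over unit-norm $X$. The only additions you make are the explicit remark that $\|\P_{S_i}\P_{S_j}\| = \|\P_{S_j}\P_{S_i}\|$ via self-adjointness and the comment on the role of the dimension hypothesis, both of which are harmless clarifications.
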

\begin{proof}
The proof is a simple application of Lemma \ref{lem:length_sum}. We note that, for any $X \in \R^{m \times n}$,
\begin{eqnarray*}
\|\P_S \P_{S_3} X\|_F^2 & \leq & (1-\alpha_{1,2})^{-1} (\|\P_{S_1} \P_{S_3} X\|_F^2 + \|\P_{S_2} \P_{S_3} X\|_F^2 \\
& \leq & (1-\alpha_{1,2})^{-1} (\|\P_{S_1} \P_{S_3}\|^2 + \|\P_{S_2} \P_{S_3}\|^2) \|X\|_F^2\\
& \leq & (1-\alpha_{1,2})^{-1} (\alpha_{3,1}^2 + \alpha_{2,3}^2) \|X\|_F^2.
\end{eqnarray*}
It follows that
$$
\|\P_S \P_{S_3}\| \leq \sqrt{\frac{\alpha_{2,3}^2 + \alpha_{3,1}^2}{1-\alpha_{1,2}}}.
$$
\end{proof}

\begin{lemma}
Let $\Omega \sim \ber(\rho)$ and $\Gamma^\perp$ be $\gamma$-constrained. Then, for any $\epsilon \in (0,1)$, with high probability,
\begin{equation}
\|\PGp -  \rho^{-1}\PGp\PO\PGp\| \leq \epsilon,
\end{equation}
provided that $\rho \geq C\cdot \epsilon^{-2} \gamma \log m$ for some numerical constant $C > 0$.
\label{lem:normbound}
\end{lemma}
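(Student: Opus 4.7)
The plan is to apply a non-commutative (matrix) Bernstein inequality to a sum of independent rank-one self-adjoint operators acting on $\R^{m\times n}$. First, I will rewrite $\PGp$ and $\PGp\PO\PGp$ in a common basis indexed by the matrix entries. Let $\delta_{ij}$ be independent $\mathrm{Ber}(\rho)$ random variables with $\Omega = \{(i,j) : \delta_{ij} = 1\}$, and define the rank-one self-adjoint operator $\mathcal{E}_{ij} : X \mapsto \langle \PGp(\be_i e_j^*), X\rangle \, \PGp(\be_i e_j^*)$ on $\R^{m\times n}$. A direct check (using that $\langle \PGp(\be_i e_j^*), X\rangle = (\PGp X)_{ij}$ and that $\{\be_i e_j^*\}$ is an orthonormal basis of $\R^{m\times n}$) gives the identities
\begin{equation*}
\PGp \;=\; \sum_{i,j} \mathcal{E}_{ij}, \qquad \PGp\PO\PGp \;=\; \sum_{i,j} \delta_{ij}\,\mathcal{E}_{ij}.
\end{equation*}
Hence
\begin{equation*}
\PGp - \rho^{-1}\PGp\PO\PGp \;=\; \sum_{i,j} \bigl(1 - \rho^{-1}\delta_{ij}\bigr)\mathcal{E}_{ij} \;=:\; \sum_{i,j} Z_{ij},
\end{equation*}
a sum of independent, mean-zero, self-adjoint random operators on an $mn$-dimensional space.

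Next I will verify the two ingredients required by the matrix Bernstein inequality. For the uniform bound, $|1 - \rho^{-1}\delta_{ij}| \leq \rho^{-1}$ and $\|\mathcal{E}_{ij}\| = \|\PGp(\be_i e_j^*)\|_F^2 \leq \gamma$ by the $\gamma$-constrained hypothesis, so $\|Z_{ij}\| \leq \gamma/\rho$ almost surely. For the variance, since $\mathcal{E}_{ij}^2 = \|\PGp(\be_i e_j^*)\|_F^2\,\mathcal{E}_{ij} \preceq \gamma\,\mathcal{E}_{ij}$ and $\E[(1-\rho^{-1}\delta_{ij})^2] = \rho^{-1}(1-\rho) \leq \rho^{-1}$, we get
\begin{equation*}
\Bigl\|\sum_{i,j}\E[Z_{ij}^2]\Bigr\| \;\leq\; \rho^{-1}\gamma \,\Bigl\|\sum_{i,j}\mathcal{E}_{ij}\Bigr\| \;=\; \rho^{-1}\gamma\,\|\PGp\| \;\leq\; \gamma/\rho.
\end{equation*}

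Applying the non-commutative Bernstein inequality (e.g.\ Tropp's version, with ambient dimension $mn$) yields
\begin{equation*}
\Pr\!\left[\,\bigl\|\PGp - \rho^{-1}\PGp\PO\PGp\bigr\| > \epsilon\,\right] \;\leq\; 2mn\,\exp\!\left(-\,\frac{\epsilon^2/2}{\gamma/\rho + \gamma\epsilon/(3\rho)}\right).
\end{equation*}
For $\epsilon \in (0,1)$, the denominator is at most $\tfrac{4\gamma}{3\rho}$, so the right-hand side is bounded by $2mn\exp(-3\rho\epsilon^2/(8\gamma))$. Choosing $C$ large enough in the hypothesis $\rho \geq C\epsilon^{-2}\gamma\log m$ makes this probability at most $m^{-\alpha}$ for any desired constant $\alpha$, which is the claimed high-probability bound.

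The main obstacle, and the step that drives the constant $C$, is the variance estimate: one must resist the crude bound $\sum_{ij}\|\mathcal{E}_{ij}\|^2$ (which would be too large by a factor of $mn$) and instead exploit the operator inequality $\mathcal{E}_{ij}^2 \preceq \gamma\,\mathcal{E}_{ij}$ together with $\sum_{ij}\mathcal{E}_{ij} = \PGp$ to cancel the dimension factor. Beyond that, everything is a direct invocation of matrix Bernstein; the random-signs hypothesis plays no role here because only the support $\Omega$ appears.
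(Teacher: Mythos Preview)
Your proof is correct and follows essentially the same approach as the paper: both write $\PGp\PO\PGp = \sum_{ij}\delta_{ij}\,\PGp(\be_ie_j^*)\otimes\PGp(\be_ie_j^*)$ with expectation $\rho\,\PGp$, use the $\gamma$-constrained hypothesis to control each summand, and then invoke an operator concentration inequality. The only cosmetic difference is that the paper cites the Rudelson-type bound from Cand\`es--Romberg as a black box, whereas you apply the matrix Bernstein inequality directly and work out the variance estimate $\sum_{ij}\E[Z_{ij}^2]\preceq(\gamma/\rho)\PGp$ explicitly; both routes yield the same $\sqrt{\gamma\log m/\rho}$ dependence.
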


\begin{proof}
The proof is very similar to that of Theorem 4.1 in \cite{Candes2008}. We highlight the main steps here. For each $(i,j) \in [m] \times [n]$, we define binary random variables $\delta_{ij}$, each takes value 1 if $(i,j) \in \Omega$, and 0 otherwise. We note that
\begin{eqnarray*}
\PGp \PO \PGp & = & \sum_{ij} \delta_{ij} \, \PGp \be_i e_j^* \otimes \PGp \be_i e_j^*, \\
\E[\PGp \PO \PGp] & = & \rho\,\PGp,
\end{eqnarray*}
where $\otimes$ denotes the outer or tensor product between matrices. Applying a concentration result for operators of the above form, as established in \cite{Candes2007-IP}, we have, with high probability,
\begin{align}
\|\PGp - \rho^{-1} \,\PGp \PO \PGp\| & \leq C' \sqrt{\frac{\log (mn)}{\rho}} \max_{ij} \|\PGp\be_i e_j^*\|_F\\
                                     & \le C' \sqrt{\frac{\gamma \log (mn)}{\rho}} ,
\end{align}
provided that the right hand side is smaller than 1. Here, $C' > 0$ is a numerical constant. The desired result follows by noting that $n \leq m$, and bounding the right hand side by $\epsilon \in (0,1)$.
\end{proof}

\begin{lemma}
Let $Z \in \Gamma^\perp$ be fixed, $\Gamma^\perp$ be $\gamma$-constrained, and $\Omega \sim \ber(\rho)$. Then, with high probability,
\begin{equation}
\|Z - \rho^{-1}\PGp \PO Z\|_\infty \leq \epsilon \|Z\|_\infty,
\end{equation}
provided that $\rho \geq C_0 \cdot \epsilon^{-2} \gamma \log m$ for some numerical constant $C_0 > 64/3$.
\label{lem:infbound}
\end{lemma}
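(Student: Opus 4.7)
The plan is to reduce this entrywise $\ell_\infty$ bound to a scalar Bernstein inequality applied to each matrix entry, followed by a union bound over the $mn$ entries. The key observation is that, after fixing one entry, the quantity in question is a sum of independent mean-zero bounded scalars, so no matrix concentration (as was needed for Lemma \ref{lem:normbound}) is required.

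First I would fix an arbitrary $(a,b)\in[m]\times[n]$ and rewrite the $(a,b)$-entry of $Z - \rho^{-1}\PGp\PO Z$ in a form amenable to scalar concentration. Setting $Y \doteq \PGp \bar{e}_a e_b^*$ and using $Z\in\Gamma^\perp$ (so $\PGp Z = Z$), one has $\langle Y, Z\rangle = \langle \bar{e}_a e_b^*, \PGp Z\rangle = Z_{ab}$. With $\delta_{ij} \doteq \mathbf{1}_{(i,j)\in\Omega}$ i.i.d.\ $\mathrm{Ber}(\rho)$ indicators, a short computation gives
\[[Z - \rho^{-1}\PGp\PO Z]_{ab} \;=\; -\rho^{-1}\sum_{i,j}(\delta_{ij}-\rho)\,Y_{ij}\,Z_{ij},\]
which is a centered sum of independent bounded scalars.

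Next I would read off the two ingredients needed for Bernstein from the $\gamma$-constrained hypothesis. Since $\|Y\|_F^2 = \|\PGp \bar{e}_a e_b^*\|_F^2 \le \gamma$, every entry satisfies $|Y_{ij}|\le\sqrt{\gamma}$, producing a per-summand bound $M = \sqrt{\gamma}\,\|Z\|_\infty$ and a total variance bound $\sigma^2 \le \rho\gamma\|Z\|_\infty^2$. Scalar Bernstein at deviation level $t = \epsilon\rho\|Z\|_\infty$ then controls the failure probability at the $(a,b)$-entry by $2\exp\!\left(-\tfrac{\epsilon^2\rho/2}{\gamma+\sqrt{\gamma}\epsilon/3}\right)$, and a union bound over the $mn$ entries demands this exponent exceed roughly $3\log m$.

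The main obstacle is the constant chase in the last step. The denominator in the Bernstein exponent splits into a variance-dominated regime $\gamma \ge \sqrt{\gamma}\epsilon/3$ and a boundedness-dominated regime $\sqrt{\gamma}\epsilon/3 > \gamma$; one must check that the hypothesis $\rho \ge C_0 \epsilon^{-2}\gamma\log m$ handles both cases uniformly, and pinning down the explicit constant $C_0 > 64/3$ is precisely what this two-regime case analysis yields. Everything else --- the entrywise rewrite, the $\gamma$-constraint bounds, and the union bound --- is essentially routine.
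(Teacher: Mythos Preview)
Your overall architecture---fix an entry, expand as a centered sum of independent scalars indexed by $(i,j)$, apply scalar Bernstein, then union bound---is exactly the paper's. The entrywise rewrite and the variance bound $\sigma^2 \le \rho\gamma\|Z\|_\infty^2$ are also the same.

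The gap is in your per-summand bound. You take $|Y_{ij}|\le\sqrt{\gamma}$ from $\|Y\|_F^2\le\gamma$, which gives $M=\sqrt{\gamma}\,\|Z\|_\infty$. But then your Bernstein exponent is $\frac{\epsilon^2\rho/2}{\gamma+\sqrt{\gamma}\epsilon/3}$, and in the boundedness-dominated regime $\sqrt{\gamma}\epsilon/3>\gamma$ (equivalently $\gamma<\epsilon^2/9$) the union bound forces $\rho\gtrsim \epsilon^{-1}\sqrt{\gamma}\,\log m$. This is \emph{not} dominated by $C_0\,\epsilon^{-2}\gamma\log m$ for any fixed numerical $C_0$: the ratio $\dfrac{\epsilon^{-1}\sqrt{\gamma}}{\epsilon^{-2}\gamma}=\dfrac{\epsilon}{\sqrt{\gamma}}$ exceeds $3$ in this regime and is unbounded as $\gamma\to 0$. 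So the two-regime analysis you describe does \emph{not} close with a universal constant, and in particular cannot produce $C_0>64/3$.

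The paper avoids this by sharpening the entrywise bound on $Y$ using the idempotence and self-adjointness of $\PGp$:
\[
|Y_{ij}| \;=\; \bigl|\langle \PGp \bar e_a e_b^*,\,\bar e_i e_j^*\rangle\bigr|
\;=\; \bigl|\langle \PGp \bar e_a e_b^*,\,\PGp \bar e_i e_j^*\rangle\bigr|
\;\le\; \|\PGp \bar e_a e_b^*\|_F\,\|\PGp \bar e_i e_j^*\|_F \;\le\; \gamma,
\]
so that $M=\gamma\|Z\|_\infty/\rho$ (in the paper's scaling) and the Bernstein denominator becomes $2\gamma\bigl(\tfrac{\epsilon}{3}+1-\rho\bigr)\le \tfrac{8}{3}\gamma$ for $\epsilon<1$. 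With this, the exponent is $\ge \tfrac{3\epsilon^2\rho}{8\gamma}$, there is only one regime, and the union bound over $mn\le m^2$ entries yields the explicit threshold $C_0>64/3$. If you replace your $\sqrt{\gamma}$ bound by this $\gamma$ bound, the rest of your argument goes through verbatim.
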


\begin{proof}
Let $\delta_{ij}$ be a sequence of independent Bernoulli random variables such that
$$
\delta_{ij} = \left \{
\begin{array}{ll}
1, & \mathrm{if} \: (i,j) \in \Omega, \\
0, & \mathrm{otherwise.}
\end{array}
\right .
$$
We define $Z' \doteq Z - \rho^{-1}\PGp\PO Z$. Then,
$$
Z' = \sum_{ij} (1-\rho^{-1}\delta_{ij})Z_{ij}\PGp \be_i e_j^*.
$$
For any $(i_0,j_0) \in [m] \times [n]$, we can express $Z'_{i_0 j_0}$ as a sum of independent random variables as shown below:
$$
Z'_{i_0 j_0} = \sum_{ij}R_{ij}, \quad R_{ij} = (1-\rho^{-1}\delta_{ij})Z_{ij}\langle \PGp \bar{e}_i e_j^*, \be_{i_0}e_{j_0}^*\rangle.
$$
It is easy to show that the $R_{ij}$'s are zero-mean random variables with variance given by
$$
\mathrm{Var}(R_{ij}) = (1-\rho)\rho^{-1}|Z_{ij}|^2 \,|\langle \PGp \be_i e_j^*, \be_{i_0}e_{j_0}^*\rangle|^2.
$$
Therefore,
\begin{eqnarray*}
\sum_{ij} \mathrm{Var}(R_{ij}) & = & (1-\rho)\rho^{-1}\sum_{ij}|Z_{ij}|^2 |\langle \PGp \be_i e_j^*, \be_{i_0}e_{j_0}^*\rangle|^2 \\
& \leq & (1-\rho)\rho^{-1}\|Z\|_\infty^2 \sum_{ij}|\langle \be_i e_j^*, \PGp\be_{i_0}e_{j_0}^*\rangle|^2 \\
& = & (1-\rho)\rho^{-1}\|Z\|_\infty^2 \|\PGp \be_{i_0}e_{j_0}^*\|_F^2 \\
& \leq & (1-\rho)\rho^{-1}\gamma \|Z\|_\infty^2 ,
\end{eqnarray*}
where the last inequality holds with high probability.
Furthermore, we have
\begin{eqnarray*}
|R_{ij}| & \leq & \rho^{-1}\|Z\|_\infty |\langle \PGp \be_i e_j^*, \be_{i_0}e_{j_0}^*\rangle| \\
& \leq & \rho^{-1}\|Z\|_\infty \|\PGp \be_i e_j^*\|_F \|\PGp\be_{i_0} e_{j_0}^*\|_F \\
& \leq & \rho^{-1}\gamma \|Z\|_\infty,
\end{eqnarray*}
with high probability.
Thus, using Bernstein's inequality, we obtain
$$
\Pr\left [|Z'_{i_0 j_0}| > \epsilon \|Z\|_\infty\right ] \leq 2 \exp\left( -\frac{\epsilon^2 \rho}{2\gamma \left( \frac{\epsilon}{3}+ 1-\rho\right)}\right).
$$
Choosing $\epsilon < 1$, we can reduce the above expression to
$$
\Pr\left [|Z'_{i_0 j_0}| > \epsilon \|Z\|_\infty\right ] \leq 2 \exp\left( -\frac{3\epsilon^2 \rho}
{8\gamma}\right).
$$
If $\rho \geq C_0 \epsilon^{-2}\gamma \log m$ for some numerical constant $C_0 > 64/3$, then we have
$$
\Pr\left [|Z'_{i_0 j_0}| > \epsilon \|Z\|_\infty \right ] \leq 2 \exp\left( - \frac{3 C_0 \log m}{32}\right).
$$
Applying a union bound, we get
\begin{equation}
\begin{array}{rcl}
\Pr\left [\|Z'\|_\infty > \epsilon \|Z\|_\infty\right ] & \leq & 2mn \exp\left( - \frac{3 C_0 \log m}{32}\right) \\
& \leq & 2m^{\left(2-\frac{3C_0}{32}\right)}.
\end{array}
\end{equation}
Since $C_0 > 64/3$, we obtain the desired result.
\end{proof}

\noindent The following lemma is a restatement of Theorem 6.3 in \cite{Candes2008}.

\begin{lemma}
Let $Z\in \R^{m \times n}$ be fixed, and $\Omega \sim \ber (\rho)$. Then, with high probability,
\begin{equation}
\|Z - \rho^{-1} \PO Z\| \leq C_0' \sqrt{\frac{m \log m}{\rho}} \|Z\|_\infty,
\end{equation}
provided that $\rho \geq C_0' \frac{\log m}{n}$, where $C_0' > 0$ is a numerical constant.
\label{lem:omeganormbound}
\end{lemma}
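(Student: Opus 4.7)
The plan is to invoke the matrix Bernstein inequality on the natural decomposition
$$ Z - \rho^{-1}\PO Z \;=\; \sum_{(i,j) \in [m]\times[n]} X_{ij}, \qquad X_{ij} \doteq \left(1 - \rho^{-1}\delta_{ij}\right) Z_{ij}\,\be_i e_j^*, $$
where $\delta_{ij}$ is the Bernoulli$(\rho)$ indicator of $(i,j) \in \Omega$. Since $\E[\delta_{ij}] = \rho$, each $X_{ij}$ is zero-mean, and the summands are mutually independent rank-one matrices, so this is precisely the regime for a Bernstein-type bound. (The statement is essentially Theorem 6.3 of \cite{Candes2008}, so the job is to verify the hypotheses rather than invent a new argument.)

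First I would control the uniform summand norm. Since $|1 - \rho^{-1}\delta_{ij}| \leq \rho^{-1}$ and $|Z_{ij}| \leq \|Z\|_\infty$, every summand obeys $\|X_{ij}\| \leq \rho^{-1}\|Z\|_\infty =: R$. Next I would bound the matrix variance. Using $\delta_{ij}^2 = \delta_{ij}$, a short calculation gives $\E[(1-\rho^{-1}\delta_{ij})^2] = \rho^{-1} - 1 \leq \rho^{-1}$, hence
$$ \sum_{ij}\E[X_{ij}X_{ij}^*] \;=\; (\rho^{-1}-1)\sum_i\Bigl(\sum_j Z_{ij}^2\Bigr)\be_i\be_i^* \;\preceq\; \rho^{-1}\,\mathrm{diag}\bigl(\|Z_{i,\cdot}\|_2^2\bigr)_{i=1}^m, $$
whose spectral norm is at most $\rho^{-1} n \|Z\|_\infty^2$. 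The analogous computation for $\sum_{ij}\E[X_{ij}^* X_{ij}]$ gives an upper bound of $\rho^{-1} m \|Z\|_\infty^2$. Taking the maximum and using $n \leq m$, the variance parameter is $\sigma^2 \leq \rho^{-1} m \|Z\|_\infty^2$.

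Now I would apply matrix Bernstein in the form
$$ \Pr\!\left[\Bigl\|\sum_{ij} X_{ij}\Bigr\| \geq t\right] \;\leq\; 2m\,\exp\!\left(-\frac{t^2/2}{\sigma^2 + Rt/3}\right), $$
and choose $t = C_0'\sqrt{m\log m/\rho}\,\|Z\|_\infty$ with $C_0'$ sufficiently large. Under the hypothesis $\rho \geq C_0'\log m / n$, a direct check shows $Rt/3 \leq \sigma^2$, so the denominator is dominated by $\sigma^2$ and the exponent becomes at least a constant multiple of $\log m$; this yields the bound with probability at least $1 - 2m^{-\alpha}$ for some $\alpha > 0$, which is "with high probability" in the sense of the paper.

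The only real obstacle is the balancing between the subgaussian and sub-exponential regimes of Bernstein. The condition $\rho \gtrsim \log m/n$ is precisely what is needed so that $R t$ does not overwhelm $\sigma^2$, ensuring the Gaussian-tail regime prevails and giving the $\sqrt{m\log m/\rho}$ scaling; if $\rho$ were smaller, the sub-exponential term would dominate and produce a strictly weaker bound. Beyond this, everything is routine and requires no structural information about $Z$ other than its $\ell_\infty$-norm.
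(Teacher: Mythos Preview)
Your proposal is correct. The paper does not actually prove this lemma: it simply cites it as a restatement of Theorem~6.3 in \cite{Candes2008}, which you yourself note. The matrix-Bernstein argument you outline is the standard modern proof of that theorem, and your variance and uniform-bound computations are accurate; the only quibble is that the claim $Rt/3 \le \sigma^2$ is not literally guaranteed by $\rho \ge C_0'\log m/n$ for every $C_0'$, but this is immaterial since even in the sub-exponential regime the exponent is of order $t/R = C_0'\sqrt{\rho m \log m} \gtrsim (C_0')^{3/2}\log m$, which still yields the high-probability bound after adjusting constants.
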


\subsection{Proof of Lemma \ref{con:wl}}
\label{sec:wl_random_proof}

Before proceeding to the actual proof, we introduce some additional notation. Let $Z_j \doteq UV^* - \PGp Y_j$, where $Y_j$'s are defined in Eqn. \eqref{eqn:yjs}. Evidently, $Z_j \in \Gamma^\perp$ for all $j \geq 0$. The recursive relation between the $Y_j$'s can then be expressed as
\begin{equation}
Z_j = (\PGp - q^{-1} \PGp \POj \PGp) Z_{j-1}, \quad Z_0 = UV^*.
\end{equation}
Let us assume that $\epsilon \in (0,e^{-1})$. From Lemma \ref{lem:infbound}, we have that
$$
\|Z_j\|_\infty \leq \epsilon \|Z_{j-1}\|_\infty,
$$
with high probability, provided that
\begin{equation}
q \geq C_0 \epsilon^{-2}\gamma \log m,
\label{eqn:qcons}
\end{equation}
where $C_0 > 64/3$ is a numerical constant. Since $Z_0 = UV^*$, with high probability, we have
$$
\begin{array}{rcl}
\|Z_j\|_\infty & \leq & \epsilon^j \|UV^*\|_\infty \\
& \leq & \epsilon^j \sqrt{\frac{\mu r}{mn}}.
\end{array}
$$
The second inequality above follows from our assumptions about the matrices $U$ and $V$. Furthermore, when Eqn. \eqref{eqn:qcons} holds, we also have, with high probability,
\begin{equation}
\|Z_j\|_F \leq \epsilon \|Z_{j-1}\|_F
\end{equation}
using Lemma \ref{lem:normbound}. Once again, since $Z_0 = UV^*$, we deduce that
\begin{equation}
\begin{array}{rcl}
\|Z_j\|_F & \leq & \epsilon^j \|UV^*\|_F \\
& = & \epsilon^j \sqrt{r}
\end{array}
\end{equation}
with high probability.

\subsubsection{Bounding $\|W^L\|$}
\label{sub:|WL|}

We first introduce a few notions before deriving a bound on $\|W^L\|$. We let $R$ denote the linear subspace obtained by projecting all the points in $Q^\perp$ onto $T^\perp$. By a slight abuse of notation, we denote this by
\begin{align}
\label{eq:R}
R = \PTp Q^\perp.
\end{align}
We note that if $Q^\perp$ is a random $p$-dimensional subspace in $\R^{m \times n}$, then with probability one, $R$ is a $p$-dimensional subspace of $T^\perp$. It is easy to verify that for any $X \in \R^{m \times n}$, we have
\begin{eqnarray*}
\PGp X & = & \PT X + \PR X.
\end{eqnarray*}

We note that
\begin{equation}
Y_{j_0} = \sum_{j=1}^{j_0} q^{-1} \POj Z_{j-1}.
\end{equation}
Thus, we have
\begin{eqnarray*}
\|W^L\| & = & \|\PG Y_{j_0}\| \\
& \leq & \sum_{j=1}^{j_0} \left\|q^{-1} \PG \POj \PGp Z_{j-1}\right\| \\
& = & \sum_{j=1}^{j_0} \left\|\PG(q^{-1} \POj -\id)\PGp Z_{j-1}\right\| \\
& \leq & \sum_{j=1}^{j_0} \left\|\PGp(q^{-1} \POj -\id)\PGp Z_{j-1}\right\| + \sum_{j=1}^{j_0} \left\|(q^{-1} \POj -\id)\PGp Z_{j-1}\right\|.
\end{eqnarray*}
The second term in the above inequality can be bounded with high probability using Lemma \ref{lem:omeganormbound} as follows:
\begin{eqnarray*}
\sum_{j=1}^{j_0} \left\|(q^{-1} \POj -\id)\PGp Z_{j-1}\right\| & \leq & C_0' \sqrt{\frac{m \log m}{q}} \sum_{j=1}^{j_0}\|Z_{j-1}\|_\infty \\
& \leq & C_0' \sqrt{\frac{m \log m}{q}} \sum_{j=1}^{j_0} \epsilon^{j-1} \sqrt{\frac{\mu r}{mn}} \\
& \leq & C_0' \sqrt{\frac{\mu r \log m}{q\,n}} (1-\epsilon)^{-1},
\end{eqnarray*}
provided that
$$
q \geq \max\left \{C_0' \frac{ \log m}{n},  C_0 \epsilon^{-2} \gamma \log m\right \}.
$$
On the other hand, each term in the summation in the first term can be split as
\begin{eqnarray*}
&&\left\|\PGp(q^{-1} \POj -\id)\PGp Z_{j-1}\right\|\\
 & \leq & \left\|\PT(q^{-1} \POj -\id)\PGp Z_{j-1}\right\| + \left\|\PR(q^{-1} \POj -\id)\PGp Z_{j-1}\right\| \\
& \leq & 2 \left\|(q^{-1} \POj -\id) \PGp Z_{j-1}\right\| + \left\|\PR(q^{-1} \POj -\id)Z_{j-1}\right\|.
\end{eqnarray*}
We have already seen how the first term in the above inequality can be bounded with high probability. Hence, we now focus on the second term. We first state the matrix Bernstein inequality (see Theorem 1.4 in \cite{Tropp2011-FCM}) that will enable us to derive a bound on the second term.


\begin{thm}[Matrix Bernstein Inequality]
Let $M_1, \ldots, M_k \in \R^{d_1 \times d_2}$ be $k$ independent random matrices satisfying
\begin{equation}
\E[M_i] = \bz, \quad \|M_i\| \leq S \:\: \mathrm{almost}\:\: \mathrm{surely}, \quad i = 1,\ldots,k.
\end{equation}
We set
\begin{equation}
\sigma^2 = \max \left \{ \left\| \sum_{i=1}^k \E[M_i^* M_i]\right \|, \left\| \sum_{i=1}^k \E[M_i M_i^*]\right \|\right \}.
\end{equation}
Then, for any $t > 0$, we have
\begin{equation}
\Pr\left[\left\|\sum_{i=1}^k M_i \right \| > t\right] \leq (d_1 + d_2) \exp \left( -\frac{t^2}{2\sigma^2 + 3St}\right).
\end{equation}
\label{thm:operatorbound}
\end{thm}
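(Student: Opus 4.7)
The plan is to follow the matrix Laplace transform approach pioneered by Ahlswede--Winter and refined by Tropp. First I would reduce to the Hermitian case via the self-adjoint dilation
\[
\tilde{M}_i \;=\; \begin{pmatrix} \bz & M_i \\ M_i^* & \bz \end{pmatrix} \in \R^{(d_1+d_2)\times(d_1+d_2)}.
\]
This dilation preserves the spectral norm (the nonzero eigenvalues of $\tilde{M}_i$ are $\pm\sigma_j(M_i)$), and $\tilde{M}_i^2 = \mathrm{diag}(M_iM_i^*,\,M_i^*M_i)$, so bounding $\|\sum_i M_i\|$ reduces to bounding $\lambda_{\max}(\sum_i \tilde{M}_i)$, while $\|\sum_i \E \tilde{M}_i^2\| = \sigma^2$. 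From here, the matrix Chernoff--Markov bound gives, for any $\theta>0$,
\[
\Pr\bigl[\lambda_{\max}(\textstyle\sum_i \tilde{M}_i) > t\bigr] \;\le\; e^{-\theta t}\,\E\,\trace\exp\!\bigl(\theta \textstyle\sum_i \tilde{M}_i\bigr).
\]

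The main obstacle is that matrix exponentials do not factor across noncommuting summands, so the scalar identity $\E e^{\theta(X+Y)} = \E e^{\theta X}\E e^{\theta Y}$ fails. The central tool to get around this is Lieb's concavity theorem, which yields the subadditivity statement: for any fixed Hermitian $H$ and any random Hermitian $X$ independent of $H$,
\[
\E\,\trace\exp(H + X) \;\le\; \trace\exp\bigl(H + \log \E\exp X\bigr).
\]
Iterating this across the $k$ independent summands gives
\[
\E\,\trace\exp\!\bigl(\theta\textstyle\sum_i \tilde{M}_i\bigr) \;\le\; (d_1+d_2)\,\exp\!\Bigl(\lambda_{\max}\Bigl(\textstyle\sum_i \log \E\exp(\theta \tilde{M}_i)\Bigr)\Bigr),
\]
which is the matrix analogue of the scalar product rule for moment generating functions.

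It remains to bound $\log \E \exp(\theta \tilde{M}_i)$ in the semidefinite order. Using $\E[\tilde{M}_i]=\bz$, $\|\tilde{M}_i\|\le S$ a.s., and a truncated Taylor expansion of $x\mapsto e^{\theta x}-1-\theta x$ together with the operator monotonicity of $\log$, one obtains
\[
\log\E\exp(\theta \tilde{M}_i) \;\preceq\; \frac{\theta^2/2}{1-\theta S/3}\;\E\tilde{M}_i^2 \qquad (0<\theta<3/S).
\]
Substituting this into the Chernoff bound, taking spectral norms, and using $\|\sum_i \E\tilde{M}_i^2\|=\sigma^2$ produces
\[
\Pr\bigl[\|\textstyle\sum_i M_i\|>t\bigr] \;\le\; (d_1+d_2)\,\exp\!\Bigl(-\theta t + \frac{\theta^2 \sigma^2/2}{1-\theta S/3}\Bigr),
\]
and optimizing with $\theta = t/(\sigma^2 + St/3)$ yields the claimed tail bound. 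The hard part is entirely the noncommutative step: once Lieb's concavity and the self-adjoint dilation are in hand, the rest of the argument is a faithful matrix transcription of the scalar Bernstein proof.
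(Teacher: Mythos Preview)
Your sketch is correct and is precisely the argument of Tropp (the Lieb--concavity/subadditivity route combined with the Hermitian dilation and the Bernstein mgf bound). The paper, however, does not prove this theorem at all: it is quoted verbatim as Theorem~1.4 of \cite{Tropp2011-FCM} and used as a black box, so there is no ``paper's own proof'' to compare against beyond the citation. In that sense your proposal goes further than the paper does, and what you wrote is exactly the proof one would find by following the reference.
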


Using Theorem \ref{thm:operatorbound}, we will now show that, with high probability,
$$
\left\|\PR(q^{-1} \POj -\id)Z_{j-1}\right\| \le  \tilde{C} p \sqrt{ m} \log m \|Z_{j-1}\|_\infty.
$$
The proof is as follows.

For every $(i,l) \in [m] \times [n]$, let us define $M_{il} \doteq H_{il} (Z_{j-1})_{il}\PR \bar{e}_ie_l^*$, where the $H_{il}$'s are independent random variables distributed as follows:
$$
H_{il} = \left \{
\begin{array}{ll}
1, & \mathrm{w.p.} \:\:1-q \\
1-q^{-1}, & \mathrm{w.p.} \:\: q
\end{array}
\right .
$$
We note that $\sum_{i,l} M_{il}$ has the same distribution as $\PR(\id - q^{-1}\POj)Z_{j-1}$. Since the $H_{il}$'s are independent zero-mean random variables that are independent of $Z_{j-1}$, we have that, for any $(i,l) \in [m] \times [n]$,
$$
\E[M_{il} \, | \, Z_{j-1}] = \bz.
$$
We record two useful bounds. We have that
\begin{equation}
1-\rho = \Pr\left[ \cup_j \left\{(i,l) \in \Omega_j\right\} \right] \;\le\; j_0 q.
\end{equation}
So $q \ge (1-\rho)/j_0$. Since $|H_{il}| \le q^{-1}$ almost surely, and $j_0 \ge C/\log m$, we have
\begin{equation}
|H_{il}| \leq O(\log m) \quad \mathrm{almost} \;\mathrm{surely}.
\end{equation}
We also have
\begin{equation}
\|\PR \bar{e}_i e_l^* \| \leq \|\PR \bar{e}_i e_l^* \|_F \leq 1,
\end{equation}
for any $(i,l) \in [m] \times [n]$. It follows that $\|M_{il}\| \leq O(\log m) \|Z_{j-1}\|_\infty$ almost surely.

Now we bound the variance term. It can be shown that $\E[H_{il}^2] = O(\log m)$. Let $B_1,\ldots,B_p$ be such an orthonormal basis for $R$. Then, we have
\begin{align}
\left \| \sum_{il} \E[M_{il} M_{il}^*]\right \| & =  \left \| \sum_{il} \E[H_{il}^2] \PR[\bar{e}_ie_l^*] (\PR[\bar{e_i} e_l^*])^* (Z_{j-1})_{il}^2 \right \| \\
& \leq  O(\log m) \|Z_{j-1}\|_\infty^2 \left \| \sum_{il} \PR[\bar{e}_i e_l^*] ( \PR[\bar{e}_i e_l^*] )^* \right \| \\
& =  O(\log m) \|Z_{j-1}\|_\infty^2 \left \| \sum_{il} \left (\sum_{s=1}^p B_s \langle B_s, \bar{e}_i e_l^*\rangle \right) \left (\sum_{t=1}^p B_t \langle B_t, \bar{e}_i e_l^*\rangle \right)^*\right \| \\
& =  O(\log m) \|Z_{j-1}\|_\infty^2 \left \| \sum_{t} B_t B_t^*\right \| \\
\label{eq:Gincoherent}
& \leq  O(\log m) p \|Z_{j-1}\|_\infty^2
\end{align}
A similar bound holds for the other variance term $\E[M_{il}^* M_{il}]$.

Now, using the matrix Bernstein inequality, we have
\begin{eqnarray*}
&&\Pr\left( \|\PR(q^{-1}\POj - \id)Z_{j-1}\| > t | Z_{j-1}, Q \right) \\
&\leq & (m+n) \exp\left ( - \frac{t^2}{C_1 p \log m \|Z_{j-1}\|_\infty^2 + C_2 \log m \|Z_{j-1}\|_\infty t}\right).
\end{eqnarray*}
Therefore, removing the conditioning, we have that, with high probability,
$$
\|\PR(q^{-1}\POj - \id)Z_{j-1}\| \leq \tilde{C} \sqrt{ m} \log m \|Z_{j-1}\|_\infty,
$$
for any $j$, for some numerical constant $\tilde{C} > 0$.

Thuswe have that, with high probability,
\begin{eqnarray*}
\sum_{j=1}^{j_0} \|\PR(q^{-1}\POj - \id)Z_{j-1}\| & \leq & \sum_{j=1}^{j_0} \tilde{C} \sqrt{ m} \log m \|Z_{j-1}\|_\infty \\
& \leq &  \tilde{C} \sqrt{ m} \log m \sqrt{\frac{\mu r}{mn}} (1-\epsilon)^{-1}\\
& \leq &  \tilde{C} \log m \sqrt{\frac{\mu r}{n}} (1-\epsilon)^{-1}.
\end{eqnarray*}

Under the assumptions of Theorem \ref{thm:cpcp_random}, the bound on the right hand side can be made arbitrarily small. This gives us the desired bound.


\subsubsection{Bounding $\|\PO(UV^* + W^L)\|_F$}
\label{sec:POUVWL}
We now prove the second part of Lemma \ref{con:wl}. First, we note that $\PO Y_{j_0} = \bz$ by construction. Therefore,
\begin{equation}
\PO(UV^* + \PG Y_{j_0}) = \PO(UV^* - \PGp Y_{j_0}) = \PO Z_{j_0}.
\end{equation}
Consequently, we have
$$
\begin{array}{rcl}
\|\PO(UV^* + \PG Y_{j_0})\|_F & = & \|\PO Z_{j_0}\|_F \\
& \leq & \|Z_{j_0}\|_F \\
& \leq & \epsilon^{j_0} \sqrt{r} \\
& \leq & \frac{\sqrt{r}}{m^2}.
\end{array}
$$
The last step follows from the fact that $\epsilon < e^{-1}$ and $j_0 \geq 2\log m$.

\subsubsection{Bounding $\|\POp(UV^* + W^L)\|_\infty$}
\label{sec:POpUVWL}
We now prove the final part of Lemma \ref{con:wl}. We note that
\begin{equation}
UV^* + W^L = UV^* + \PG Y_{j_0} = Y_{j_0} + Z_{j_0}.
\end{equation}
Since we have already proved that $\|Z_{j_0}\|_F < \lambda / 8$, it is sufficient to show that $\|Y_{j_0}\|_\infty < \lambda / 8$. We have
\begin{eqnarray*}
\|Y_{j_0}\|_\infty & = & \| \sum_{j=1}^{j_0} q^{-1} \POj Z_{j-1} \|_\infty \\
& \leq & q^{-1}\sum_{j=1}^{j_0} \|\POj Z_{j-1} \|_\infty \\
& \leq & q^{-1}\sum_{j=1}^{j_0} \|Z_{j-1} \|_\infty  \\
& \leq & q^{-1}\sum_{j=1}^{j_0} \epsilon^{j-1} \sqrt{\frac{\mu r}{mn}}
\end{eqnarray*}
Notice that $q\ge (1-\rho)/j_0\ge 4/\log m$ for $\rho<1/2$,
\begin{eqnarray*}
\|Y_{j_0}\|_\infty & \leq & \frac{\log m}{4(1-\epsilon)}\sqrt{\frac{\mu r}{mn}}\\
& \leq & \lambda/8,
\end{eqnarray*}
for sufficiently small $\epsilon$ and for some numerical constant $C_r$.


\subsection{Proof of Lemma \ref{con:ws}}
\label{sec:ws_random_proof}
We recall the notation that $\Gamma^\perp = Q^\perp \oplus T$. By Lemma \ref{lem:norm_sum}, we have that
$$
\|\PGp \PO\|^2 \leq \frac{\|\PO\PQp\|^2 + \|\PO \PT\|^2}{1 - \|\PQp \PT\|}.
$$
Let us assume that $m, n$ are sufficiently large so that the following conditions hold true:
\begin{eqnarray}
\frac{p}{mn} & < & \frac{5\rho}{4}, \\
8 \left( \sqrt{\frac{p}{mn}} + \sqrt{\frac{(m+n)r}{mn}}\right) & < & \frac{1}{2}, \\
\rho^2(1-\rho) & \geq & C_0 \cdot \frac{\mu r \log m}{n},
\end{eqnarray}
where $C_0 > 0$ is the numerical constant from Lemma \ref{lem:incoh_OT}. We also assume that $\rho < 1/5$. We note that it is possible to satisfy all of the above inequalities under the assumptions on $p$ and $r$ given in Theorem \ref{thm:cpcp_random}, and because $\rho$ is a fixed constant in the interval $(0,1)$. Using Lemma \ref{lem:norm_sum}, it is easy to verify that under these assumptions, with high probability, we have that
\begin{equation}
\|\PGp \PO\| \leq \eta \sqrt{\rho},
\end{equation}
where $\eta > 0$ is a numerical constant.

The basic steps of the proof closely follow that of Lemma 2.9 in \cite{Candes2011-JACM}. We recognize that using the convergent Neumann series, $W^S$ can be expressed as follows:
\begin{equation}
W^S = \lambda (\id - \PGp)\PO\sum_{k \geq 0} (\PO \PGp \PO)^k [\sgn(S_0)].
\end{equation}
As mentioned in Section \ref{sec:assumptions}, we assume that the signs of the non-zero entries of $S_0$ are independent, symmetric $\pm 1$ random variables.


\subsubsection{Bounding $\|W^S\|$}
\label{sec:WS}

It is easy to show that
\begin{eqnarray*}
W^S & = &  \lambda \sgn(S_0) - \lambda \POp \PGp \PO \sum_{k \geq 0} (\PO \PGp \PO)^k [\sgn(S_0)] \\
& : = & W^S_1 - W^S_2
\end{eqnarray*}
We now show that each of these components have spectral norm smaller than $1/16$ with high probability. This gives us the desired bound on $\|W^S\|$.

For the first term, we can use standard arguments about the norms of random matrices with i.i.d. entries (see \cite{Vershynin2011}) to show that, with high probability,
$$
\|\sgn(S_0)\| \leq 4 \sqrt{n\rho}.
$$
Since $\lambda = m^{-1/2}$, we have that $\|W^S_1\| \leq 4\sqrt{\rho}$ with high probability. Thus, for sufficiently small $\rho$, we have that $\|W^S_1\| < 1/16$.

We use a discretization argument to bound $\|W^S_2\|$. Let $N_m$ and $N_n$ be $1/2$-nets for the unit spheres in $\R^m$ and $\R^n$, respectively. It can be shown that the sizes of $N_m$ and $N_n$ are at most $6^m$ and $6^n$, respectively (see Theorem 4.16 in \cite{Ledoux}). Then, we have that
\begin{eqnarray*}
\|W^S_2\| & \leq & 4 \, \max_{\x \in N_m, \y \in N_n} \, \x^* W^S_2 \y \\
& = & 4 \, \max_{\x \in N_m, \y \in N_n} \, \left \langle \x \y^*, W^S_2 \right \rangle \\
& = & 4\, \max_{\x \in N_m, \y \in N_n} \, \left \langle \x \y^*, \lambda \POp \PGp\PO\sum_{k \geq 0} (\PO \PGp \PO)^k [\sgn(S_0)] \right \rangle \\
& = & 4\lambda \,  \max_{\x \in N_m, \y \in N_n} \, \left \langle \sum_{k \geq 0} (\PO \PGp \PO)^k \PO \PGp \POp [\x \y^*], \sgn(S_0) \right \rangle \\
& = & 4\lambda \,  \max_{\x \in N_m, \y \in N_n} \, \left \langle H(\x,\y), \sgn(S_0) \right \rangle.
\end{eqnarray*}
For any $(\x,\y) \in N_m \times N_n$, we bound $\|H(\x,\y)\|_F$ as follows:
\begin{eqnarray*}
\|H(\x,\y)\|_F & = & \left \| \sum_{k \geq 0} (\PO \PGp \PO)^k \PO \PGp \POp [\x \y^*] \right \|_F \\
& \leq & \sum_{k \geq 0} \left \| (\PO \PGp \PO)^k \PO \PGp \POp [\x \y^*]  \right \|_F \\
& \leq & \left (\sum_{k\geq 0} \|(\PO \PGp \PO)\|^k\right) \|\PO\PGp\| \|\POp[\x\y^*]\|_F \\
& \leq & \frac{\|\PO \PGp\|}{1-\|\PO\PGp\|^2}.
\end{eqnarray*}
Conditioned on $Q$ and $\Omega$, we use Hoeffding's inequality to get
$$
\Pr\left[ \left |\left\langle H(\x,\y), \sgn(S_0) \right \rangle \right |  > t | \Omega, Q\right] < 2\exp\left( -\frac{2t^2}{\|H(\x,\y)\|_F^2}\right).
$$
Subsequently, using a union bound over $N_m \times N_n$, we obtain
\begin{align}
& \Pr\left[ \max_{\x \in N_m, \y \in N_n} \left |\left\langle H(\x,\y), \sgn(S_0) \right \rangle \right |  > t | \Omega, Q\right]  \\
& <  2\cdot 6^{m+n}\cdot\exp\left( -\frac{2t^2}{\max_{\x \in N_m, \y \in N_n} \|H(\x,\y)\|_F^2}\right) \\
& \leq  2\cdot 6^{m+n}\cdot\exp\left( -\frac{2t^2 (1-\|\PO \PGp\|^2)^2}{\|\PO\PGp\|^2}\right ).
\label{eq:epsilonnet}
\end{align}
Let $E_1$ be the event $\{\|\PO \PGp\| \leq \eta \sqrt{\rho}\}$. We know that this event occurs with high probability. Thus, removing the conditioning on $\Omega$ and $Q$, we have
\begin{eqnarray*}
\Pr\left[ \max_{\x \in N_m, \y \in N_n} \left |\left\langle H(\x,\y), \sgn(S_0) \right \rangle \right |  > t \right] &< & 2\cdot 6^{m+n}\cdot\exp\left( -\frac{2t^2 (1-\eta^2\rho)^2}{\eta^2 \rho}\right ) \\
&& \quad+ \Pr[E_1^c].
\end{eqnarray*}
Therefore,
\begin{eqnarray*}
\Pr\left[ 4\lambda \max_{\x \in N_m, \y \in N_n} \left |\left\langle H(\x,\y), \sgn(S_0) \right \rangle \right |  > t \right] &< & 2\cdot 6^{m+n}\cdot\exp\left( -\frac{t^2 (1-\eta^2\rho)^2}{8\lambda^2\eta^2 \rho}\right ) \\
&& \quad+ \Pr[E_1^c].
\end{eqnarray*}
Setting $t = \frac{s\eta\sqrt{16\rho}}{1-\eta^2\rho}$ and substituting $\lambda = 1/\sqrt{m}$, we get
$$
\Pr\left[ \|W^S_2\| > \frac{s\eta\sqrt{16\rho}}{1-\eta^2\rho}\right ] < 2\cdot\exp\left ( 2m (\log 6 - s^2) \right) + \Pr[E_1^c].
$$
Let us choose any $s > \sqrt{\log 6}$. Then, for sufficiently small $\rho$, we have that $\|W^S_2\|< 1/16$ with high probability.

\subsubsection{Bounding $\|\POp W^S\|_\infty$}
\label{sec:ws_infty}

Once again, using the convergent Neumann series expansion for $W^S$, we have
\begin{eqnarray*}
\|\POp W^S\|_\infty & = & \max_{(i,j) \in \Omega^c} \left | \left \langle \bar{e}_i e_j^*,  \lambda (\id - \PGp)\PO\sum_{k \geq 0} (\PO \PGp \PO)^k [\sgn(S_0)] \right \rangle \right | \\
& = & \lambda \max_{(i,j) \in \Omega^c} \left | \left \langle \sum_{k \geq 0} (\PO \PGp \PO)^k \PO \PGp [\bar{e}_i e_j^*], \sgn(S_0) \right \rangle \right | \\
& = & \lambda \max_{(i,j) \in \Omega^c} \left | \left \langle X_{i,j}, \sgn(S_0) \right \rangle \right |.
\end{eqnarray*}
Conditioned on $Q$ and $\Omega$, we use Hoeffding's inequality to get
$$
\Pr\left[ \left |\left\langle X_{i,j}, \sgn(S_0) \right \rangle \right |  > t | \Omega, Q\right] < 2\exp\left( -\frac{2t^2}{\|X_{i,j}\|_F^2}\right).
$$
Using a union bound, we obtain
$$
\Pr\left[ \max_{i,j} \left |\left\langle X_{i,j}, \sgn(S_0) \right \rangle \right |  > t | \Omega, Q\right] < 2mn\exp\left( -\frac{2t^2}{\max_{i,j} \|X_{i,j}\|_F^2}\right).
$$
We obtain a bound on $\|X_{i,j}\|_F$ as follows:
\begin{eqnarray*}
\|X_{i,j}\|_F & = & \left \| \sum_{k \geq 0} (\PO \PGp \PO)^k \PO \PGp [\bar{e}_i e_j^*] \right \|_F \\
& \leq & \sum_{k\geq 0} \left \| (\PO \PGp \PO)^k \PO \PGp [\bar{e}_i e_j^*] \right \|_F \\
& \leq & \left( \sum_{k \geq 0} \|\PO\PGp\PO\|^k\right) \|\PO \PGp [\bar{e}_i e_j^*] \|_F \\
& \leq & \frac{\|\PO\PGp\| \|\PGp[\bar{e}_i e_j^*]\|_F}{1 - \|\PO\PGp\|^2}.
\end{eqnarray*}
Thus, we get
\begin{eqnarray*}
&&\Pr\left[ \max_{i,j} \left |\left\langle X_{i,j}, \sgn(S_0) \right \rangle \right |  > t | \Omega, Q\right] \\
&< & 2mn\exp\left( -\frac{2t^2(1 - \|\PO\PGp\|^2)^2}{\|\PO\PGp\|^2 \max_{i,j}\|\PGp[\bar{e}_i e_j^*]\|_F^2}\right).
\end{eqnarray*}
Removing the conditioning on $Q$ and $\Omega$, we get
$$
\Pr\left[ \|\POp W^S\|_\infty > \lambda \sqrt{\frac{s\log(mn)}{2}}\frac{\|\PO\PGp\| \max_{i,j}\|\PGp[\bar{e}_i e_j^*]\|_F}{1 - \|\PO\PGp\|^2 }\right] < 2 (mn)^{1-s}
$$
Consider the two events:
\begin{eqnarray*}
E_1 & := & \left \{\|\PO\PGp\| \leq \eta \sqrt{\rho} \right\}, \\
E_2 & := & \left \{ \max_{i,j} \,\|\PGp \bar{e}_i e_j^*\|_F \leq \sqrt{\gamma} \right \},
\end{eqnarray*}
where we recall that $\gamma=4 \left(\frac{8p \log(mnp)}{mn} + \frac{\mu r}{n} \right)$. We have already shown that $E_1$ and $E_2$ occur with high probability. Substituting for the various bounds and setting $s = 2$, we get
\begin{eqnarray*}
&&\Pr\left[ \|\POp W^S\|_\infty > \lambda \sqrt{\gamma \log(mn)}  \frac{\eta \sqrt{\rho}}{1-\eta^2\rho} \right ] \\ &< & \frac{2}{mn} + \Pr[ (E_1 \cap E_2)^c].
\end{eqnarray*}
Under the conditions of Theorem \ref{thm:cpcp_random}, and for sufficiently large $m, n$ and sufficiently small $\rho$, we get that $ \|\POp W^S\|_\infty < \lambda/8$ with high probability.

\subsection{Proof of Lemma \ref{con:wq}}
\label{sec:wq_random_proof}

\subsubsection{Bounding $||W^Q||$}
\label{sec:wq_spectral}

Using the convergent Neumann series expansion, we can write the analytical expression for $W^Q$ as follows:
\begin{equation}
W^Q = \PPp \sum_{k \geq 0} (\PQp \PP \PQp)^k (\PQp (-UV^*)),
\end{equation}
where we recall that $\Pi = \Omega \oplus T$. It follows that
$$
\|W^Q\|_F \leq \left \|\sum_{k \geq 0} (\PQp \PP \PQp)^k \right \| \|\PQp (UV^*)\|_F.
$$
Considering the first term of the product on the right hand side,
\begin{eqnarray*}
\left \|\sum_{k \geq 0} (\PQp \PP \PQp)^k \right \| & \leq & \sum_{k \geq 0} \left \| (\PQp \PP \PQp)^k \right \| \\
& \leq & \sum_{k \geq 0} \|\PQp \PP\|^{2k}.
\end{eqnarray*}
From Lemma \ref{lem:norm_sum}, we have that, for any $\epsilon > 0$, with high probability,
\begin{eqnarray*}
&&\|\PQp \PP\|^2 \\
&\leq& \frac{64}{1-\sqrt{\rho + \epsilon}} \left(\left( \sqrt{\frac{p}{mn}} + \sqrt{\frac{5\rho}{4}} \right)^2 + \left( \sqrt{\frac{p}{mn}} + \sqrt{\frac{(m+n)r}{mn}} \right)^2 \right).
\end{eqnarray*}
Assume that $\rho < 1/4$, and fix $\epsilon = 3\rho$. For $m,n$ large enough, we can assume that $\max\{ p/mn\, , r(m+n)/mn\} < \rho$. Then, we have that, with high probability,
$$
\|\PQp \PP\|^2 \leq \frac{832\, \rho}{1-2\sqrt{\rho}}.
$$
Therefore, for sufficiently small $\rho$, we have that
\begin{equation}
\|\PQp \PP\|^2 \leq \frac{1}{4},
\end{equation}
with high probability. Consequently,
\begin{equation}
\left \|\sum_{k \geq 0} (\PQp \PP \PQp)^k \right \| \leq \frac{4}{3},
\end{equation}
with high probability.

We bound $\|\PQp (UV^*)\|_F$ as follows. As explained earlier, suppose we vectorize all matrices, then $\PQp$ has the same distribution as $H (H^*H)^{-1} H^*$, where $H \in \R^{mn \times p}$ is a random Gaussian matrix with i.i.d. entries $\sim \N(0,1/mn)$. Therefore, we have
$$
\|\PQp (UV^*)\|_F = \|H (H^*H)^{-1} H^* \vec(UV^*)\|_2 \leq \|H (H^*H)^{-1}\|\, \|H^*\vec(UV^*)\|_2,
$$
where the above equality is in distribution. We have already shown in the proof of Lemma \ref{lem:incoh_QT} that
$$
\Pr\left[\|H (H^*H)^{-1}\| \geq 4 \right] \leq e^{-mn/32}.
$$
We note that $H^*\vec(UV^*)$ is a $p$-dimensional vector whose components are i.i.d. and have the same distribution as $\langle G, UV^* \rangle$, where $G \in \R^{m \times n}$ is a random Gaussian matrix whose entries are i.i.d. $\sim \N(0,1/mn)$. It is easy to see that $\langle G, UV^* \rangle$ is distributed according to $\N(0,r/mn)$, and therefore, we have
$$
\E[\|H^*\vec(UV^*)\|_F] \leq (\E[\|H^*\vec(UV^*)\|_F^2])^{1/2} = \sqrt{\frac{pr}{mn}}.
$$
Since $\|\cdot\|_F$ is a 1-Lipschitz function, we use Proposition 2.18 in \cite{Ledoux} to get
$$
\Pr\left[ \|H^*\vec(UV^*)\|_F  \geq \E(\|H^*\vec(UV^*)\|_F) + t\cdot \sqrt{\frac{r}{mn}} \,\right] \leq e^{-t^2/2}.
$$
Setting $t = \sqrt{6\log m}$, we get
\begin{equation}
\Pr\left( \|H^*\vec(UV^*)\|_F  \geq \sqrt{\frac{pr}{mn}} + \sqrt{\frac{6r\log m}{mn}}\right) \leq \frac{1}{m^3}.
\end{equation}
Putting it all together, we conclude that
\begin{equation}
\|W^Q\| \leq \|W^Q\|_F \leq \frac{16}{3}\left( \sqrt{\frac{pr}{mn}} + \sqrt{\frac{6r\log m}{mn}}\right),
\end{equation}
with high probability. Clearly, for sufficiently large $m$, the right hand side can be made arbitrarily small under the conditions of Theorem \ref{thm:cpcp_random} and hence, we have the desired bound.


\subsubsection{Controlling $\|\POp W^Q\|_\infty$}
\label{sec:wq_infty}

It is easy to show that the analytical expression for $W^Q$ can be written slightly differently as follows:
\begin{equation}
W^Q = \PPp \PQp \sum_{k \geq 0} (\PQp \PP \PQp)^k (\PQp (-UV^*)).
\end{equation}
Consider any $(i,j) \in [m] \times [n]$. Then,
\begin{eqnarray*}
|\langle W^Q, \bar{e}_i e_j^* \rangle| & = & \left |\left \langle \sum_{k \geq 0} (\PQp \PP \PQp)^k (\PQp (-UV^*)),  \PQp \PPp \bar{e}_i e_j^* \right \rangle \right | \\
& \leq & \left \| \sum_{k \geq 0} (\PQp \PP \PQp)^k \PQp(UV^*) \right \|_F \, \|\PQp \PPp \bar{e}_i e_j^*\|_F \\
& \leq & \left \| \sum_{k \geq 0} (\PQp \PP \PQp)^k \right \| \, \|\PQp(UV^*)\|_F \, \|\PQp \PPp \bar{e}_i e_j^*\|_F.
\end{eqnarray*}
We have already derived bounds for the first two terms. For the final term in the product, we use the same technique we employed to bound $\|\PQp(UV^*)\|_F$. Using the fact that $\|\PPp  \bar{e}_i e_j^*\|_F \leq 1$, we can show that
$$
\Pr\left[\|\PQp \PPp \bar{e}_i e_j^*\|_F > \frac{16}{3}\left(\sqrt{\frac{p}{mn}} + \sqrt{\frac{6\log m}{mn}}\right)\right ] \leq \frac{1}{m^3} + e^{-mn/32}.
$$
Using a union bound, we get
$$
\Pr\left[ \max_{i,j}\,\|\PQp \PPp \bar{e}_i e_j^*\|_F > \frac{16}{3}\left(\sqrt{\frac{p}{mn}} + \sqrt{\frac{6\log m}{mn}}\right)\right ] \leq mn(m^{-3} + e^{-mn/32}).
$$
Putting all the bounds together, we have that, with high probability,
\begin{equation}
\|\POp W^Q\|_\infty \leq \frac{256}{9} \frac{\sqrt{r}}{mn} \, \left(\sqrt{p} + \sqrt{6\log m}\right)^2.
\end{equation}
Since $\lambda = m^{-1/2}$, it easy to show that under the assumptions of Theorem \ref{thm:cpcp_random}, the right hand side in the above inequality can be made smaller than $C'\,\lambda$, for any fixed $C' > 0$. Thus, we have the desired bound.


\section{Deterministic Reduction: Proof of Theorem \ref{thm:cpcp_deterministic}}
\label{sec:deterministic_proof}
In this section, we provide the proof for Theorem \ref{thm:cpcp_deterministic} under the deterministic subspace model for $Q^\perp$.
We will adopt the same optimality conditions established in Lemma \ref{lem:reldual2}, and the same proof strategy outlined in Section \ref{sec:dual_cons}, namely the construction of $W=W^L+W^S+W^Q$. To avoid redundancy, wherever possible, we will only highlight the parts that differ from the previous proof in Section \ref{sec:main_random_proof} and refer the interested reader to Section \ref{sec:main_random_proof} for more details. First, we derive the various incoherence relations associated with our fixed subspace $Q^\perp$. Then, we will prove Lemmas \ref{con:wl}, \ref{con:ws} and \ref{con:wq} using these relations.


\subsection{Preliminaries}
In this subsection, we provide several lemmas that will be used later in our proof.
\begin{lemma}
If $X\in \mathbb{R}^{m\times n}$ is a rank-$r$ matrix, then
\begin{align}
\|\PQp X\|_F^2\le \nu \frac{pr}{n} \|X\|_F^2.
\end{align}
\end{lemma}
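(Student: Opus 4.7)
The plan is to unpack the $\nu$-coherence hypothesis on $Q^\perp$ to get a spectral-norm-bounded orthonormal basis, and then exploit the rank constraint on $X$ via norm duality. Since $Q^\perp$ is $\nu$-coherent, pick the orthonormal basis $\{G_1,\dots,G_p\}$ for $Q^\perp$ satisfying $\max_i \|G_i\|^2 \le \nu/n$. Expanding the projection in this basis gives
\begin{equation*}
\|\PQp X\|_F^2 \;=\; \sum_{i=1}^p |\langle G_i, X\rangle|^2.
\end{equation*}

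The central step is to bound each inner product $|\langle G_i,X\rangle|$ using the duality between the spectral norm and the nuclear norm, namely $|\langle A,B\rangle| \le \|A\|\,\|B\|_*$ for arbitrary matrices $A,B$. This yields $|\langle G_i, X\rangle| \le \|G_i\|\cdot\|X\|_*$. The rank constraint on $X$ then enters through the elementary inequality $\|X\|_* \le \sqrt{\rank(X)}\,\|X\|_F = \sqrt{r}\,\|X\|_F$, which follows by Cauchy--Schwarz applied to the vector of singular values. Combining, we get $|\langle G_i,X\rangle|^2 \le \|G_i\|^2\,r\,\|X\|_F^2 \le (\nu r/n)\,\|X\|_F^2$.

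Summing this bound over the $p$ basis elements immediately produces
\begin{equation*}
\|\PQp X\|_F^2 \;\le\; p\cdot \frac{\nu r}{n}\,\|X\|_F^2 \;=\; \frac{\nu p r}{n}\,\|X\|_F^2,
\end{equation*}
which is the claim. There is no real obstacle here: the entire argument is a two-line application of norm duality plus the nuclear-norm-to-Frobenius inequality for low-rank matrices. The only point worth noting is that this is precisely the place where the $\nu$-coherence assumption on $Q^\perp$ (bounded spectral norms of basis elements) is essential, since replacing $\|G_i\|$ by $\|G_i\|_F = 1$ would only give the trivial bound $\|\PQp X\|_F \le \|X\|_F$.
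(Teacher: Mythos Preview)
Your proof is correct and essentially identical to the paper's: both expand $\|\PQp X\|_F^2 = \sum_i |\langle G_i,X\rangle|^2$, bound each inner product via the spectral/nuclear duality $|\langle G_i,X\rangle|\le \|G_i\|\,\|X\|_*$, apply $\|X\|_*\le\sqrt{r}\,\|X\|_F$, and invoke $\|G_i\|^2\le\nu/n$. The only cosmetic difference is that the paper factors out $\max_i\|G_i\|^2$ before summing, whereas you bound each summand separately and then sum.
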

\begin{proof}
\begin{eqnarray*}
\|\PQp X\|_F^2 & = & \sum_{i=1}^p \left|\braket{G_i,X}\right|^2 \\
& \le & p \left(\max_i \|G_i\|^2\right )\|X\|_*^2\\
& \le & pr \left(\max_i \|G_i\|^2\right ) \|X\|_F^2 \\
&\le & \nu \frac{pr}{n} \|X\|_F^2.
\end{eqnarray*}
\end{proof}

\begin{cor}
For any $\nu$-coherent subspace $Q^\perp$, we have the following:
\label{coro:incoherence_PQp}
\begin{enumerate}
\item $\|\PQp \bar{e}_i e_j^*\|_F^2\le \nu \frac{p}{n}$;
\item $\|\PQp \P_T\|^2\le 2\nu \frac{pr}{n}$;
\item $\|\PQp(UV^*)\|_F^2 \le 2\nu \frac{pr^2}{n}$.
\end{enumerate}
\end{cor}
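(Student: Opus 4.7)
The plan is to derive all three bounds as direct corollaries of the lemma that precedes this statement, namely that for any rank-$r$ matrix $X$, $\|\PQp X\|_F^2 \le \nu \, pr / n \, \|X\|_F^2$. The work is therefore just identifying the correct rank and Frobenius norm of the matrix being projected in each case; no new probabilistic or concentration machinery is needed.

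For part (1), observe that $\bar{e}_i e_j^*$ is a rank-$1$ matrix with $\|\bar{e}_i e_j^*\|_F = 1$. Applying the lemma with $r$ replaced by $1$ gives $\|\PQp \bar{e}_i e_j^*\|_F^2 \le \nu \, p / n$ immediately.

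For part (2), the key observation is that every matrix in the tangent space $T$ (as defined in \eqref{eqn:T}) is of the form $U X^* + Y V^*$ with $X \in \R^{n\times r}$ and $Y \in \R^{m \times r}$, and therefore has rank at most $2r$. Consequently, for an arbitrary $X \in \R^{m\times n}$, the matrix $\PT X$ lies in $T$ and hence has rank at most $2r$. Applying the lemma with $r$ replaced by $2r$ and using the fact that $\|\PT X\|_F \le \|X\|_F$ (since $\PT$ is an orthogonal projection), we obtain
$$
\|\PQp \PT X\|_F^2 \le \nu \, \frac{p \cdot 2r}{n} \, \|\PT X\|_F^2 \le 2\nu \, \frac{pr}{n} \, \|X\|_F^2,
$$
and taking a supremum over unit-Frobenius-norm $X$ yields the claimed operator-norm bound.

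For part (3), note that $UV^*$ has rank exactly $r$ and $\|UV^*\|_F^2 = \trace(V U^* U V^*) = \trace(V^* V) = r$, since $U$ and $V$ have orthonormal columns. The lemma then gives $\|\PQp(UV^*)\|_F^2 \le \nu \, pr / n \cdot r = \nu \, pr^2 / n$, which is in fact sharper than the stated $2\nu \, pr^2/n$ bound; the factor of $2$ in the statement is merely a uniform-looking slack consistent with part (2). There is no genuine obstacle in any of the three derivations — the corollary is essentially a bookkeeping exercise that packages the three instances of the preceding lemma that are actually invoked in the subsequent deterministic-subspace arguments.
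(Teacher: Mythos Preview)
Your proposal is correct and matches the paper's approach for parts (1) and (2) essentially verbatim: both arguments invoke the preceding lemma after noting that $\bar{e}_i e_j^*$ has rank $1$ and that $\PT X$ has rank at most $2r$.

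For part (3) there is a small difference worth noting. You apply the lemma directly to the rank-$r$ matrix $UV^*$ and obtain the sharper bound $\nu p r^2/n$; the paper instead routes through part (2), writing $\|\PQp(UV^*)\|_F^2 \le \|\PQp\PT\|^2\,\|UV^*\|_F^2 \le 2\nu p r^2/n$, which is where the extra factor of $2$ actually comes from. Your direct argument is cleaner and explains why the stated constant is loose, while the paper's derivation has the mild advantage of exhibiting part (3) as a formal consequence of part (2). Either way the corollary is, as you say, bookkeeping.
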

\begin{proof}
The first two results follow from the fact that the $\bar{e}_i e_j^*$ are rank-$1$ matrices, and $\rank\left(\PT X\right)\le 2r \: \forall X \in \R^{m \times n}$. The last result can be derived from the second one as shown below:
$$
\|\PQp(UV^*)\|_F^2 \le \|\PQp \P_T\|^2 \|UV^*\|_F^2 \le 2\nu \frac{pr^2}{n}.
$$
\end{proof}

\begin{lemma}
Under the assumptions made in Theorem \ref{thm:cpcp_deterministic}, we have that
\begin{align}
\|\PQp \PO\|< 1/2,
\end{align}
with high probability, provided that $\rho< \rho_0$ and $\nu^2 p^3 \log m/n \le C$. Here, $C > 0$  and $\rho_0 \in (0,1)$ are numerical constants.
\label{lemma:PQpPo}
\end{lemma}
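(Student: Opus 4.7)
The plan is to reduce this spectral-norm bound to a matrix Bernstein concentration for a $p \times p$ random matrix. Using $\|\PQp\PO\|^2 = \|\PQp\PO\PQp\|$ together with the $\nu$-coherent orthonormal basis $\{G_1,\dots,G_p\}$ for $Q^\perp$, define the ``columns across the basis''
\begin{equation*}
v_{jk} \;:=\; \bigl((G_1)_{jk},\,(G_2)_{jk},\,\dots,\,(G_p)_{jk}\bigr)^{T} \,\in\, \R^{p},
\qquad \delta_{jk} \;:=\; \mathbf{1}[(j,k)\in\Omega].
\end{equation*}
A direct computation of the Gram form of $\PQp\PO\PQp$ in the basis $\{G_i\}$ shows that this operator is unitarily equivalent to the $p\times p$ random matrix
\begin{equation*}
M \;=\; \sum_{(j,k)\in[m]\times[n]} \delta_{jk}\, v_{jk}\, v_{jk}^{T},
\end{equation*}
so that $\|\PQp\PO\|^2 = \|M\|$. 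The orthonormality of $\{G_i\}$ gives $\sum_{jk} v_{jk}v_{jk}^{T} = I_p$, hence $\E M = \rho I_p$. It is therefore enough to show $\|M - \rho I_p\| < 1/8$ with high probability and to take $\rho_0 \le 1/8$, since then $\|M\| < 1/4$.

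To control the deviation I would apply Theorem \ref{thm:operatorbound} to the independent mean-zero Hermitian summands $Y_{jk} := (\delta_{jk}-\rho)\,v_{jk}v_{jk}^{T}$. The coherence hypothesis enters cleanly through Corollary \ref{coro:incoherence_PQp}, which gives
\begin{equation*}
\|v_{jk}\|_{2}^{2} \;=\; \|\PQp\,\be_i e_j^{*}\|_F^{2} \;\le\; \frac{\nu p}{n},
\end{equation*}
and this single bound controls both matrix-Bernstein parameters: the range $\|Y_{jk}\|\le \nu p/n$, and the variance $\|\sum_{jk} \E Y_{jk}^{2}\| \le \rho(1-\rho)\cdot \max_{jk}\|v_{jk}\|^{2}\cdot \|\sum_{jk}v_{jk}v_{jk}^{T}\| \le \rho\,\nu p/n$. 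Feeding these into matrix Bernstein gives
\begin{equation*}
\Pr\bigl[\,\|M - \rho I_p\| > t\,\bigr] \;\le\; 2p\,\exp\!\Bigl(-\tfrac{t^{2}/2}{\rho\,\nu p/n \,+\, t\,\nu p/(3n)}\Bigr).
\end{equation*}
Taking $t=1/8$ and $\rho < \rho_0$ small, the exponent is of order $n/(\nu p)$; this beats the leading factor of $p$ and yields failure probability $m^{-c}$ as soon as $n/(\nu p) \gtrsim \log m$. The stated hypothesis $\nu^{2}p^{3}\log m/n \le C$ is more than enough to ensure this, so $\|M\| \le \rho + 1/8 < 1/4$, i.e.\ $\|\PQp\PO\| < 1/2$.

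The main idea is the reduction from the $mn$-dimensional operator $\PQp\PO\PQp$ to its $p\times p$ Gram representation $M$, after which the $\nu$-coherence bound directly controls both Bernstein parameters. I do not anticipate any serious obstacle beyond careful bookkeeping of the range-versus-variance regimes of Bernstein's inequality, and I expect the argument to in fact go through under the weaker sufficient condition $\nu p\log m/n \lesssim 1$, with the stated $\nu^{2}p^{3}\log m/n\le C$ being a convenient simplification that the authors adopt for later use.
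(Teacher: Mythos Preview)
Your argument is correct and in fact sharper than the paper's. Both proofs apply the matrix Bernstein inequality (Theorem~\ref{thm:operatorbound}) to the centered operator $\PQp\PO\PQp-\rho\,\PQp$, but they differ in how this operator is represented. The paper works directly with the rank-one tensor decomposition $\sum_{ij}\delta_{ij}\,\PQp(\bar e_i e_j^*)\otimes\PQp(\bar e_i e_j^*)$ on $\R^{m\times n}$ and bounds the variance through an entrywise estimate $|(G_k)_{ij}|\le\|G_k\|\cdot\|\PQp\bar e_i e_j^*\|_*\le\nu\sqrt{p/n}$, obtaining $\sigma^2\le 2\rho\,\nu\sqrt{p^3/n}$; matrix Bernstein at level $t=\rho$ then forces the stated hypothesis $\nu^2 p^3\log m/n\le C$. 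Your reduction to the $p\times p$ Gram matrix $M=\sum_{jk}\delta_{jk}\,v_{jk}v_{jk}^T$ exploits the exact identity $\sum_{jk}v_{jk}v_{jk}^T=I_p$ (orthonormality of the $G_i$) to get the tighter variance $\sigma^2\le\rho\,\nu p/n$ together with the smaller dimensional prefactor $2p$, and this indeed yields the weaker sufficient condition $\nu p\log m/n\lesssim 1$ that you identify at the end. The paper's route is more mechanical but loses roughly a factor of $\sqrt{np}$ in the variance bound; yours requires the additional observation that $\PQp\PO\PQp$ restricted to $Q^\perp$ is unitarily equivalent to $M$, and is rewarded with a substantially stronger conclusion.
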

\begin{proof}
Please refer to Section \ref{sec:pqp_po} for a detailed proof.
\end{proof}

\subsection{Proof of Lemma \ref{con:wl} (deterministic case)}
We use the same framework from Section \ref{sub:|WL|} to bound the corresponding norms of $W^L$. We note that to bound $\|W^L\|$ in the previous case, the only key property of $Q^\perp$ that was critical to the proof was that $\Gamma^\perp = \Qp\oplus T$ is $O(\mu r/n)$-constrained. More specifically, the latter property is used in Lemma \ref{lem:normbound} and Lemma \ref{lem:infbound}.

In the deterministic case, by assumption, $\Qp$ is $\nu$-coherent, where $\nu$ is a constant. In the following lemma, we will show that $\Gamma^\perp$ is $O(\mu r/n)$-constrained as well under our assumptions. We will show that, the proof of Lemma \ref{con:wl} can be directly adopted for the deterministic case from the that with the random subspace model.
\begin{lemma}
If $\Qp$ is $\nu$-coherent, then
\begin{align}
\|\PGp \bar{e}_i e_j^*\|_F \le 4 \left(\sqrt{\frac{\nu p}{n}}+ \sqrt{\frac{2\mu r}{n}}\right).
\end{align}
In other words, if $\Qp$ is $\nu$-coherent, then $\Gamma^\perp$ is $\gamma$-constrained for $\gamma = 16 \left(\sqrt{\nu p/n}+ \sqrt{2\mu r/n}\right)^2$.
\end{lemma}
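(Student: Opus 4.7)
The plan is to decompose the projection onto $\Gamma^\perp = Q^\perp \oplus T$ via Lemma \ref{lem:length_sum}, which gives a quantitative triangle-type inequality for the projection onto a direct sum whenever the two summand subspaces are sufficiently incoherent. The key input is a bound on $\|\PQp \PT\|$, after which the individual Frobenius norms $\|\PQp \bar e_i e_j^*\|_F$ and $\|\PT \bar e_i e_j^*\|_F$ are controlled separately by the two available incoherence hypotheses.

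First, I would verify that $\|\PQp \PT\| \le 1/2$. Part 2 of Corollary \ref{coro:incoherence_PQp} gives $\|\PQp \PT\|^2 \le 2\nu pr/n$. Under the scaling regime of Theorem \ref{thm:cpcp_deterministic}, in particular $r < C_r\,(n/(\nu^2 p^2 \alpha))^{1/2}$, one has $\nu p r/n \le C_r/\sqrt{\alpha n}$, which is less than $1/8$ for all sufficiently large $n$. Hence $\|\PQp \PT\|^2 \le 1/4$, which is the hypothesis needed in Lemma \ref{lem:length_sum} with $\alpha = 1/2$.

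Next, I would apply Lemma \ref{lem:length_sum} to $X = \bar e_i e_j^*$ with $S_1 = Q^\perp$, $S_2 = T$, $S = \Gamma^\perp$ to obtain
\begin{equation*}
\|\PGp \bar e_i e_j^*\|_F^2 \;\le\; 2\left(\|\PQp \bar e_i e_j^*\|_F^2 + \|\PT \bar e_i e_j^*\|_F^2\right).
\end{equation*}
By part 1 of Corollary \ref{coro:incoherence_PQp}, $\|\PQp \bar e_i e_j^*\|_F^2 \le \nu p/n$, and by \eqref{eq:PTeiej}, $\|\PT \bar e_i e_j^*\|_F^2 \le 2\mu r/n$. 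Taking square roots and using $\sqrt{a^2 + b^2} \le a + b$ for nonnegative $a,b$ then yields
\begin{equation*}
\|\PGp \bar e_i e_j^*\|_F \;\le\; \sqrt{2}\left(\sqrt{\tfrac{\nu p}{n}} + \sqrt{\tfrac{2\mu r}{n}}\right) \;\le\; 4\left(\sqrt{\tfrac{\nu p}{n}} + \sqrt{\tfrac{2\mu r}{n}}\right),
\end{equation*}
which is the claimed bound.

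There is no serious technical obstacle; the only mildly delicate point is verifying that $\|\PQp \PT\| \le 1/2$ from the assumptions of Theorem \ref{thm:cpcp_deterministic}, which is where the $(n/(\nu^2 p^2 \alpha))^{1/2}$ term in the admissible range of $r$ is used. The second statement of the lemma, that $\Gamma^\perp$ is $\gamma$-constrained with $\gamma = 16\bigl(\sqrt{\nu p/n} + \sqrt{2\mu r/n}\bigr)^2$, then follows immediately by squaring the pointwise bound and taking the maximum over $(i,j)$, matching the definition of $\gamma$-constrained in \eqref{eqn:gamma-constrained}.
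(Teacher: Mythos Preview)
Your proposal is correct, but it takes a different route from the paper's own proof of this lemma. The paper writes $\PGp$ explicitly via the Neumann series identity
\[
\PGp = (\mathcal{I}-\PQp \PT)^{-1}\PQp \PTp + (\mathcal{I}-\PT\PQp)^{-1}\PT\P_Q,
\]
bounds each inverse by $2$ using $\|\PQp\PT\|<1/2$, and bounds each of $\|\PQp\PTp\bar e_i e_j^*\|_F$ and $\|\PT\P_Q\bar e_i e_j^*\|_F$ by $\sqrt{\nu p/n}+\sqrt{2\mu r/n}$ via a triangle inequality, arriving at the constant $4$. You instead invoke Lemma~\ref{lem:length_sum} directly with $S_1=Q^\perp$, $S_2=T$, which is exactly the device the paper uses in the \emph{random} case (see Eqn.~\eqref{eqn:nu-coherence-random}); this is shorter, uses only results already proved in the paper, and in fact yields the sharper constant $\sqrt{2}$ before you relax it to $4$. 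Both arguments rest on the same hypothesis $\|\PQp\PT\|<1/2$, and your verification of that hypothesis from the scaling in Theorem~\ref{thm:cpcp_deterministic} is fine.
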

\begin{proof}
Let us assume that $\|\PQp \PT\| < 1/2$. This is true for sufficiently large $n$ under the assumptions of Theorem \ref{thm:cpcp_deterministic}. Using the convergent Neumann series expansion, it is possible to show that
$$
\PGp \bar{e}_i e_j^* =\left((\mathcal{I}-\P_\Qp \P_T)^{-1} \P_\Qp \P_{T^\perp} + (\mathcal{I}-\P_T\P_\Qp)^{-1} \P_T \P_{Q}\right)(\bar{e}_i e_j^*),
$$
and therefore,
$$
\|\PGp \bar{e}_i e_j^*\|_F \leq \|\mathcal{I}-\P_\Qp \P_T)^{-1}\| \|\P_\Qp \P_{T^\perp}(\bar{e}_i e_j^*)\|_F + \|\mathcal{I}-\PT \PQp)^{-1}\| \|\PT \PQ (\bar{e}_i e_j^*)\|_F.
$$
From Eqn. \eqref{eq:PTeiej} and Corollary \ref{coro:incoherence_PQp}, we have
\begin{eqnarray*}
\|\P_\Qp \P_{T^\perp}(\bar{e}_i e_j^*)\|_F & \leq & \|\PQp (\bar{e}_i e_j^*)\|_F + \|\P_\Qp \PT(\bar{e}_i e_j^*)\|_F\\
 & \le & \|\P_\Qp (\bar{e}_i e_j^*)\|_F + \|\PT (\bar{e}_i e_j^*)\|_F \\
& \le & \sqrt{\frac{\nu p}{n}}+ \sqrt{\frac{2\mu r}{n}}.
\end{eqnarray*}
Similarly, we have
$$
\|\PT \PQ (\bar{e}_i e_j^*)\|_F \leq \sqrt{\frac{\nu p}{n}}+ \sqrt{\frac{2\mu r}{n}}.
$$
We also have that
$$
\left\|(\mathcal I-\P_\Qp \P_T)^{-1}\right\| = \left\|(\mathcal I-\PT \PQp)^{-1}\right\| = \left\|\sum_{k\ge 0} (\P_\Qp \P_T)^k \right\|                                             < 2.
$$
Hence, we have
\begin{align}
\|\PGp e_i e_j^*\|_F & \le 4 \left(\sqrt{\frac{\nu p}{n}}+ \sqrt{\frac{2\mu r}{n}}\right).
\label{eq:PQp+T}
\end{align}
\end{proof}
It can be easily shown that the results in Section \ref{sec:prelim_random} all hold for the deterministic case as well with the modified value for $\gamma$ derived above. Consequently, the proof of Lemma \ref{con:wl} from Section \ref{sec:wl_random_proof} can be directly adopted for the deterministic case as well.



\subsection{Proof of Lemma \ref{con:ws} (deterministic case)}
We now provide a proof of Lemma \ref{con:ws} under our deterministic subspace model. Since the basic framework of the proof is very similar to that in Section \ref{sec:ws_random_proof}, we will derive only the important steps here and refer the interested reader to Section \ref{sec:ws_random_proof} for more details.

\paragraph{Controlling $\|\POp W^S\|_{\infty}$}
Using the convergent Neumann series, we have
$$
W^S=\lambda \left(\mathcal{I}-\PGp \right)\P_\Omega \sum_{k\ge 0} (\P_\Omega \PGp \P_{\Omega})^k \left[\text{sgn}(S_0)\right].
$$
Therefore, we have
\begin{eqnarray*}
\|\POp W^S\|_{\infty} & = & \lambda \max_{(i,j)\in \Omega^c }\left|\left\langle \bar{e}_ie_j^*, \left(\mathcal{I}-\PGp \right)\P_\Omega \sum_{k\ge 0} (\P_\Omega \PGp \P_{\Omega})^k \left[\text{sgn}(S_0)\right] \right\rangle\right|\\
 & = & \lambda \max_{(i,j)\in \Omega^c }\left|\left\langle \sum_{k\ge 0} (\P_\Omega \PGp \P_{\Omega})^k \PO \PGp (\bar{e}_i e_j^*), \sgn(S_0) \right\rangle\right|\\
 & = & \lambda \max_{(i,j)\in \Omega^c }\left|\left\langle H^{(i,j)}, \sgn(S_0) \right\rangle\right|.
\end{eqnarray*}
We now bound $\|H^{(i,j)}\|_F$ as follows:
\begin{eqnarray*}
\|H^{(i,j)}\|_F & \le & \sum_{k \geq 0} \left \| (\PO \PGp \PO)^k \PO \PGp (\bar{e}_i e_j^*)  \right \|_F \\
& \leq &  \left (\sum_{k\geq 0} \|(\PO \PGp \PO)\|^k\right) \|\PO\PGp\| \| \PGp (\bar{e}_i e_j^*)\|_F \\
& \leq & \frac{\|\PO\PGp \|\|\PGp(\bar{e}_i e_j^*)\|_F}{1-\|\PO\PGp\|^2}.
\end{eqnarray*}
Conditioned on $\Omega$, using Hoeffding's inequality, we have
\begin{eqnarray*}
\Pr\left[\left|\braket{H^{(i,j)}, \sgn(S_0)}\right|>t \ | \ \Omega \right] < 2\exp \left(-\frac{2t^2}{\|H^{(i,j)}\|_F^2}\right).
\end{eqnarray*}
Applying a union bound, we get
\begin{eqnarray*}
\Pr\left[\max_{i,j}\left|\braket{H^{(i,j)},\sgn(S_0)}\right|>t \ | \ \Omega \right] & \le&  2mn \exp\left(-\frac{2t^2}{\max_{i,j} \|H^{(i,j)}\|_F^2}\right)\\
& \le & 2mn \exp\left(-\frac{2t^2\left(1-\|\PO \PGp \|^2\right)^2}{\|\PO \PGp\|^2 \max_{i,j}
\|\PGp(\bar{e}_i e_j^*)\|_F^2 }\right).
\end{eqnarray*}
Removing the conditioning on $\Omega$, we get
$$
\Pr\left[ \|\POp W^S\|_\infty > \lambda \sqrt{\frac{s\log(mn)}{2}}\frac{\|\PO\PGp\| \max_{i,j}\|\PGp(\bar{e}_i e_j^*)\|_F}{1 - \|\PO\PGp\|^2 }\right] < 2 (mn)^{1-s},
$$
where $s > 0$. Consider the event $E = \left \{\|\PO\PGp\| \leq \eta \sqrt{\rho} \right\}$. Just like under the random subspace model, it is not difficult to show that the event $E$ occurs with high probability for some fixed $\eta > 0$. Furthermore, we have already shown that $\Gamma^\perp$ is a $\gamma$-constrained subspace with $\gamma \log m = O(1/\log m)$. Setting $s = 2$, we get
$$
\Pr\left[ \|\POp W^S\|_\infty > \lambda \sqrt{\gamma \log(mn)}  \frac{\eta \sqrt{\rho}}{1-\eta^2\rho} \right ] < \frac{2}{mn} + \Pr[ E^c].
$$
Thus, we have the desired bound.


%
%

\paragraph{Controlling $\|W^S\|$}
The proof is identical to the one in Section \ref{sec:WS}.

\subsection{Proof of Lemma \ref{con:wq} (deterministic case)}
We now prove Lemma \ref{con:wq} under our deterministic subspace model. Once again, the basic structure of the proof is very similar to the one used in Section \ref{sec:wq_random_proof}. So, we only provide the relevant bounds here and refer the interested reader to Section \ref{sec:wq_random_proof} for the detailed steps involved.

\paragraph{Controlling $\|W^Q\|$}
The proof framework is the same as the one in Section \ref{sec:wq_spectral}. We note that the key step is to bound $\|\PQp(UV^*)\|_F$ and $\left \|\sum_{k \geq 0} (\PQp \PP \PQp)^k \right \|$, where we recall that $\Pi = \Omega \oplus T$. From Corollary \ref{coro:incoherence_PQp}, we already know that
$$
\|\PQp(UV^*)\|_F \leq \sqrt{\frac{2 \nu p r^2}{n}}.
$$
For the other quantity, we have that
$$
\left \|\sum_{k \geq 0} (\PQp \PP \PQp)^k \right \| \leq \frac{1}{1 - \|\PQp \PP\|^2}.
$$
By Lemma \ref{lem:norm_sum}, we have
$$
\|\PQp \PP\|^2 \le \frac{\| \PQp \P_\Omega\|^2+\|\PQp\P_T \|^2}{1-\|\P_\Omega \P_T\|}.
$$
From Lemma \ref{lem:incoh_OT}, we know that $\|\PO \PT\| \leq \sqrt{\rho + \epsilon}$ with high probability, provided that
$$
(1-\rho) \geq C_0 \cdot \epsilon^{-2} \frac{\mu r \log m}{n}.
$$
Suppose that the above condition holds with $\epsilon = \rho$, and assume that
$$
\frac{2 \nu p r}{n} < \rho.
$$
We note that both the assumptions above can be true for sufficiently large $m$ and $n$ under the assumptions of Theorem \ref{thm:cpcp_deterministic}. Under these assumptions, along with Lemma \ref{lemma:PQpPo}, we have
$$
\|\PQp \PP\|^2 \le \frac{1/4+\rho}{1-\sqrt{2\rho}},
$$
with high probability. Thus, we have that $\|\PQp \PP\|^2 \le 1/2$ with high probability, provided that $\rho$ is sufficiently small. Putting all these bounds together, we get
$$
\left\|W^Q\right\| \leq 2\sqrt{\frac{2 \nu p r^2}{n}},
$$
with high probability. Under the assumptions of Theorem \ref{thm:cpcp_deterministic}, the right hand side can be made arbitrarily small, and hence, we have the desired result.


\paragraph{Controlling $\|\POp W^Q\|_\infty$} Once again, the proof framework is identical to that used in Section \ref{sec:wq_infty}. The key step here is to bound $\max_{(i,j) \in \Omega^c} \, \|\PQp \PPp \bar{e}_i e_j^*\|_F$. We first use the Neumann series to rewrite $\PP \bar{e}_i e_j^*$ as
$$
\PP \bar{e}_i e_j^* = \left((\mathcal{I}-\P_\Omega \P_T)^{-1} \P_\Omega \P_{T^\perp} + (\mathcal{I}-\P_T\P_\Omega)^{-1} \P_T \P_{\Omega^\perp}\right)(\bar{e}_i e_j^*).
$$
Now, for any $(i,j) \in [m] \times [n]$, we have
\begin{eqnarray*}
\|\PO \PTp \bar{e}_ie_j^*\|_F = \|\PO \PT \bar{e}_i e_j^*\|_F & \leq & \sqrt{\frac{2\mu r}{n}},\\
\|\PT \POp \bar{e}_ie_j^*\|_F = \| \PT \bar{e}_i e_j^*\|_F & \leq & \sqrt{\frac{2\mu r}{n}}.
\end{eqnarray*}
Furthermore, by the assumption we used earlier (to bound $\|W^Q\|$), we have that $\|\PO\PT\| < \sqrt{2\rho}$ with high probability. Therefore, we have
\begin{eqnarray*}
\left \|(\mathcal{I}-\P_\Omega \P_T)^{-1}\right \| & = & \left \|(\mathcal{I}-\P_T \PO)^{-1}\right \| \\
 & = & \left\|\sum_{k\ge 0} (\P_\Omega \P_T)^k \right\| \\
 & \leq & \frac{1}{1 - \sqrt{2\rho}}\\
 & \leq & 2
\end{eqnarray*}
with high probability, provided that $\rho\le 1/8$. Thus, we get
$$
\|\PP \bar{e}_i e_j^*\|_F \leq 4\sqrt{\frac{2\mu r}{n}},
$$
with high probability. Consequently, for any $(i,j) \in \Omega^c$, we have
\begin{eqnarray*}
\|\PQp \PPp \bar{e}_i e_j^*\|_F & \leq  & \|\PQp \bar{e}_i e_j^*\|_F + \|\PQp \PP \bar{e}_i e_j^*\|_F \\
& \leq & \sqrt{\frac{\nu p}{n}} +  4\sqrt{\frac{2\mu r}{n}},
\end{eqnarray*}
with high probability.

Proceeding along the same lines as in Section \ref{sec:wq_infty}, we have that
$$
|\langle W^Q, \bar{e}_i e_j^* \rangle | \leq 2\sqrt{\frac{2 \nu p r^2}{n}} \left( \sqrt{\frac{\nu p}{n}} +  4\sqrt{\frac{2\mu r}{n}}\right ),
$$
with high probability, for any $(i,j) \in \Omega^c$. Therefore, we have that
$$
\|\POp W^Q\|_\infty \leq 2\sqrt{\frac{2 \nu p r^2}{n}} \left( \sqrt{\frac{\nu p}{n}} +  4\sqrt{\frac{2\mu r}{n}}\right ),
$$
with high probability. Since $\lambda = m^{-1/2}$, under the assumptions made in Theorem \ref{thm:cpcp_deterministic}, the right hand side can be made smaller than $\lambda/8$, provided that $n$ is sufficiently large.

\subsection{Proof of Lemma \ref{lemma:PQpPo}}
\label{sec:pqp_po}
Consider the linear operator
$$
\mathcal A = \PQp \PO \PQp - \rho \PQp.
$$
It can be easily shown that
$$
\E \left[\mathcal A\right] = 0.
$$
First, we derive a bound for the spectral norm of $\mathcal A$. Let $\delta_{ij}$ be a sequence of independent Bernoulli random variables such that
$$
\delta_{ij} = \left \{
\begin{array}{ll}
1, & \mathrm{if} \: (i,j) \in \Omega, \\
0, & \mathrm{otherwise.}
\end{array}
\right .
$$
Then, we can rewrite $\mathcal A$ as
$$
\mathcal A = \sum_{ij} \mathcal A_{ij},
$$
where
$$
\mathcal A_{ij} = \delta_{ij} \PQp(\bar{e}_i e_j^*) \otimes \PQp(\bar{e}_i e_j^*) - \frac{\rho}{mn}\PQp,
$$
and $\otimes$ denotes the outer or tensor product between matrices.
Then, we have that
\begin{align}
\|\mathcal A_{ij}\| & \le \| \PQp(\bar{e}_i e_j^*) \otimes \PQp(\bar{e}_i e_j^*) \| + \frac{\rho}{mn}\\
                    \label{eq:tensor}
                    & \le \| \PQp(\bar{e}_i e_j^*)\|_F^2 +\frac{\rho}{mn}\\
                    & \le \frac{\nu p}{n} + +\frac{\rho}{mn} \\
                    & \triangleq S,
\end{align}
where in Eqn. (\ref{eq:tensor}) we used the fact that $\|A \otimes B\| \le \|A\|_F \|B\|_F$.\par

We now bound the variance terms.
\begin{align}
\sigma^2 & =  \left\| \sum_{i,j} \E\left[\mathcal A_{ij}^2 \right] \right\|\\
        \label{eq:sigmaA}
         & =  \left\| \sum_{i,j} \left (\rho \left[\PQp(\bar{e}_i e_j^*) \otimes \PQp(\bar{e}_i e_j^*)\right]^2-\frac{2\rho^2 \PQp(\bar{e}_i e_j^*) \otimes \PQp(\bar{e}_i e_j^*)}{mn}+\frac{\rho^2}{m^2 n^2} \PQp\right ) \right\|
\end{align}
We let $\POij$ denote the orthogonal projector onto the subspace $\mathrm{span}({\bar{e}_i e_j^*})$. Clearly, we have
$$
\PO = \sum_{(i,j)\in \Omega} \POij.
$$
Furthermore, we note that
$$
\PQp(\bar{e}_i e_j^*) \otimes \PQp(\bar{e}_i e_j^*) = \PQp \POij \PQp.
$$
Thus, we get
\begin{eqnarray*}
\sum_{i,j} \frac{2\rho^2 \PQp(\bar{e}_i e_j^*) \otimes \PQp(\bar{e}_i e_j^*)}{mn}
& = & \frac{2\rho^2 \PQp \left( \sum_{i,j} \POij\right) \PQp}{mn}\\
& = & \frac{2\rho^2 \PQp}{mn}.
\end{eqnarray*}
Similarly, we have
\begin{eqnarray*}
\sum_{i,j} \rho \left[\PQp(\bar{e}_i e_j^*) \otimes \PQp(\bar{e}_i e_j^*)\right]^2
& = & \rho \sum_{i,j} \PQp \POij \PQp \POij \PQp\\
& = & \rho \PQp \left( \sum_{i,j} \POij \PQp \POij \right) \PQp.
\end{eqnarray*}
Let $X\in \R^{m\times n}$ be any matrix satisfying $\|X\|_F = 1$. Then,
\begin{eqnarray*}
\left\| \sum_{i,j} \POij \PQp \POij X \right\|_F & = & \left\| \sum_{i,j} \POij \left(\sum_{k=1}^p \braket{G_k, \POij X} G_k\right ) \right\|_F\\
                                                & = & \left\| \sum_{i,j} \POij \left (\sum_{k=1}^p \braket{\POij G_k, X} G_k\right) \right\|_F,
\end{eqnarray*}
where we recall that the $G_i$'s constitute an orthonormal basis for $Q^\perp$ satisfying $\max_i \|G_i\|^2 < \nu/n$.
We now bound $\|\POij G_k \|_F$ as follows:
\begin{eqnarray*}
\|\POij G_k \|_F & = & |\braket{\bar{e}_i e_j^*, \PQp G_k}| \\
& = & |\braket{\PQp \bar{e}_i e_j^*, G_k}| \\
& \le & \|G_k\| \|\PQp \bar{e}_i e_j^*\|_*\\
& \le & \|G_k\| \sqrt{n}\,  \|\PQp \bar{e}_i e_j^*\|_F\\
&\le & \sqrt{\frac{\nu}{n}}\: \sqrt{n}\: \sqrt{\frac{\nu p}{n}}\\
                        & = & \nu \sqrt{\frac{p}{n}}.
\end{eqnarray*}
Combining the above bound with H\"{o}lder's inequality, we get
\begin{align}
                                                \label{eq:expandPOij}
\left\| \sum_{i,j} \POij \PQp \POij(X) \right\|_F & \le \left\| \sum_{i,j,k } \braket{\POij G_k, X} \POij G_k \right\|_F\\
& \le \nu \sqrt{\frac{p}{n}}\left\| \left(\sum_{i,j} \POij \right)\left(\sum_{k=1}^p G_k\right)\right\|_F\\
                                                & \le \nu  \sqrt{\frac{p^3}{n}}.
\end{align}

Therefore, we have that the variance in Eqn. \eqref{eq:sigmaA} can be bounded as
\begin{eqnarray*}
\sigma^2 & \le &\rho  \nu  \sqrt{\frac{p^3}{n}}+ \frac{2\rho^2}{mn}+ \frac{\rho^2}{mn} \\
         & \le &2\rho  \nu  \sqrt{\frac{p^3}{n}}
\end{eqnarray*}
Applying the matrix Bernstein inequality (Theorem \ref{thm:operatorbound}), we get
\begin{eqnarray*}
\Pr\left[\|\mathcal A\| > t \right] &\le&  2m^2 \exp\left(-\frac{t^2}{2\sigma^2+ 3St}\right)\\
                                      &\le& 2m^2 \exp\left(-\frac{t^2}{C_1\rho \nu \sqrt{p^3/m}+ C_2 \nu pt/m}\right).
\end{eqnarray*}
Let us set $t=\rho$. Now, suppose that
\begin{align}
\frac{\nu^2 p^3 \log m}{n} \le C_3 \rho^2,
\end{align}
where $C_3 > 0$ is a numerical constant. Then, under the conditions of Theorem \ref{thm:cpcp_deterministic}, $\|\mathcal A\|$ is bounded from above by $\rho$ with high probability. Since $\mathcal A = \PQp \PO \PQp - \rho \PQp$, this implies that
\begin{align}
\|\PQp \PO \PQp \|\le 2\rho,
\end{align}
with high probability. It follows that
\begin{align}
\|\PQp \PO\| < 1/2,
\end{align}
with high probability, provided that $\rho$ is sufficiently small and $\nu^2 p^3 \log m/ n \le C$, where $C$ is a numerical constant.

{
\bibliographystyle{IEEEtran}
\bibliography{cpcp_ls_golfing}
}

\appendix

\end{document}